\def \ifempty#1{\def\temp{#1} \ifx\temp\empty }
\renewcommand{\P}{\mathbb{P}}
\newcommand{\E}{\mathbb{E}}
\newcommand{\str}[1]{\textsc{#1}}
\newcommand{\var}[1]{\textit{#1}}
\newcommand{\op}[1]{\textsl{#1}}
\newcommand{\msg}[2]{\ensuremath{\ifempty{#2} [\str{#1}] \else [\str{#1}, {#2}] \fi}}
\newcommand{\false}{\textsc{false}\xspace}
\newcommand{\true}{\textsc{true}\xspace}
\newcommand{\CL}{\ensuremath{\mathcal{L}}\xspace}
\newcommand{\CN}{\ensuremath{\mathcal{N}}\xspace}
\newcommand{\CS}{\ensuremath{\mathcal{S}}\xspace}
\newcommand{\CT}{\ensuremath{\mathcal{T}}\xspace}
\def \ifempty#1{\def\temp{#1} \ifx\temp\empty }
\newcommand{\Var}{\mathrm{Var}}
\newcommand{\bigO}{\smash{\ensuremath{O}}}
\newcommand{\p}{\ensuremath{\!+\!}}
\newcommand{\m}{\ensuremath{\!-\!}}
\newcommand{\e}{\ensuremath{\!=\!}}
\newtheorem{lemma}{Lemma}[section]
\newtheorem{corollary}[lemma]{Corollary}
\newtheorem{definition}[lemma]{Definition}
\newtheorem{remark}[lemma]{Remark}
\begin{document}
\title{An Analysis of Avalanche Consensus\thanks{Ignacio Amores-Sesar has been supported by the Swiss National Science Foundation (SNSF) under grant agreement Nr\@.~200021\_188443 (Advanced Consensus Protocols). Philipp Schneider has been supported by a grant from Avalanche, Inc.\ to the University of Bern.}}

\author[1,2]{Ignacio Amores-Sesar}
\author[1,3]{Christian Cachin}
\author[1,4]{Philipp Schneider}

\affil[1]{University of Bern} 
\affil[2]{\href{ignacio.amores@unibe.ch}{ignacio.amores@unibe.ch}} 
\affil[3]{\href{christian.cachin@unibe.ch}{christian.cachin@unibe.ch}} 
\affil[4]{\href{philipp.schneider2@unibe.ch}{philipp.schneider2@unibe.ch}}
 
\date{}
\maketitle              
\begin{abstract}
A family of leaderless, decentralized consensus protocols, called Snow consensus was introduced in a recent whitepaper by Yin et al.
These protocols address limitations of existing consensus methods, such as those using proof-of-work or quorums, by utilizing randomization and maintaining some level of resilience against Byzantine participants. Crucially, Snow consensus underpins the Avalanche blockchain, which provides a popular cryptocurrency and a platform for running smart contracts. 
	  
Snow consensus algorithms are built on a natural, randomized routine, whereby participants continuously sample subsets of others and adopt an observed majority value until consensus is achieved. Additionally, Snow consensus defines conditions based on participants' local views and security parameters. These conditions indicate when a party can confidently finalize its local value, knowing it will be adopted by honest participants.
	  
Although Snow consensus algorithms can be formulated concisely, there is a complex interaction between randomization, adversarial influence, and security parameters, which requires a formal analysis of their security and liveness. 
Snow protocols form the foundation for Avalanche-type blockchains, and this work aims to increase our understanding of such protocols by providing insights into their liveness and safety characteristics. 	  
First, we analyze these Snow protocols in terms of latency and security. Second, we expose a design issue where the trade-off between these two is unfavorable. Third, we propose a modification of the original protocol where this trade-off is much more favorable.
\end{abstract}

\section{Introduction}

Establishing consensus is one of the most fundamental tasks in distributed computing, for instance to implement atomic broadcast, to synchronize processes, or to elect leaders. Distributed blockchains and in particular cryptocurrencies rely on consensus to ensure proper operation, and therefore trust in these systems, which has put increased focus on new kinds of consensus algorithms. 
In the consensus problem we consider $n$ parties of which some are potentially faulty. Every party has some input value, which we often refer to as opinion in this article. We say that a protocol that coordinates communication and local computations of all parties {solves consensus} when everyone agrees on a single opinion, which was also the input of at least one party.

The consensus problem becomes challenging in the presence of Byzantine faults, i.e., parties that can deviate from the protocol. In particular, reaching consensus is impossible deterministically in the asynchronous setting with even one fault~\cite{Fischer1985} and in the synchronous setting with one third or more Byzantine faults~\cite{Fischer1986}, except if one would use cryptographic signatures. 
Real protocols that solve consensus in the context of blockchains have to navigate around these impossibility results, while also optimizing other criteria; chief among them are the latency until a transaction is finalized, the throughput of transactions, resource consumption, scalability, and resiliency against adversarial parties, which often necessitates trade-offs~\cite{DBLP:journals/sigact/GilbertL02,Buterin2017}.
 
One particularly simple consensus design sacrifices determinism and works along the following principle. Each party continually samples random subsets of other parties and adjusts its opinion based on the observed sample according to certain rules. 
There has been extensive research that explores such mechanisms, see the recent survey~\cite{Becchetti2020}. It has been shown for a subset of protocols of this type that they can be expected to converge very rapidly to a state of stable consensus (with a limited adversary, a bounded number of opinions, and in a synchronous network) \cite{Doerr2011,Becchetti2017,Ghaffari2018,Becchetti2016,Elsaesser2017}. Additionally, such mechanisms have the advantage that parties need only few such samplings to be relatively certain what the consensus opinion will be, resulting in near-linear message complexity.

A whitepaper \cite{Rocket2019} released in 2019 exploits this design and introduces the Avalanche protocol, which forms the basis of the Avalanche blockchain infrastructure and its services. Avalanche gained popularity and reach due to competitive  characteristics in the performance spectrum of latency, throughput, scalability, and resource consumption \cite{Gramoli2023}.\footnote{For instance, through its AVAX token, Avalanche ranks in the top 10 among the ``Layer-1'' blockchains by market capitalization (as of December 2023).}
In particular, \cite{Rocket2019} introduces the \emph{Snow family} of binary consensus protocols that build on this principle of random samplings, which can be adapted to maintain consistency of the corresponding Avalanche blockchain network.

The simplest protocol of this family, known as \emph{Slush}, works as follows. Each party continuously samples the opinion of $k\geq 2$ others; if such a sampling contains an opinion different from its own at least $\alpha$ times (for some $\alpha > k/2$) then the party adopts this opinion as its own.
Slush can be considered as a self-organizing mechanism that it is likely to converge to a stable consensus relatively quickly and remain there, even in the presence of a limited number of parties that deviate from the protocol.
The whitepaper also introduces the \emph{Snowflake} and \emph{Snowball protocols}, which add mechanisms to finalize an opinion of a node based on past queries which reflects how stable the observed majority is. The level of confidence in a majority can be controlled with a security parameter $\beta$.

The complex interaction between performance characteristics, security level, and the involved parameters $k$, $\alpha$, and $\beta$ makes the analysis of Snow-type consensus protocols challenging. The whitepaper \cite{Rocket2019} relies primarily on empirical observations and informal explanations to motivate its design choices.  
Currently, a formal understanding of the performance and security characteristics of Snow protocols is lacking.

\paragraph{Overview and Contributions.}

We focus on bridging the gap in understanding of Snow consensus protocols, which we consider as a necessary first step for an encompassing analysis of the complete Avalanche blockchain protocol, which builds upon Snow consensus (although the Avalanche protocol itself is beyond the scope of this work). 
First we explore the performance of Snow protocols, beginning with the self-organizing, binary consensus mechanism of Slush. 
In Section \ref{sec:slush_dynamics}, we express the progress toward a stable consensus per round depending on the distribution of opinions and parameters $k\geq 2$ and $\alpha > \tfrac{k}{2}$, which gives insights into the evolution of the system (cf.\ Figure \ref{fig:delta} for a visualization).

In Section \ref{sec:slush_bounds}, we show that coming close to a consensus already requires a minimum of \smash{$\Omega\big(\tfrac{\log n}{\log k}\big)$} rounds, even in the absence of adversarial influence (see Theorem \ref{thm:slush_lower_bound}, a simpler, weaker form is given in Corollary~\ref{cor:slush_lower_bound}). Furthermore, we generalize upper bounds from the so-called Median and 3-Majority consensus protocols \cite{Doerr2011,Becchetti2017} in the Gossip model (discussed in Section \ref{sec:slush_bounds}) and establish that Slush reaches a stable consensus in $\bigO(\log n)$ rounds (see Theorem \ref{thm:slush_upper_bound}), which holds even when an adversary can influence up to $\bigO\big(\!\sqrt{n}\big)$ parties.

We interpret these results in the following way. Even assuming that the performance of Slush matches the lower bound, increasing the parameter $k$ yields only a limited speed-up of $\bigO(\log k)$ (usually $k \ll n$). Furthermore, since the message complexity per round is $\Theta(kn)$ Slush has the advantage  of near-linear message complexity for small $k$, which is negated if $k$ increases significantly (e.g., sampling sizes close to $n$). We conclude that values higher than $k=20$ as suggested originally \cite{Rocket2019} have diminishing benefit and an unfavorable trade-off in terms of message complexity. In Section \ref{sec:snowball_dynamics} we show that the lower bound  \smash{$\Omega\big(\tfrac{\log n}{\log k}\big)$} rounds extends to Snowflake and Snowball.

In Section  \ref{sec:security_ball_flake} we analyze the security mechanisms of the Snow protocols, that deal with the possibility of failing to achieve consensus due to the randomized nature of the algorithm or adversarial influence.
The protocol provides a security parameter $\beta$ to control the probability of such a failure. This has an unfavorable trade-off as we show in Section \ref{sec:security_ball_flake}, specifically, a negligible probability of failure (w.r.t.\ $\beta$) and a latency to finalize a value that is at most polynomial in $\beta$ are mutually exclusive (see Theorem \ref{thm:impossibility}).
In Section \ref{sec:blizzard} we propose a solution for this issue by introducing an alternative protocol. It replaces the security mechanism of Snowball with a simple mechanism that achieves security with all but negligible probability (w.r.t.\ $\beta$) in $\bigO(\beta + \log n)$ rounds (see Theorem \ref{thm:security_blizzard} and Corollary \ref{cor:security_blizzard}).

\section{Preliminaries}
\label{sec:prelim}  

Before moving to the technical parts of the analysis, we introduce the definitions and modeling assumptions that we use throughout the paper. In part, this work aims to be a supplementary of the whitepaper~\cite{Rocket2019} in the style of other theoretical works that study randomized, self-stabilizing consensus protocols  \cite{Doerr2011,Becchetti2017,Ghaffari2018,Becchetti2016,Elsaesser2017}. Therefore our nomenclature, definitions and modeling assumptions are a composition of those.

\subsection{Model}
\label{sec:model}

\paragraph{Communication.}
\label{sec:adversary}
We consider a fully connected network of
$n$ parties with identifiers $\mathcal N = \{1, \dots, n\}$. Parties communicate by sending point to point messages. For the message transfer we assume the synchronous message passing model, where  there is a fixed period of time until any given message is delivered. In fact, the synchronous setting allows to assume that time is slotted into discrete rounds and all messages sent in the previous round have arrived by the next round. This also allows us to use the number of rounds as a proxy for algorithm running time.

\paragraph{Consensus.}

In the general problem setup there are $m$ opinions in the network and each party has an initial opinion, however note that in the context of Snow protocols we typically have $m=2$. Snow protocols can be seen as self-stabilizing protocols and we define a stable state where almost all parties have the same opinion and the likelihood to revert from this state is low (see Section \ref{sec:blizzard} for some properties of this stable state).

\begin{definition}[State of Stable Consensus]\label{def:consensus_state}
	The system is in a state of stable consensus if at least $n \m o(n)$ parties have the same opinion.
\end{definition}

Randomization or the presence of an adversary implies that at any point in time there is a non-zero chance that a stable consensus is reverted. Therefore, in the context of blockchain applications, parties need to eventually finalize or \textit{decide} on an opinion. In that sense we define the consensus problem as follows.

\begin{definition}[Consensus Problem]\label{def:consensus}
	A protocol solves this problem if the following conditions are satisfied.
	\begin{description}
		\item[Termination:] Every party eventually \emph{decides} on some opinion.
		\item[Validity:] If all parties propose the same value, then all parties decide on that value.
		\item[Integrity:] No party \emph{decides} twice.
		\item[Agreement:] No two parties \emph{decide} differently.
	\end{description}
\end{definition}

We consider this consensus problem under the influence  of the following adversary. 

\begin{definition}[$F$-Bounded Adversary]
	\label{def:adversary}
	An $F$-bounded adversary can set the opinion of up to $F$ (undecided) parties at the beginning of each round to one of the $m$ opinions.
\end{definition}

\paragraph{Randomization, Security and Latency.}
The consensus protocols we consider in this work are randomized, and we work with some standard definitions that we summarize in Appendix \ref{sec:probabilistic_concepts}. In particular we also consider an $F$-bounded adversary, which introduces a non-zero chance to delay a consensus for any fixed period of time. Hence, we can only hope to make the probability of failure of such a protocol negligibly small, while at the same time maintaining a reasonable latency (i.e., number of rounds until consensus). To connect the notions of failure probability and latency we make the following provisions.

Let $E$ be an event or condition during the execution of some protocol $\mathcal P$, e.g., $E$ describes the event that $\mathcal P$ successfully establishes consensus, see Definition \ref{def:consensus}.
The protocols we are investigating typically depend on so called security parameters. Increasing the security parameter increases the likelihood of success but typically has detrimental effects on the running time. 
To quantify this, we formally define further below what it means that $E$ holds with all but negligible probability with respect to protocol $\mathcal P$.  Roughly speaking, as we increase the running time of a protocol measured in the security parameter $\lambda$, the probability that some condition $E$ (e.g., consensus) has not been established yet, decreases super-polynomially in $\lambda$.

\begin{definition}[Negligible Probability, Security Parameter]
	\label{def:negligible}	
	An event $E$ holds with all but negligible probability or equivalently its complement $\overline E$ has negligible probability in a protocol $\mathcal P$, if the following holds. There exists a polynomial $\rho$ such that for any polynomial $\pi$ there exists a value $\lambda_0 \geq 0$, such that for all $\lambda \geq \lambda_0$ the following holds. If $\,\mathcal P$ is executed for at least $\rho(\lambda)$ rounds it holds that $\P(\overline E) \leq 1/\pi(\lambda)$, i.e., $\P(E) > 1-1/\pi(\lambda)$. 	
	We call the value $\lambda \geq \lambda_0$ a security parameter.
\end{definition}

In this article, we often talk about randomly sampling a set of $k$ parties, which means that we take a uniform random sample of size $k$ from the set of all $n$ parties \textit{with repetition}, which is modeled by the binomial distribution. Note that sampling \textit{without repetition}, represented by the hyper-geometric distribution, approaches the binomial distribution as  the ratio $n/k$ becomes larger. Therefore, our results approximate the case without repetition well for $n\gg k$ (the usual case). The assumption of uniform sampling with repetition makes our analysis much more feasible, in particular it removes undesirable marginal cases (e.g.\ $k$-samples for $n < k$) allowing use to use continuous functions to describe certain aspects of the system.

\subsection{The Snow Family}
\label{sec:protocols}

The Snow family consists of three consensus protocols based on random sampling instead of the traditional quorum intersection. This approach allows the snow family to reduce the message complexity by sending messages to a constant number of parties each round. 
Here, we provide an informal description of each protocol, whereas a more detailed pseudocode can be found in Appendix~\ref{apx:slush_code}.

\paragraph{Slush.}
\label{sec:slush}

 The first protocol in the Snow family is Slush (Algorithm~\ref{algo:slush}). An honest party $j$ runs the Slush protocol in local rounds, however in the general protocol every party may have a different round value. Party $j$ starts the round by randomly sampling $k$ parties for their opinion, with $k$ a constant value. 
 If at least $\alpha$ parties respond with the same opinion $b$, party $j$ adopts it as its own opinion and starts a new round. The value $\alpha$ must be strictly larger than $\frac{k}{2}$. If there is no $\alpha$-majority, party $j$ keeps its opinion and starts a new round. The Slush algorithm has a hard-coded number of rounds defining the protocol's end. When party $j$ reaches the maximum round, it $\op{decides}(b)$ its candidate value $b$. Note that in our analysis in Section \ref{sec:slush_dynamics} and \ref{sec:slush_bounds} we analyze the time until Slush reaches a state of stable consensus (Definition \ref{def:consensus_state}) and assume that this hard coded maximum round does not exist or is sufficiently large to not play a role.

\paragraph{Snowflake.}
\label{sec:snowflake}
The main limitation of the Slush algorithm is the hard-coded number of rounds. The number of rounds needs to be relatively high to guarantee consensus even in the worst case (which the case when the network starts in a bivalent state: half of the parties $\op{proposes}(0)$ and the other half $\op{proposes}(1)$). Snowflake aims to address this issue by modifying the termination condition of Slush.

In the Snowflake protocol (Algorithm~\ref{algo:snowflake}), party $j$ counts the number of consecutive queries with an $\alpha$-majority for opinion $b$. If $j$ observes $\beta$ consecutive rounds with $\alpha$-majority for $b$, party $j$ $\op{decides}(b)$. The intuition behind this termination rule is the following: the probability of obtaining $\beta$ consecutive $\alpha$-majorities for opinion $b$ is small when expressed as a function of $\beta$, unless almost every party has $b$ as candidate value in the network. This termination rule allows for an adaptive running time based on the state of the network. Looking ahead, we show how this termination rule forces the Snowflake protocol to choose between a high confidence in the agreement property and polynomial running time (Theorem~\ref{thm:impossibility}). 
 
\paragraph{Snowball.}
\label{sec:snowball}

The Snowball protocol (Algorithm~\ref{algo:snowball}) introduces another modification how parties change their opinion. 
In Snowflake, the change of opinion is only based on the outcome of the last query, i.e., it is stateless. By contrast, in Snowball, party $j$ considers the past queries in order to decide whether to change its opinion or not. Party $j$ changes its opinion value from $b$ to $b'$ when the number of $\alpha$-majorities for $b'$ surpasses the number of $\alpha$-majorities for value $b$ since the beginning of the execution. The idea behind considering the whole history of the protocol is to make it less likely for a party with opinion $b$ to switch to $b'$ when the prevalent opinion in the network is $b$, thus possibly reducing the number of rounds until termination. Looking ahead, we show that this routine does not reduce the number of rounds until termination in expectation (Lemma~\ref{lemma:comparison}).

\paragraph{Avalanche.}
The Snow consensus protocols serve as foundation for the Avalanche consensus~\cite{Rocket2019,AmoresSesar2022}. Avalanche employs a classification system to group transactions into conflicting sets and subsequently applies a tailored adaptation of the Snowball algorithm to each of these conflict sets. To optimize communication efficiency, Avalanche establishes connections between distinct instances of the Snowball consensus, enabling the reuse of messages and, consequently, reducing message complexity. However, due to these interdependencies, Avalanche is unable to inherit the liveness properties from the Snow family~\cite{AmoresSesar2022}. Nevertheless, it is noteworthy that the security of Avalanche remains equivalent to that of the Snowball protocol~\cite{Rocket2019}.

\subsection{Related Work}
\label{sec:related_work}

The Avalanche protocol was introduced fairly recently thus research into this protocol is limited. The whitepaper \cite{Rocket2019} gives an iterative presentation of its algorithms and concepts, in particular the Snow protocols for binary consensus on which it then builds its Avalanche protocol in the UTXO model. This is supplemented with explanations about the design decisions and empiric data that highlights the protocols' performance in terms of latency, throughput and Byzantine resilience. Subsequently, the article \cite{AmoresSesar2022} provided a formal description of the Avalanche protocol. They also showcase a vulnerability (that has since been addressed by subsequent versions of Avalanche) that is specific to the Avalanche protocol, where a single malicious party can delay acceptance of a transaction and proposes a modification that prohibits this attack. Part of the reason that an encompassing analysis of Avalanche is outside the scope for this work, is that it is currently still evolving, for instance recently transitioning from a directed acyclic graph (DAG) to a chain, providing a total order for transactions as opposed to a partial order.

Self-stabilizing consensus protocols based on random samplings have been investigated much earlier in message passing models, motivated by  the so called GOSSIP model.\footnote{In the GOSSIP model, nodes can contact a few random neighbors in a graph.} A particular strain of such protocols that attained some focus in the past are the so called 3-Majority, the 2-choices and the Median protocols \cite{Becchetti2020}. In the 3-Majority protocol parties sample 3 random others and adopt the majority opinion using the first sampled parties' opinion in case of a tie. The 2-Choices protocol works similar, but only 2 parties are sampled with the third being the party itself which also provides the default opinion. In the Median protocol a party samples 2 others and adopts the median value among theirs and their own (which requires a total order on the opinions).  
There has been a plethora of work on the analysis of these and similar light weight protocols based on random sampling, a selection of those are \cite{Doerr2011,Cruise2014,Cooper2014,Becchetti2017,Becchetti2016,Ghaffari2018}, see also the survey \cite{Becchetti2020} for an overview. These works usually focus on analyzing the time to consensus with respect to the initial number of opinions in the network, sometimes also on the required initial bias of the network in case a consensus on the initial majority is desired (plurality consensus).

In contrast to this work, these articles focus on samplings of size at most 3, analysis of the dynamics for the whole spectrum of $k,\alpha$ have, to the best of our knowledge, so far not been attempted.\footnote{Metrics other than the round complexity have been considered for the binary $k$-Majority protocol which relates to Slush \cite{Cruciani2021}.}
Crucially, in case the number of opinions is constant, all these works arrive at $\bigO(\log n)$ rounds until a state of stable consensus is attained with high probability. Interestingly, in the binary case the 3-Majority, the 2-choices and the Median protocols can be all be related to special cases of Slush. Besides analyzing security aspects of snow protocols, one of the main contributions of this work is to show how the dynamics of such sampling based protocols behave in the size  $k$  of those samplings.

\section{Dynamics of Slush}
\label{sec:slush_dynamics}

The whitepaper \cite{Rocket2019} observes that the Slush consensus protocol converges to a stable consensus very fast in practice. Concrete claims are made pertaining to the time to consensus, but no conclusive proof is given.
In this section we analyze the the rate of convergence of Slush towards a consensus, which will later also inform the rate of convergence of Snowflake and Snowflake (Section \ref{sec:snowball_dynamics}). 

\subsection{Expected Rate of Progress of Slush}
\label{sec:sprogress}
We start by investigating the expected rate of progress of Slush, which characterizes the dynamics of Slush and how it depends on the parameters $\alpha$ and $k$. We will later show that other Snow protocols (Snowflake, Snowball) behave similar in terms of the required number of rounds to consensus. 

We make the following definitions and assumptions. First we assume that all parties have an initial opinion 0 or 1, so no party has initially the opinion $\bot$ (the case where there exist parties with opinion $\bot$ can be disregarded for the lower bound and is handled separately for the upper bound). Recall that the set of parties $\mathcal N$ is numbered from $1$ to $n$ and assume rounds are numbered $0,1,2,\dots$.

\begin{itemize}	
	\item Let $X_{ij} \in \{0,1\}$ be the current opinion of party $j$ after round $i$ of the Slush protocol was executed ($X_{0,j}$ describes the initial opinion of party $j$). 
	
	\item Let $Y_{ij}$ be the number of replies with opinion 1 that party $j$ obtains in its sample of $k$ parties in round $i$. Note that $Y_{ij} \sim \text{Bin}(k,p_i)$.
	
	\item Let the state of the network be $S_i := \sum_{j=1}^{n} X_{ij}$, which describes the total number of parties whose current opinion is 1. 
	
	\item Let $p_i := S_i/n$ be the relative share of parties with opinion 1 in round $i$, which corresponds the probability that a sampled party has state 1.
	
	\item For $i \geq 1$, we define as $\Delta_i := S_i-S_{i-1}$, i.e., the absolute progress to 1-consensus (or 0-consensus for negative values). 
	
	\item Let $\delta_i := \E(\Delta_i)/n$ be the expected relative progress in round $i$. We will later show that $\delta_i$ can also be expressed as a function $\delta : [0,1] \to \mathbb R$ that only depends on $p_i$ (when viewing $k,\alpha$ as fixed values), such that $\delta_i = \delta(p_i)$. Subsequently, we establish a relation between the $\delta(p_i)$ for varying parameters $k,\alpha$, in which case we denote it as $\delta^{k,\alpha}(p_i)$ (however, for conciseness we will refrain using this superscript whenever possible, in particular when only single values for $k$ and $\alpha$ are involved).
\end{itemize}

Note that for $i \geq 1$, the quantities $X_{ij}$, $Y_{ij}$, $S_i, \Delta_i, $ are random variables. This is not the case for the \textit{expected} relative progress $\delta_i$, which, for fixed $\alpha,k$, can be expressed only in terms of $p_i$, i.e., $\delta_i$ can be expressed as a function $\delta(p_i)$ that depends only on $p_i$.

\begin{restatable}{lemma}{lemdelta}
\label{lem:delta}
	Let $k/2 < \alpha \leq k$. Then 	
	\[
		\delta_i = \delta(p_i) := \sum_{\ell = \alpha}^{k} \binom{k}{\ell} \Big[p_i^\ell (1 \m p_i)^{k-\ell+1} - (1 \m p_i)^\ell p_i^{k-\ell+1}\Big].
	\]
\end{restatable}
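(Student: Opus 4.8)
The plan is to compute $\E(\Delta_i)$ by tracking, for each party separately, the probability that it changes its opinion during round $i$, and then applying linearity of expectation. Fix the current fraction $p_i$ of parties holding opinion $1$, so that each party's sample count satisfies $Y_{ij} \sim \text{Bin}(k,p_i)$. The first step is to translate Slush's update rule into conditions on $Y_{ij}$. Because $\alpha > k/2$, at most one opinion can reach an $\alpha$-majority in a sample of size $k$, so the rule is unambiguous and splits the outcomes into three disjoint events: party $j$ adopts $1$ when $Y_{ij} \geq \alpha$; it adopts $0$ when $k - Y_{ij} \geq \alpha$, i.e.\ $Y_{ij} \leq k-\alpha$; and it keeps its previous opinion when $k-\alpha < Y_{ij} < \alpha$.

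Next I would observe that only two transitions actually change the state $S_i$: a party currently holding $0$ that adopts $1$ (contributing $+1$ to $\Delta_i$), and a party currently holding $1$ that adopts $0$ (contributing $-1$); in all other cases the opinion is unchanged. Since there are $n(1-p_i)$ parties with opinion $0$ and $np_i$ with opinion $1$, and every sample uses the same fraction $p_i$, linearity of expectation gives
\[
\E(\Delta_i) = n(1-p_i)\,\P(Y \geq \alpha) - n\,p_i\,\P(Y \leq k-\alpha),
\]
where $Y \sim \text{Bin}(k,p_i)$. Dividing by $n$ and writing out both binomial tails yields
\[
\delta_i = (1-p_i)\sum_{\ell=\alpha}^{k}\binom{k}{\ell}p_i^\ell(1-p_i)^{k-\ell} \;-\; p_i\sum_{\ell=0}^{k-\alpha}\binom{k}{\ell}p_i^\ell(1-p_i)^{k-\ell}.
\]

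The final step is a symmetry manipulation to align both sums over the same index range. In the second (downward) tail I substitute $\ell \mapsto k-\ell$ and use $\binom{k}{\ell}=\binom{k}{k-\ell}$ to rewrite it as $\sum_{\ell=\alpha}^{k}\binom{k}{\ell}(1-p_i)^\ell p_i^{k-\ell}$. Absorbing the outer factors $(1-p_i)$ and $p_i$ into the respective summands then turns the exponents into $(1-p_i)^{k-\ell+1}$ and $p_i^{k-\ell+1}$, and combining the two sums over $\ell=\alpha,\dots,k$ produces exactly the claimed closed form.

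The argument is essentially careful bookkeeping of a per-party two-state step, so I expect no deep obstacle. The two points needing care are (i) checking that $\alpha > k/2$ makes the three update events disjoint, so the transition probabilities are exactly the binomial tails with no double counting, and (ii) getting the index substitution in the downward tail right so both sums align over $\ell=\alpha,\dots,k$; an off-by-one there would corrupt the exponents $k-\ell+1$. I would also note that the derivation treats $p_i$ as the fraction of $1$-opinions present when the round's queries are issued, consistent with the stated $Y_{ij}\sim\text{Bin}(k,p_i)$ and with counting the current holders of each opinion as $np_i$ and $n(1-p_i)$.
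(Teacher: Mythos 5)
Your proposal is correct and follows essentially the same route as the paper's own proof: condition on the three disjoint query outcomes (disjoint because $\alpha > k/2$), multiply each switching probability by the fraction of parties currently holding the opposite opinion, apply linearity of expectation to get $\E(\Delta_i)$, and fold the downward binomial tail into the range $\ell = \alpha,\dots,k$ via $\binom{k}{\ell} = \binom{k}{k-\ell}$. If anything, your bookkeeping of which prior factor ($p_i$ versus $1-p_i$) multiplies which tail is cleaner than the paper's intermediate display, where the formulas for the two switching portions $\delta_i^0$ and $\delta_i^1$ appear with their labels interchanged (a typo that cancels out in the final difference).
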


\begin{proof}	
	In a given round $i$ a party $j$ with can observe at least $\alpha$ times the opposite opinion in its query, in which case it switches, or not (observing at least $\alpha$ of both opinions is precluded due to $\alpha > k/2$). Note that for $\delta_i$ only events where parties switch their opinion are relevant. For some party $j$ we have
	\begin{align*}
		\P\big(X_{ij} = 1 \mid X_{i-1,j} = 0\big) & = \P\big(Y_{ij} \geq \alpha \big) \\ 
		& = \sum_{\ell = \alpha}^{k} \binom{k}{\ell} p_i^\ell (1 \m p_i)^{k-\ell},\\
		\P\big(X_{ij} = 0 \mid X_{i-1,j} = 1\big) & = \P\big(Y_{ij} \leq k-\alpha \big) \\ 
		& = \sum_{\ell = 0}^{k-\alpha} \binom{k}{\ell} p_i^\ell (1 \m p_i)^{k-\ell} = \sum_{\ell = \alpha}^{k} \binom{k}{\ell} (1 \m p_i)^\ell p_i^{k-\ell},
	\end{align*}
	where the last step is due to symmetry (see also Appendix \ref{sec:binom}).
	Let $\delta^0_i$ be the probability that a randomly selected party switches to from 1 to 0 and let $\delta_i^1$ be analogous probability for switching from 0 to 1. These can also be interpreted as expected portions of parties switching from 1 to 0 and vice versa. We obtain	
	\begin{align*}
		\delta^0_i& = \P\big( X_{ij} \e 0 \cap  X_{i-1,j} \e 1\big) = (1 \m p_i) \P\big(X_{ij} \e 0\mid X_{i-1,j} \e 1\big) = \sum_{\ell = \alpha}^{k} \binom{k}{\ell} p_i^\ell (1 \m p_i)^{k-\ell+1},\\
		\delta^1_i & = \P\big( X_{ij} \e 1 \cap  X_{i-1,j} \e 0\big) = p_i \cdot \P\big(X_{ij} \e 1 \mid X_{i-1,j} \e 0\big) = \sum_{\ell = \alpha}^{k} \binom{k}{\ell} (1 \m p_i)^\ell p_i^{k-\ell+1}.		
	\end{align*}
	The random variable $\Delta_i$  is determined by the number of parties that switch from 0 to 1 minus those that switch from 1 to 0. Since the events $ X_{ij} \e 0 \cap  X_{i-1,j} \e 1$ and  $X_{ij} \e 1 \cap  X_{i-1,j} \e 0$ are disjoint, we have that $E(\Delta_i) = n\delta^1_i-n\delta^0_i$. Thus $\delta_i = n E(\Delta_i) =  \delta^1_i - \delta^0_i$ and the claim follows.	
\end{proof}

\begin{figure}
	\centering	
	\begin{subfigure}{0.49\textwidth}
		\centering
		\includegraphics[width=\linewidth]{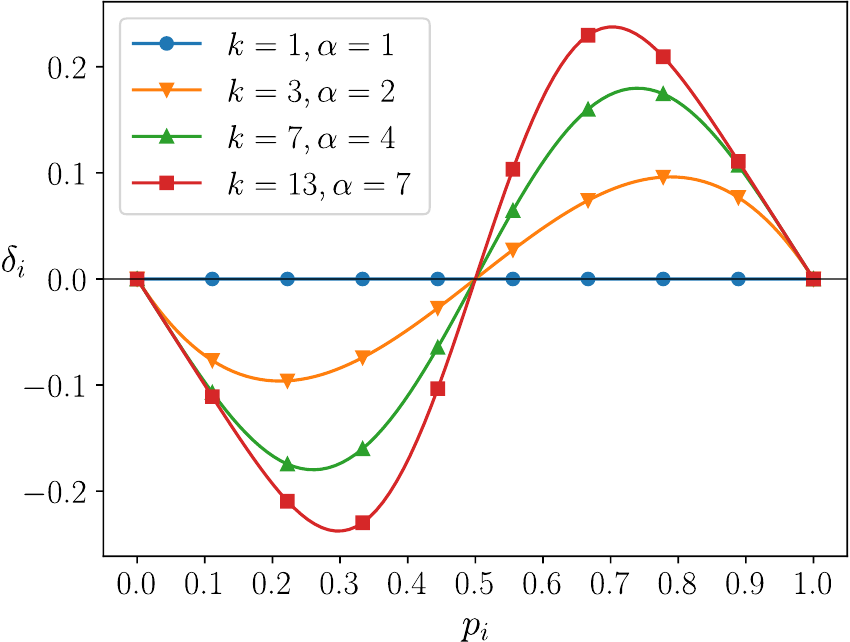}
		\caption{$k=2\alpha \m 1$ with various values $\alpha$}
		\label{fig:delta_a}
	\end{subfigure}
	\hfill
	\begin{subfigure}{0.49\textwidth}
		\centering
		\includegraphics[width=\linewidth]{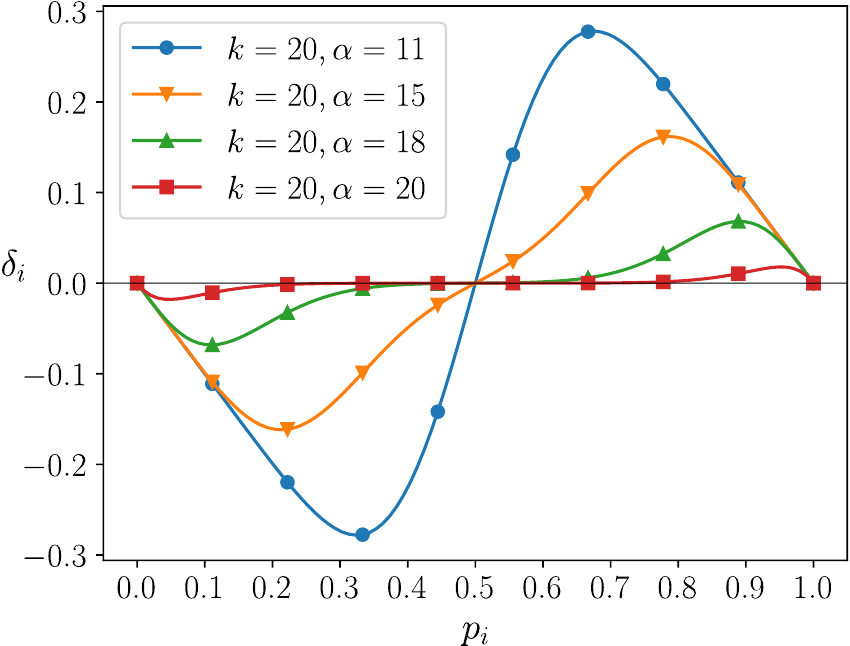}
		\caption{$k=20$ with various values $\alpha$}
		\label{fig:delta_b}
	\end{subfigure}
	
	\caption{Plots of $\delta(p_i)$ for different parameters $k$ and $\alpha$. For $k=2\alpha\m1$ the expected  progress for larger $\alpha$ dominates those for smaller (note that in the extreme case $k=\alpha=1$ there is no expected progress). For fixed $k$ the opposite is true. The combination $k=20, \alpha =15$ was suggested by the whitepaper \cite{Rocket2019}. Note that $\delta(p_i)$ is point-symmetric with respect to the point $(\tfrac{1}{2},0)$.}
	\label{fig:delta}
\end{figure}

\subsection{Mapping out the Dependency of \boldmath$\delta_i$ on \boldmath$k$ and \boldmath$\alpha$}

Recall that we define \smash{$\delta^{k,\alpha}(p_i)$} as the function defining the rate of progress for  parameters $\alpha,k$ depending on $p_i$.
It is interesting that for $k=2\alpha-1$ we have \smash{$\delta^{k,\alpha} = \delta^{k,\alpha+1}$} as the first summand in the expression \smash{$\delta^{k,\alpha}(p_i)$} from Lemma \ref{lem:delta} is zero. Another interesting observation is that in the marginal case $k=\alpha=1$, we have $\delta^{k,\alpha} = 0$ (see Figure \ref{fig:delta_a}), i.e., there is no expected progress and Slush essentially degenerates into a random walk, and it is not too hard to show that in this case Slush takes $\Omega(n^2)$ rounds in expectation.

In this section we establish non-trivial relations and claims for \smash{$\delta^{k,\alpha}(p_i)$}. First, we show that \smash{$\delta^{k,\alpha}(p_i)$} is always larger (in absolute value) than \smash{$\delta^{k,\alpha'}(p_i)$} for $\alpha' > \alpha$ (see Figure \ref{fig:delta_b} for a visualized example of this claim).  This means that choosing the smallest $\alpha$ with $\alpha > k/2$ (i.e., \smash{$\alpha = \lceil \frac{k+1}{2}\rceil$}) is best in terms of the expected rate of progress $\delta^{k,\alpha}(p_i)$. While this is useful on its own regarding the choice of $\alpha$ in practice, we utilize this later to essentially eliminate $\alpha$ from the of analysis for the lower bound of the rate of convergence to consensus.

\begin{restatable}{lemma}{lemdeltadomination}
	\label{lem:delta_domination}
	For fixed $k$, let $k/2 < \alpha \leq k$ and consider $\alpha' > \alpha$. Then for any $p_i$ we have $|\delta^{k,\alpha}(p_i)| \geq |\delta^{k,\alpha'}(p_i)|$.	
\end{restatable}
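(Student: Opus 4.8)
The plan is to exploit the fact that $\delta^{k,\alpha'}(p_i)$ is literally a tail of the sum defining $\delta^{k,\alpha}(p_i)$. Since $\alpha' > \alpha$, Lemma~\ref{lem:delta} gives
$\delta^{k,\alpha}(p_i) = \sum_{\ell=\alpha}^{\alpha'-1} f_\ell(p_i) + \delta^{k,\alpha'}(p_i)$,
where $f_\ell(p) := \binom{k}{\ell}\big[p^\ell(1-p)^{k-\ell+1} - (1-p)^\ell p^{k-\ell+1}\big]$ denotes the $\ell$-th summand. If I can show that every extra summand $f_\ell(p_i)$ (for $\alpha \leq \ell \leq \alpha'-1$) carries the same sign as $\delta^{k,\alpha'}(p_i)$ itself, then passing from $\alpha'$ down to $\alpha$ only adds terms of that sign and therefore cannot decrease the absolute value, which is exactly the claim.

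First I would reduce to the case $p_i \geq \tfrac12$. The expression in Lemma~\ref{lem:delta} is antisymmetric under $p \mapsto 1-p$, since swapping $p$ and $1-p$ negates each bracket; hence $\delta^{k,\alpha}(p) = -\delta^{k,\alpha}(1-p)$, which is the point-symmetry about $(\tfrac12,0)$ noted in Figure~\ref{fig:delta}. Because both sides of the desired inequality are absolute values, the case $p_i < \tfrac12$ follows from the case $p_i > \tfrac12$ applied to $1-p_i$, and $p_i = \tfrac12$ is trivial as every summand vanishes.

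The crux is then the sign analysis of each $f_\ell$ on $[\tfrac12,1]$. Here I would use the hypothesis $\ell \geq \alpha > k/2$, i.e.\ $2\ell > k$, which guarantees $k-\ell+1 \leq \ell$. This lets me factor out the common power $(p(1-p))^{k-\ell+1}$ and write
$f_\ell(p) = \binom{k}{\ell}\,(p(1-p))^{k-\ell+1}\big(p^{2\ell-k-1}-(1-p)^{2\ell-k-1}\big)$,
with exponent $2\ell-k-1 \geq 0$. For $p \geq \tfrac12$ we have $p \geq 1-p \geq 0$, so the last factor is non-negative and thus $f_\ell(p) \geq 0$ for every $\ell$ in the range. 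Consequently all terms of $\delta^{k,\alpha}(p_i)$ are non-negative on $[\tfrac12,1]$, so $\delta^{k,\alpha}(p_i) = \sum_{\ell=\alpha}^{\alpha'-1} f_\ell(p_i) + \delta^{k,\alpha'}(p_i) \geq \delta^{k,\alpha'}(p_i) \geq 0$, and taking absolute values finishes the argument.

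The only real obstacle is this sign determination for the individual summands; the factorization enabled by $2\ell > k$ is what makes it clean, after which the monotonicity in the number of summands is immediate. The remaining care is at the boundary exponent $2\ell-k-1=0$ (occurring when $k$ is odd and $\ell=(k+1)/2$), where $f_\ell \equiv 0$ and contributes nothing — consistent with the earlier observation that $\delta^{k,\alpha}=\delta^{k,\alpha+1}$ when $k=2\alpha-1$.
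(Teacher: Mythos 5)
Your proof is correct and follows the same approach as the paper: observe that the summands of $\delta^{k,\alpha'}(p_i)$ form a subset of those of $\delta^{k,\alpha}(p_i)$ and that all summands share a common sign determined by whether $p_i$ is above, below, or equal to $\tfrac{1}{2}$. In fact, your factorization $f_\ell(p) = \binom{k}{\ell}\,(p(1-p))^{k-\ell+1}\bigl(p^{2\ell-k-1}-(1-p)^{2\ell-k-1}\bigr)$ supplies a rigorous justification for the sign claim that the paper merely asserts by inspection, so your write-up is, if anything, more complete than the original.
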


\begin{proof}
	The lemma is a corollary of Lemma  \ref{lem:delta}, where we notice that all summands of \smash{$\delta(p_i) = \delta^{k,\alpha}(p_i)$} are either positive for $p_i > 1/2$ or negative for  $p_i < 1/2$ or 0 for $p_i = 1/2$. Since all summands of \smash{$\delta^{k,\alpha'}(p_i)$} occur in \smash{$\delta^{k,\alpha}(p_i)$} and all have the same sign for some fixed $p_i$, we have the claim.
\end{proof}

In particular, for $\alpha = \lceil \frac{k+1}{2}\rceil$ we can express $\delta^{k,\alpha}(p_i)$ in a different form, which we will use frequently in subsequent lemmas to simplify (or enable) subsequent proofs.

\begin{restatable}{lemma}{lemdeltashortform}
	\label{lem:delta_short_form}
	Let $\alpha = \lceil \frac{k+1}{2}\rceil$. Then
	\[
		\begin{cases}
			\delta^{k,\alpha}(p_i) = \big[\sum_{\ell = \alpha}^{k} \binom{k}{\ell} p_i^\ell (1 \m p_i)^{k-\ell}\big] - p_i, & \quad k \text{ odd}\\
			\delta^{k,\alpha}(p_i) = \big[\sum_{\ell = \alpha}^{k} \binom{k}{\ell} p_i^\ell (1 \m p_i)^{k-\ell}\big] - p_i + \binom{k}{\alpha-1} p_i^\alpha (1 \m p_i)^{\alpha-1}, & \quad k \text{ even.}
		\end{cases}
	\]
\end{restatable}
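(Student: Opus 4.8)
The plan is to start from the closed form in Lemma~\ref{lem:delta} and massage it using the elementary binomial symmetry already invoked there (see Appendix~\ref{sec:binom}), followed by a parity analysis of the ``middle band'' of the binomial distribution. First I would introduce shorthand for the two tails of $Y_{ij}\sim\mathrm{Bin}(k,p_i)$: set $A := \sum_{\ell=\alpha}^{k}\binom{k}{\ell}p_i^\ell(1-p_i)^{k-\ell} = \P(Y_{ij}\geq\alpha)$ for the upper tail (this is exactly the bracketed main term appearing in both cases of the claimed identity) and $B := \P(Y_{ij}\leq k-\alpha)$ for the corresponding lower tail. Factoring $(1-p_i)$ out of the first sum in Lemma~\ref{lem:delta} gives $(1-p_i)A$, and applying the symmetry substitution $\ell \mapsto k-\ell$ to the second sum rewrites it as $p_i\,B$. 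Hence $\delta^{k,\alpha}(p_i) = (1-p_i)A - p_i B$.

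Next I would perform the purely algebraic rearrangement $(1-p_i)A - p_i B = A - p_i - p_i(A + B - 1)$, which isolates the target main term $A$ and the linear term $-p_i$, leaving an error term $p_i(1 - A - B)$ to be identified. The key observation is that $\alpha > k/2$ forces $k-\alpha < \alpha$, so the two tail events $\{Y_{ij}\geq\alpha\}$ and $\{Y_{ij}\leq k-\alpha\}$ are disjoint; therefore $A + B = \P(Y_{ij}\geq\alpha \text{ or } Y_{ij}\leq k-\alpha)$, and consequently $1 - A - B = \P(k-\alpha+1 \leq Y_{ij}\leq\alpha-1)$ is precisely the probability that $Y_{ij}$ lands in the middle band between the two tails.

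The remaining step is to evaluate this middle band for $\alpha = \lceil\tfrac{k+1}{2}\rceil$ by cases on the parity of $k$. For $k$ odd, $\alpha = \tfrac{k+1}{2}$, so the index set $\{k-\alpha+1,\dots,\alpha-1\} = \{\alpha,\dots,\alpha-1\}$ is empty and $1 - A - B = 0$, yielding $\delta^{k,\alpha}(p_i) = A - p_i$. For $k$ even, $\alpha = \tfrac{k}{2}+1$, so $k-\alpha+1 = \alpha-1 = \tfrac{k}{2}$ and the band collapses to the single value $Y_{ij} = \alpha-1$; thus $1 - A - B = \binom{k}{\alpha-1}p_i^{\alpha-1}(1-p_i)^{k-\alpha+1} = \binom{k}{\alpha-1}p_i^{\alpha-1}(1-p_i)^{\alpha-1}$, using $k-\alpha+1 = \alpha-1$ in the last step. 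Multiplying by the leftover factor $p_i$ produces exactly the extra term $\binom{k}{\alpha-1}p_i^{\alpha}(1-p_i)^{\alpha-1}$ claimed in the even case.

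I expect no serious obstacle: the whole argument is bookkeeping with binomial tails and a single algebraic identity. The only place demanding care is the parity analysis, where one must correctly translate the definition $\alpha = \lceil\tfrac{k+1}{2}\rceil$ into the endpoints $k-\alpha+1$ and $\alpha-1$ of the middle band and verify that it is \emph{empty} for odd $k$ versus a \emph{single point} for even $k$; an off-by-one or sign slip there would misplace (or drop) the correction term and break the even case.
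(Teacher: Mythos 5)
Your proof is correct, and it takes a recognizably different route from the paper's, even though both start from Lemma \ref{lem:delta} and both ultimately rest on binomial symmetry plus normalization. The paper proceeds by raw sum manipulation: it splits $(1\m p_i)^{k-\ell+1} = (1\m p_i)^{k-\ell} - p_i(1\m p_i)^{k-\ell}$, performs the index substitution $\ell' = k-\ell$ in the middle sum, and then recombines two partial sums into the full normalization $p_i\sum_{\ell=0}^{k}\binom{k}{\ell}(1\m p_i)^\ell p_i^{k-\ell} = p_i$; the parity of $k$ enters through the range of the substituted index ($\ell'$ running to $\alpha\m1$ when $k$ is odd versus $\alpha\m2$ when $k$ is even), and the even-case correction term appears as the summand at the ``missing'' index $\alpha\m1$. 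Your version instead packages the two sums as tail probabilities $A = \P(Y_{ij}\geq\alpha)$ and $B = \P(Y_{ij}\leq k\m\alpha)$, writes $\delta^{k,\alpha}(p_i) = (1\m p_i)A - p_iB = A - p_i + p_i(1\m A\m B)$, and identifies $1\m A\m B$ as the mass of the middle band $\{k\m\alpha\p1,\dots,\alpha\m1\}$, which is empty for odd $k$ and the singleton $\{\alpha\m1\}$ for even $k$. What your framing buys is transparency: the even-$k$ correction term is visibly just $p_i$ times the probability of landing exactly on the middle point, the parity dichotomy (empty band versus single point) is immediate, and the argument is much harder to derail with an index error. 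What the paper's framing buys is that its intermediate manipulations (index shifts, splitting off boundary terms) are reused almost verbatim in the neighboring Lemmas \ref{lem:rephrase_sum} and \ref{lem:delta_domination2}, so the same toolkit carries the rest of Section 3; your tail-probability identity is self-contained but would need to be re-expanded into sums to feed those later arguments.
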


\begin{proof}
	We start out with the expression from Lemma \ref{lem:delta}.
	\begin{align*}
		\delta^{k,\alpha}(p_i) = & \sum_{\ell = \alpha}^{k} \binom{k}{\ell} \Big[p_i^\ell (1 \m p_i)^{k-\ell+1} - (1 \m p_i)^\ell p_i^{k-\ell+1}\Big]\\
		= & \sum_{\ell = \alpha}^{k} \binom{k}{\ell} p_i^\ell (1 \m p_i)^{k-\ell+1} - \sum_{\ell = \alpha}^{k} \binom{k}{\ell} (1 \m p_i)^\ell p_i^{k-\ell+1}\\
		= & \sum_{\ell = \alpha}^{k} \binom{k}{\ell} \Big[p_i^\ell (1 \m p_i)^{k-\ell} - p_i^{\ell +1} (1 \m p_i)^{k-\ell} \Big] - \sum_{\ell = \alpha}^{k} \binom{k}{\ell} (1 \m p_i)^\ell p_i^{k-\ell+1}\\
		= & \sum_{\ell = \alpha}^{k} \binom{k}{\ell} p_i^\ell (1 \m p_i)^{k-\ell} - \sum_{\ell = \alpha}^{k} \binom{k}{\ell} p_i^{\ell +1} (1 \m p_i)^{k-\ell} - \sum_{\ell = \alpha}^{k} \binom{k}{\ell} (1 \m p_i)^\ell p_i^{k-\ell+1}
	\end{align*}	
	The proof forks into cases by the parity of $k$. Consider the case that $k$ is odd. Then we have $k = 2 \alpha-1$. Continuing from above, we substitute $\ell' := k - \ell$  in the middle sum, i.e., $\ell'$ goes from 0 to $k-\alpha = \alpha-1$.
	\begin{align*}
		= & \sum_{\ell = \alpha}^{k} \binom{k}{\ell} p_i^\ell (1 \m p_i)^{k-\ell} - \sum_{\ell'  = 0}^{\alpha-1} \binom{k}{k-\ell'} p_i^{k-\ell' +1} (1 \m p_i)^{\ell'}  - \sum_{\ell = \alpha}^{k} \binom{k}{\ell} (1 \m p_i)^\ell p_i^{k-\ell+1} \\
		= & \sum_{\ell = \alpha}^{k} \binom{k}{\ell} p_i^\ell (1 \m p_i)^{k-\ell} - \sum_{\ell' = 0}^{\alpha-1} \binom{k}{\ell'} p_i^{k-\ell' +1} (1 \m p_i)^{\ell'}  - \sum_{\ell = \alpha}^{k} \binom{k}{\ell} (1 \m p_i)^\ell p_i^{k-\ell+1} \\
		= & \sum_{\ell = \alpha}^{k} \binom{k}{\ell} p_i^\ell (1 \m p_i)^{k-\ell} - p_i\underbrace{\sum_{\ell = 0}^{k} \binom{k}{\ell} (1 \m p_i)^\ell p_i^{k-\ell}}_{=1} = \sum_{\ell = \alpha}^{k} \binom{k}{\ell} p_i^\ell (1 \m p_i)^{k-\ell} - p_i.
	\end{align*}
	In the case that $k$ is even we have $k=2\alpha-2$. Again we substitute  $\ell' := k - \ell$  in the middle sum, that is, $\ell'$ goes from 0 to $k-\alpha = \alpha-2$.
	\begin{align*}
		= & \sum_{\ell = \alpha}^{k} \binom{k}{\ell} p_i^\ell (1 \m p_i)^{k-\ell} - \sum_{\ell'  = 0}^{\alpha-2} \binom{k}{k-\ell'} p_i^{k-\ell' +1} (1 \m p_i)^{\ell'}  - \sum_{\ell = \alpha}^{k} \binom{k}{\ell} (1 \m p_i)^\ell p_i^{k-\ell+1} \\
		= & \sum_{\ell = \alpha}^{k} \binom{k}{\ell} p_i^\ell (1 \m p_i)^{k-\ell} - \sum_{\ell' = 0}^{\alpha-2} \binom{k}{\ell'} p_i^{k-\ell' +1} (1 \m p_i)^{\ell'}  - \sum_{\ell = \alpha}^{k} \binom{k}{\ell} (1 \m p_i)^\ell p_i^{k-\ell+1} \\
		= & \sum_{\ell = \alpha}^{k} \binom{k}{\ell} p_i^\ell (1 \m p_i)^{k-\ell} - p_i\sum_{\ell = 0}^{k} \binom{k}{\ell} (1 \m p_i)^\ell p_i^{k-\ell} + \binom{k}{\alpha \m 1} p_i^{\alpha} (1 \m p_i)^{\alpha-1} \\
		= & \sum_{\ell = \alpha}^{k} \binom{k}{\ell} p_i^\ell (1 \m p_i)^{k-\ell} - p_i + \binom{k}{\alpha \m 1} p_i^{\alpha} (1 \m p_i)^{\alpha-1} .\tag*{\qedhere}
	\end{align*}
\end{proof}

The next two lemmas show, perhaps surprisingly, that for odd $k = 2\alpha \m  1$ the expected progress functions $\delta^{k,\alpha}(p_i)$ and $\delta^{k-1,\alpha}(p_i)$ coincide. When combined with Lemma~\ref{lem:delta_domination}, this will later, when we cover lower bounds, allow us to focus our analysis solely on odd values of the form $k = 2\alpha -1$ as the corresponding claim for even $k = 2 \alpha-2$ is implied.

\begin{restatable}{lemma}{lemrephrasesum}
	\label{lem:rephrase_sum}
	For $k  > \alpha \geq 1$ the following equation holds
	\[
		\sum_{\ell = \alpha}^{k} \binom{k}{\ell} p_i^\ell (1 \m p_i)^{k-\ell} = \Big[\sum_{\ell = \alpha}^{k \m 1} \binom{k \m 1}{\ell} p_i^\ell (1 \m p_i)^{k-\ell-1}\Big] + \binom{k \m 1}{\alpha \m 1} p_i^{\alpha} (1 \m p_i)^{k - \alpha}.
	\]
\end{restatable}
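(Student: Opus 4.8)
The plan is to recognize both the left-hand side and the first summand on the right-hand side as binomial tail probabilities, and to exploit the fact that a $\text{Bin}(k,p_i)$ random variable splits as an independent sum of a $\text{Bin}(k-1,p_i)$ variable and a single $\text{Bern}(p_i)$ trial. Throughout I would write $p$ for $p_i$. The left-hand side is exactly $\P(\text{Bin}(k,p) \geq \alpha)$, while the bracketed sum on the right is $\P(\text{Bin}(k-1,p) \geq \alpha)$, since $(1-p)^{(k-1)-\ell} = (1-p)^{k-\ell-1}$. So the lemma asserts that adding one trial raises the tail at threshold $\alpha$ by precisely $\binom{k-1}{\alpha-1} p^\alpha (1-p)^{k-\alpha}$.

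Concretely, I would let $Z \sim \text{Bin}(k-1,p)$ and $W \sim \text{Bern}(p)$ be independent, so $Z+W \sim \text{Bin}(k,p)$, and condition on $W$:
\[
	\P(Z + W \geq \alpha) = p\,\P(Z \geq \alpha - 1) + (1-p)\,\P(Z \geq \alpha).
\]
The key step is to peel off the boundary term via $\P(Z \geq \alpha-1) = \P(Z \geq \alpha) + \P(Z = \alpha-1)$, where $\P(Z=\alpha-1) = \binom{k-1}{\alpha-1} p^{\alpha-1} (1-p)^{k-\alpha}$. Substituting this back collapses the two $\P(Z\geq\alpha)$ contributions into a single copy and leaves exactly $p \cdot \binom{k-1}{\alpha-1} p^{\alpha-1}(1-p)^{k-\alpha} = \binom{k-1}{\alpha-1} p^\alpha (1-p)^{k-\alpha}$, the claimed correction. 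Expanding $\P(Z\geq\alpha)$ into its binomial sum recovers the bracketed expression. The hypotheses $k > \alpha \geq 1$ ensure that $\alpha-1 \geq 0$ (so the point mass $\P(Z=\alpha-1)$ is well defined), that the tail sum over $\alpha \leq \ell \leq k-1$ is nonempty, and that $\binom{k-1}{\alpha-1}$ makes sense.

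A purely algebraic alternative is also available: apply Pascal's rule $\binom{k}{\ell} = \binom{k-1}{\ell} + \binom{k-1}{\ell-1}$ termwise on the left, re-index the sum coming from $\binom{k-1}{\ell-1}$ by $\ell \mapsto \ell-1$, and recombine. There is no genuinely hard step in either route; the only delicate part is the index bookkeeping in the algebraic version — tracking how the summation limits shift under re-indexing and checking that the boundary term at $\ell = \alpha-1$, together with the vanishing term at $\ell = k$ (as $\binom{k-1}{k}=0$), isolates the coefficient $\binom{k-1}{\alpha-1}$ rather than being absorbed into the telescoped sum. The probabilistic argument above sidesteps this bookkeeping entirely, so I would present it as the main proof.
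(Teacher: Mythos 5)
Your proof is correct, but it takes a genuinely different route from the paper's. The paper proves the identity by pure algebra: it peels off the $p_i^k$ term, applies Pascal's rule $\binom{k}{\ell} = \binom{k-1}{\ell} + \binom{k-1}{\ell-1}$ termwise, re-indexes the sum coming from $\binom{k-1}{\ell-1}$, and then rewrites $p_i^\ell(1-p_i)^{k-\ell} = p_i^\ell(1-p_i)^{k-\ell-1} - p_i^{\ell+1}(1-p_i)^{k-\ell-1}$ so that two of the resulting sums cancel except for boundary terms at $\ell = \alpha-1$ and $\ell = k-1$; the former is exactly the correction term $\binom{k-1}{\alpha-1}p_i^{\alpha}(1-p_i)^{k-\alpha}$ and the latter cancels against the earlier $p_i^k$. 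In other words, the paper's proof is precisely the index-bookkeeping route you relegated to an ``alternative.'' Your main argument instead interprets both sides as binomial tails, decomposes $\mathrm{Bin}(k,p_i)$ as $Z+W$ with $Z \sim \mathrm{Bin}(k-1,p_i)$ and $W \sim \mathrm{Bern}(p_i)$ independent, conditions on $W$, and peels off the point mass $\P(Z = \alpha-1)$. That version is shorter, avoids all index shifting, and explains \emph{why} the correction term is what it is: it is the probability that the first $k-1$ trials sit exactly at $\alpha-1$ and the extra trial succeeds. The one thing the paper's algebraic route buys is that it establishes the statement as a polynomial identity valid for arbitrary $p_i$, whereas your conditioning argument directly applies only for $p_i \in [0,1]$; this gap is immaterial here (two polynomials agreeing on an interval agree identically, and in the paper $p_i = S_i/n$ is always a probability), but it deserves a one-line remark if the probabilistic proof is presented as the main one.
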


\begin{proof}
	\begin{align*}
		& \hspace*{-1cm}\sum_{\ell = \alpha}^{k} \binom{k}{\ell} p_i^\ell (1 \m p_i)^{k-\ell}\\
		= & \,\Big[\sum_{\ell = \alpha}^{k \m 1} \binom{k}{\ell} p_i^\ell (1 \m p_i)^{k-\ell}\Big] + p_i^k\\
		= & \,\Big[\sum_{\ell = \alpha}^{k \m 1} \Big\{\binom{k \m 1}{\ell} + \binom{k \m 1}{\ell \m 1}\Big\} p_i^\ell (1 \m p_i)^{k-\ell}\Big] + p_i^k \\
		= & \,\Big[\sum_{\ell = \alpha}^{k \m 1} \Big\{\binom{k \m 1}{\ell} + \binom{k \m 1}{\ell \m 1}\Big\} p_i^\ell (1 \m p_i)^{k-\ell}\Big] + p_i^k \\
		= & \,\Big[\sum_{\ell = \alpha}^{k \m 1} \binom{k \m 1}{\ell} p_i^\ell (1 \m p_i)^{k-\ell}\Big] + \Big[\sum_{\ell = \alpha}^{k \m 1} \binom{k \m 1}{\ell \m 1} p_i^\ell (1 \m p_i)^{k-\ell}\Big]  + p_i^k \\
		= & \,\Big[\sum_{\ell = \alpha}^{k \m 1} \binom{k \m 1}{\ell} p_i^\ell (1 \m p_i)^{k-\ell}\Big] + \Big[\sum_{\ell = \alpha -1}^{k \m 2} \binom{k \m 1}{\ell} p_i^{\ell+1} (1 \m p_i)^{k-\ell-1}\Big]  + p_i^k\\
		= & \,\Big[\sum_{\ell = \alpha}^{k \m 1} \binom{k \m 1}{\ell} p_i^\ell (1 \m p_i)^{k-\ell-1}\Big] - \Big[\sum_{\ell = \alpha}^{k \m 1} \binom{k \m 1}{\ell} p_i^{\ell+1} (1 \m p_i)^{k-\ell-1}\Big]\\ 
		& + \Big[\!\sum_{\ell = \alpha -1}^{k \m 2}\! \binom{k \m 1}{\ell} p_i^{\ell+1} (1 \m p_i)^{k-\ell-1}\Big]  + p_i^k \\
		= & \,\Big[\sum_{\ell = \alpha}^{k \m 1} \binom{k \m 1}{\ell} p_i^\ell (1 \m p_i)^{k-\ell-1}\Big] + \binom{k \m 1}{\alpha \m 1} p_i^{\alpha} (1 \m p_i)^{k-\alpha} - p_i^k + p_i^k \tag*{\qedhere}
	\end{align*}
\end{proof}

\begin{restatable}{lemma}{lemdeltaevenodd}
	\label{lem:delta_even_odd}
	Let $k = 2\alpha \m 1$ and $\alpha \geq 2$. Then $\delta^{k,\alpha} = \delta^{k-1,\alpha}$.
\end{restatable}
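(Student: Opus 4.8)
The plan is to reduce the claim entirely to the three preceding lemmas, since the desired identity turns out to be precisely the content of Lemma~\ref{lem:rephrase_sum} once the two expected-progress functions are written in short form. First I would invoke Lemma~\ref{lem:delta_short_form} twice. Since $k = 2\alpha \m 1$ is odd, the odd branch gives
\[
	\delta^{k,\alpha}(p_i) = \Big[\sum_{\ell = \alpha}^{k} \binom{k}{\ell} p_i^\ell (1 \m p_i)^{k-\ell}\Big] - p_i.
\]
For $\delta^{k-1,\alpha}$ I must first check that Lemma~\ref{lem:delta_short_form} is applicable with $k$ replaced by $k\m1$, i.e.\ that $\alpha = \lceil \tfrac{(k-1)+1}{2}\rceil$. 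Since $k \m 1 = 2\alpha \m 2$ is even and $\lceil \tfrac{k}{2}\rceil = \lceil \alpha - \tfrac12 \rceil = \alpha$, this holds, so the even branch yields
\[
	\delta^{k-1,\alpha}(p_i) = \Big[\sum_{\ell = \alpha}^{k-1} \binom{k \m 1}{\ell} p_i^\ell (1 \m p_i)^{k-1-\ell}\Big] - p_i + \binom{k \m 1}{\alpha \m 1} p_i^{\alpha} (1 \m p_i)^{\alpha-1}.
\]

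Next I would form the difference $\delta^{k,\alpha}(p_i) - \delta^{k-1,\alpha}(p_i)$. The two $-p_i$ terms cancel, so proving the lemma amounts to establishing
\[
	\sum_{\ell = \alpha}^{k} \binom{k}{\ell} p_i^\ell (1 \m p_i)^{k-\ell} = \Big[\sum_{\ell = \alpha}^{k-1} \binom{k \m 1}{\ell} p_i^\ell (1 \m p_i)^{k-1-\ell}\Big] + \binom{k \m 1}{\alpha \m 1} p_i^{\alpha} (1 \m p_i)^{\alpha-1}.
\]
This is exactly Lemma~\ref{lem:rephrase_sum}, whose right-hand side carries the exponent $(1\m p_i)^{k-\alpha}$ on the final term; here the hypothesis $k = 2\alpha \m 1$ gives $k - \alpha = \alpha \m 1$, so that exponent is precisely $\alpha \m 1$ and the two expressions match verbatim. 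The side condition $k > \alpha$ required by Lemma~\ref{lem:rephrase_sum} is equivalent to $2\alpha \m 1 > \alpha$, i.e.\ $\alpha \geq 2$, which is exactly the hypothesis of the present lemma, so everything lines up.

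I do not expect a genuine obstacle in this argument: it is essentially a bookkeeping assembly of earlier results. The only points that warrant care are verifying the ceiling/parity condition when applying the short-form lemma to $k\m1$, and checking that the substitution $k - \alpha = \alpha \m 1$ is what makes the trailing term of Lemma~\ref{lem:rephrase_sum} coincide with the extra even-case term of Lemma~\ref{lem:delta_short_form}. Both are immediate consequences of $k = 2\alpha \m 1$, so the proof should be short.
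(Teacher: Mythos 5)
Your proof is correct and follows essentially the same route as the paper: both apply Lemma~\ref{lem:delta_short_form} (odd branch for $k$, even branch for $k\m1$) and close the gap with Lemma~\ref{lem:rephrase_sum}, using $k-\alpha = \alpha\m1$. The only difference is presentational—the paper writes it as a chain of equalities from $\delta^{k,\alpha}$ to $\delta^{k-1,\alpha}$, while you form the difference and cancel—and your explicit checks of the ceiling condition for $k\m1$ and the side condition $k>\alpha$ are a welcome bit of added care.
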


\begin{proof}
	The proof is implied by combining Lemma \ref{lem:delta_short_form} and Lemma \ref{lem:rephrase_sum}. We start with the expression of $\delta^{k,\alpha}(p_i)$ derived in Lemma \ref{lem:delta_short_form}. 
	\begin{align*}
		\delta^{k,\alpha}(p_i) \stackrel{\text{Lem.\ref{lem:delta_short_form}}}{=} \hspace*{-2.5mm}& \hspace*{3.5mm} \,\Big[\sum_{\ell = \alpha}^{k} \binom{k}{\ell} p_i^\ell (1 \m p_i)^{k-\ell}\Big] - p_i\\
		\stackrel{\text{Lem.\ref{lem:rephrase_sum}}}{=} \hspace*{-2.5mm}& \hspace*{3.5mm} \,\Big[\sum_{\ell = \alpha}^{k \m 1} \binom{k \m 1}{\ell} p_i^\ell (1 \m p_i)^{k-\ell-1}\Big] + \binom{k \m 1}{\alpha \m 1} p_i^{\alpha} (1 \m p_i)^{\alpha-1} - p_i  \\  
		\stackrel{\text{Lem.\ref{lem:delta_short_form}}}{=} \hspace*{-2.5mm}& \hspace*{3.5mm}\delta^{k-1,\alpha}(p_i) \tag*{\qedhere}
	\end{align*}
\end{proof}

As our final structural claim in this subsection, we show that for $k=2\alpha\m 1$ or $k=2\alpha$, the expected progress towards consensus \smash{$|\delta^{k,\alpha}(p_i)|$} does not decrease as $\alpha$ (and thereby $k$) increases, an example is given in Figure \ref{fig:delta_a}. We will use the this lemma later to extend an upper bound for the number of rounds to a consensus from small values of $\alpha$ to large (where $k=2\alpha$ or $k=2\alpha\m 1$ ).

\begin{restatable}{lemma}{lemdeltadominationtwo}
	\label{lem:delta_domination2}
	Let $k=2\alpha\m1$ and $k'=2\alpha'\m1$ for $\alpha > \alpha' > 1$. Then for any $p_i$ we have $|\delta^{k,\alpha}(p_i)| \geq |\delta^{k',\alpha'}(p_i)|$. The same holds for even $k=2(\alpha\m1)$ and $k'=2(\alpha'\m1)'$.
\end{restatable}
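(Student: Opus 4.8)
The plan is to reduce the statement to a single clean monotonicity fact about the majority function and then establish that fact by a short conditioning argument.

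First I would dispose of the even case by invoking Lemma~\ref{lem:delta_even_odd}: since $\delta^{2\alpha-2,\alpha} = \delta^{2\alpha-1,\alpha}$ (and likewise with $\alpha'$), the inequality for even $k = 2(\alpha-1)$ is word-for-word the one for odd $k = 2\alpha-1$, so it suffices to treat the odd case. Next, because the claim only compares magnitudes and $\delta^{k,\alpha}$ is point-symmetric about $(\tfrac12,0)$ (replacing $p_i$ by $1-p_i$ in the expression of Lemma~\ref{lem:delta} negates every summand), I would assume without loss of generality that $p_i \geq \tfrac12$; there $\delta^{k,\alpha}(p_i) \geq 0$, exactly as observed in the proof of Lemma~\ref{lem:delta_domination}, so the absolute values can be dropped. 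Finally, by transitivity it is enough to prove the single step $\alpha' = \alpha-1$, i.e.\ $\delta^{2\alpha-1,\alpha}(p_i) \geq \delta^{2\alpha-3,\alpha-1}(p_i)$ for $p_i \geq \tfrac12$, and chain these for general $\alpha > \alpha'$.

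The key reformulation comes from the odd-$k$ short form of Lemma~\ref{lem:delta_short_form}, which gives
\[
\delta^{2\alpha-1,\alpha}(p_i) = \P\big(\mathrm{Bin}(2\alpha-1,p_i) \geq \alpha\big) - p_i,
\]
so $\delta$ equals the probability $M_\alpha(p_i)$ that a strict majority of $2\alpha-1$ independent $\mathrm{Bernoulli}(p_i)$ samples equals $1$, minus the baseline $p_i$. Since the $-p_i$ terms cancel in the difference, the consecutive step reduces exactly to showing $M_\alpha(p_i) \geq M_{\alpha-1}(p_i)$ for $p_i \geq \tfrac12$, i.e.\ that the majority vote over more (odd-many) samples is at least as likely to return $1$.

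I would prove this by conditioning the size-$(2\alpha-1)$ sample on the number $A \sim \mathrm{Bin}(2\alpha-3,p_i)$ of ones among its first $2\alpha-3$ entries and tracking how the two extra samples change the majority outcome. Only two values of $A$ contribute: $A = \alpha-2$ (where the small sample has no $1$-majority but the large one acquires it iff both new samples are $1$) and $A = \alpha-1$ (where the small sample has a majority but the large one loses it iff both new samples are $0$); every other value leaves both outcomes equal. Using $\binom{2\alpha-3}{\alpha-2} = \binom{2\alpha-3}{\alpha-1}$, this collapses to the closed form
\[
M_\alpha(p_i) - M_{\alpha-1}(p_i) = \binom{2\alpha-3}{\alpha-1}\, p_i^{\alpha-1}(1-p_i)^{\alpha-1}\,(2p_i-1),
\]
which is nonnegative precisely when $p_i \geq \tfrac12$, completing the step. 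The main obstacle is the bookkeeping in this conditioning argument: one must verify that only the two boundary values of $A$ matter and combine the resulting $p_i^2$ and $(1-p_i)^2$ weights into the single factor $2p_i-1$; once the difference reduces to that form the sign, and hence the whole lemma, is immediate.
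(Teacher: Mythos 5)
Your proof is correct, and its skeleton matches the paper's (reduce the even case to the odd case via Lemma~\ref{lem:delta_even_odd}, restrict to $p_i \geq \tfrac12$ by point-symmetry, prove the consecutive step $\alpha \to \alpha-1$ and chain). The difference lies in how the consecutive step is established. The paper stays entirely algebraic: it writes $\delta^{k,\alpha} = \delta^{k-1,\alpha}$, peels off a boundary term with Lemma~\ref{lem:rephrase_sum}, weakens that term using $p_i^{\alpha}(1-p_i)^{k-\alpha} \geq p_i^{\alpha-1}(1-p_i)^{k-\alpha+1}$ for $p_i \geq \tfrac12$, and then recognizes the result as $\delta^{k-2,\alpha-1}$ via Lemma~\ref{lem:delta_short_form}. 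You instead interpret the odd-$k$ short form as $\delta^{2\alpha-1,\alpha}(p_i) = M_\alpha(p_i) - p_i$ with $M_\alpha$ the majority probability, and compute the difference $M_\alpha - M_{\alpha-1}$ exactly by coupling: condition on the count $A \sim \mathrm{Bin}(2\alpha-3,p_i)$ of ones among the first $2\alpha-3$ samples and note that only $A \in \{\alpha-2, \alpha-1\}$ changes the outcome, yielding
\[
M_\alpha(p_i) - M_{\alpha-1}(p_i) = \binom{2\alpha-3}{\alpha-1}\, p_i^{\alpha-1}(1-p_i)^{\alpha-1}\,(2p_i-1),
\]
whose sign is immediate. (Your case bookkeeping checks out: for $A \leq \alpha-3$ and $A \geq \alpha$ both majorities agree, for $A=\alpha-2$ the large sample gains the majority exactly when $B=2$, and for $A=\alpha-1$ it loses it exactly when $B=0$.) Notably this is the same quantity the paper's inequality step discards, so the two proofs are two derivations of the same fact. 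Your route is self-contained — it bypasses Lemma~\ref{lem:rephrase_sum} entirely — and buys slightly more: an exact closed form for the gap, showing the domination is strict whenever $p_i \notin \{0,\tfrac12,1\}$; the paper's route buys reuse of lemmas it needs elsewhere anyway.
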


\begin{proof}
	We show the bound for $p_i \geq \tfrac{1}{2}$, the claim for $p_i \leq \tfrac{1}{2}$ holds by symmetry. Let $k = 2\alpha\m1$. We show that $\delta^{k,\alpha}(p_i) = \delta^{k-1,\alpha}(p_i) \geq \delta^{k-2,\alpha-1}(p_i) \geq \delta^{k-3,\alpha-1}(p_i)$, which implies the lemma for $\alpha' = \alpha-1$ and the rest follows by induction.
	\begin{align*}
		\delta^{k,\alpha}(p_i) \stackrel{\text{Lem.\ref{lem:delta_even_odd}}}{=} \hspace*{-2.5mm}& \hspace*{3.5mm} \delta^{k-1,\alpha}(p_i)\\
		\stackrel{\text{Lem.\ref{lem:rephrase_sum}}}{=} \hspace*{-2.5mm}& \hspace*{3.4mm} \Big[\sum_{\ell = \alpha}^{k \m 2} \binom{k \m 2}{\ell} p_i^\ell (1 \m p_i)^{k-\ell-2}\Big] + \binom{k \m 2}{\alpha \m 1} p_i^{\alpha} (1 \m p_i)^{k-\alpha} - p_i  \\  
		\stackrel{p_i\geq1/2}{\geq} \hspace*{-3.3mm}& \hspace*{3.3mm} \Big[\sum_{\ell = \alpha}^{k \m 2} \binom{k \m 2}{\ell} p_i^\ell (1 \m p_i)^{k-\ell-2}\Big] + \binom{k \m 2}{\alpha \m 1} p_i^{\alpha-1} (1 \m p_i)^{k-\alpha+1} - p_i  \\  
		= & \,\,\,\,\Big[\sum_{\ell = \alpha-1}^{k \m 2} \binom{k \m 2}{\ell} p_i^\ell (1 \m p_i)^{k-\ell-2}\Big]  - p_i  \\  
		\stackrel{\text{Lem.\ref{lem:delta_short_form}}}{=} \hspace*{-2.5mm}& \hspace*{3.5mm} \delta^{k-2,\alpha-1}(p_i)
		\stackrel{\text{Lem.\ref{lem:delta_even_odd}}}{=}  \delta^{k-3,\alpha-1}(p_i).\tag*{\qedhere}
	\end{align*}
\end{proof}

Note that while increasing $\alpha$ with $k=2\alpha\m 1$ or $k=2\alpha$ does in fact strictly increase $|\delta^{\alpha,k}(p_i)|$ thereby speeding up the expected time to consensus, this effect is rather limited, as we shall see in the next section.

\section{Bounding the Time to Consensus for Slush}
\label{sec:slush_bounds}

We will now show how the expected progress $\delta(p_i)$ can be used to obtain bounds for the number of rounds required to obtain consensus.

\subsection{Lower Bound}

As the system converges to consensus, arguably the most critical phase is when the network is roughly in balanced state, i.e., where fractions of parties with opinion 0 and 1 are roughly equal ($p_i \approx 1/2$) and where progress $\delta(p_i)$ towards consensus is close to 0, see Figure \ref{fig:delta}.

To lead us out of a potential perfect balance, the system can only rely pure randomness to gain some small initial imbalance, as the expected progress is 0. (The best one can hope for is a deviation of $p_i \approx 1/2 + {c}/{\sqrt{n}}$ for some constant $c$ within reasonable time bounds, due to the central limit theorem). After that initial perturbation the convergence to consensus crucially depends on how fast the expected progress for the next round grows in parameter $p_i$.  

Indicative for the change in progress  is the derivative of $\delta(p_i)$, whose upper bound is useful to analyze the case where the system moves to a 1-consensus (w.l.o.g., due to a symmetry argument).  Intuitively, this limits how fast the expected progress increases from an almost balanced state. We will first restrict ourselves to the case $k=2\alpha\m1$, as the previous section gives us all tools to extend this result to general $k$ and $\alpha$, as will be shown formally afterwards.

\begin{restatable}{lemma}{lemboundderivative}
	\label{lem:bound_derivative}
	Let $k=2\alpha-1$. For $p_i \geq 1/2$ it holds that $\frac{\partial\delta(p_i)}{\partial p_i}  \leq k-1$.
\end{restatable}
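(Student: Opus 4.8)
The plan is to start from the closed form for $\delta^{k,\alpha}(p_i)$ supplied by Lemma~\ref{lem:delta_short_form}. Since $k=2\alpha-1$ is odd and $\alpha=\lceil\frac{k+1}{2}\rceil$, that lemma yields $\delta(p_i) = F(p_i) - p_i$, where I abbreviate $F(p) := \sum_{\ell=\alpha}^{k}\binom{k}{\ell} p^\ell (1-p)^{k-\ell}$, the upper tail of a $\mathrm{Bin}(k,p)$ variable. Consequently $\frac{\partial\delta}{\partial p_i} = F'(p_i) - 1$, so it suffices to establish the clean bound $F'(p)\leq k$ for $p\geq \tfrac12$.

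The key computational step is to evaluate $F'(p)$ in closed form. Differentiating term by term and applying the identities $\ell\binom{k}{\ell}=k\binom{k-1}{\ell-1}$ and $(k-\ell)\binom{k}{\ell}=k\binom{k-1}{\ell}$ gives $\frac{d}{dp}\bigl[\binom{k}{\ell}p^\ell(1-p)^{k-\ell}\bigr] = k\binom{k-1}{\ell-1}p^{\ell-1}(1-p)^{k-\ell} - k\binom{k-1}{\ell}p^\ell(1-p)^{k-\ell-1}$. Writing $g(\ell):=\binom{k-1}{\ell-1}p^{\ell-1}(1-p)^{k-\ell}$, the negative part of the $\ell$-th summand equals $k\,g(\ell+1)$, so summing over $\ell=\alpha,\dots,k$ telescopes to $F'(p)=k\bigl[g(\alpha)-g(k+1)\bigr]$. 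The boundary term vanishes because $\binom{k-1}{k}=0$, leaving $F'(p)=k\binom{k-1}{\alpha-1}p^{\alpha-1}(1-p)^{k-\alpha}$ — the familiar fact that the derivative of a binomial tail is a Beta density.

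It remains to specialize to $k=2\alpha-1$, so that $k-\alpha=\alpha-1$ and the two exponents match: $F'(p)=k\binom{2\alpha-2}{\alpha-1}\bigl[p(1-p)\bigr]^{\alpha-1}$. The final step bounds the product $\binom{2\alpha-2}{\alpha-1}\bigl[p(1-p)\bigr]^{\alpha-1}$ by $1$ using two elementary facts: $p(1-p)\leq\tfrac14=2^{-2}$ for every $p$, and the central binomial estimate $\binom{2\alpha-2}{\alpha-1}\leq\sum_{j=0}^{2\alpha-2}\binom{2\alpha-2}{j}=2^{2\alpha-2}$. Multiplying gives $\binom{2\alpha-2}{\alpha-1}\bigl[p(1-p)\bigr]^{\alpha-1}\leq 2^{2\alpha-2}\cdot 2^{-(2\alpha-2)}=1$, hence $F'(p)\leq k$ and $\frac{\partial\delta}{\partial p_i}\leq k-1$, as claimed.

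The main obstacle is the derivative-of-the-tail identity: keeping the telescoping bookkeeping correct (tracking the index shifts in the binomial coefficients and verifying that the boundary term at $\ell=k$ vanishes) is where an error would most likely arise; everything afterward is routine. I would also note that the hypothesis $p\geq\tfrac12$ is not essential for this particular bound, since $p(1-p)$ is maximized at $p=\tfrac12$, so the inequality in fact holds for all $p\in[0,1]$; the restriction is presumably retained only for uniformity with the surrounding convergence-to-$1$-consensus analysis.
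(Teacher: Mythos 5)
Your proof is correct, but it takes a genuinely different route from the paper's. The paper also starts from Lemma~\ref{lem:delta_short_form} and differentiates term by term, but it then proceeds in two steps: it evaluates the derivative at $p_i=\tfrac12$ (obtaining the bound $k-1$ there via the identity $\sum_{\ell=\alpha}^{k}\binom{k}{\ell}=2^{k-1}$ for odd $k$), and separately shows that the second derivative satisfies $\frac{\partial^2\delta(p_i)}{\partial p_i^2}\leq 0$ on $[\tfrac12,1]$ through a lengthy index-shifting computation, so that the first derivative is maximized at $p_i=\tfrac12$ over that interval. You instead compute the first derivative in closed form: the telescoping identity gives $F'(p)=k\binom{k-1}{\alpha-1}p^{\alpha-1}(1-p)^{k-\alpha}$ (the classical ``derivative of the binomial tail is a Beta density'' fact), which for $k=2\alpha-1$ collapses to $k\binom{2\alpha-2}{\alpha-1}\bigl[p(1-p)\bigr]^{\alpha-1}$, and you bound this by $k$ using $p(1-p)\leq\tfrac14$ and $\binom{2\alpha-2}{\alpha-1}\leq 2^{2\alpha-2}$. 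Your bookkeeping is sound, including the boundary term $g(k+1)$ vanishing via $\binom{k-1}{k}=0$ (equivalently, the $\ell=k$ summand contributes no negative term since its factor $k-\ell$ is zero). What each approach buys: yours is shorter, avoids the second-derivative computation entirely, and yields the stronger conclusion that the bound holds for \emph{all} $p\in[0,1]$, not just $p\geq\tfrac12$ --- your closing observation about the hypothesis being inessential is accurate; the paper's argument, by establishing concavity of $\delta$ on $[\tfrac12,1]$, pins down that the derivative is maximized exactly at the balanced state, a slightly finer structural fact, though that extra information is not exploited elsewhere in the paper.
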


\begin{proof}
	We use the expression from Lemma \ref{lem:delta_short_form} and obtain the following derivative.	
	\begin{align*}
		\frac{\partial\delta(p_i)}{\partial p_i} = & \, \frac{\partial}{\partial p_i} \Big(\Big[\sum_{\ell = \alpha}^{k} \binom{k}{\ell} p_i^\ell (1 \m p_i)^{k-\ell}\Big] - p_i \Big) \\
		= & \, \Big[\sum_{\ell = \alpha}^{k} \binom{k}{\ell} \Big(\ell \cdot p_i^{\ell - 1} (1 \m p_i)^{k-\ell} - (k \m\ell) \cdot p_i^{\ell} (1 \m p_i)^{k-\ell-1}\Big)\Big] - 1
	\end{align*}	
	We evaluate at $p_i = 1/2$, then 
	\begin{align*}
		\frac{\partial\delta(\tfrac{1}{2})}{\partial p_i} = & \,   \frac{1}{2^{k-1}}\Big[\sum_{\ell = \alpha}^{k} \binom{k}{\ell} (2\ell-k)  \Big] - 1
		\leq  \,\frac{1}{2^{k-1}}\Big[\sum_{\ell = \alpha}^{k} \binom{k}{\ell} (2\ell-k) \Big] -1\\
		= & \,  \frac{1}{2^{k-1}}\Big[2\sum_{\ell = \alpha}^{k} \binom{k}{\ell}\ell - \frac{k}{2}\sum_{\ell = 0}^{k} \binom{k}{\ell} \Big] -1 \tag*{\small \textit{due to $k=2\alpha \m 1$ and $\binom{k}{\ell} = \binom{k}{k-\ell}$}}\\
		\leq & \,  \frac{1}{2^{k-1}}\Big[2\sum_{\ell = 0}^{k} \binom{k}{\ell}\ell - \frac{k}{2}\sum_{\ell = 0}^{k} \binom{k}{\ell} \Big]  -1\\
		= &\, \frac{1}{2^{k-1}}\Big[2k \cdot 2^{k-1} - \frac{k}{2} \cdot 2^{k} \Big]  -1= k \m1.
	\end{align*}	
	In the following we will also show that the second derivative \smash{$\frac{\partial\delta(p_i)}{\partial^2 p_i}$} is at most 0 for $p_i \geq \frac{1}{2}$. Combined with the above, this implies that the first derivative is \smash{$\frac{\partial\delta(p_i)}{\partial p_i} \leq k \m 1$} for any $p_i \geq \frac{1}{2}$. It remains to prove the claim about \smash{$\frac{\partial\delta(p_i)}{\partial^2 p_i}$}. This is a bit tedious and involves modifying binomial coefficients on the level of the definition $\binom{k}{\ell} = \frac{n!}{k!(n-k)!}$ (third step) and then shifting sum indices (fourth step).
	\begin{align*}
		\frac{\partial\delta(p_i)}{\partial^2 p_i} = & \, \frac{\partial}{\partial p_i} \Big[\sum_{\ell = \alpha}^{k} \binom{k}{\ell} \Big(\ell \cdot p_i^{\ell - 1} (1 \m p_i)^{k-\ell} - (k \m\ell) \cdot p_i^{\ell} (1 \m p_i)^{k-\ell-1}\Big)\Big] \\
		= & \, \sum_{\ell = \alpha}^{k} \binom{k}{\ell} p_i^{\ell - 2} (1 \m p_i)^{k-\ell-2} \\
		& \cdot \Big(\ell(\ell \m 1)(1 \m p_i)^2 - 2\ell(k \m \ell)p_i(1 \m p_i ) +  (k \m \ell)(k \m \ell \m 1)p_i^2\Big)\\
		= & \,\Big[\!\sum_{\ell = \alpha}^{k} \binom{k}{\ell} p_i^{\ell - 2} (1 \m p_i)^{k-\ell} \ell(\ell \m 1) \Big] \\
		& + \Big[\!\sum_{\ell = \alpha}^{k} \binom{k}{\ell} p_i^{\ell} (1 \m p_i)^{k-\ell-2} (k \m \ell)(k \m \ell \m 1) \Big] \\
		& - 2 \cdot \Big[\!\sum_{\ell = \alpha}^{k} \binom{k}{\ell} p_i^{\ell - 1} (1 \m p_i)^{k-\ell-1} \ell(k \m \ell) \Big]\\
		= & \,\Big[\!\sum_{\ell = \alpha}^{k} \binom{k}{\ell \m 1} p_i^{\ell - 2} (1 \m p_i)^{k-\ell} (\ell \m 1)(k\m \ell \p 1) \Big] \\
		& + \Big[\!\sum_{\ell = \alpha}^{k} \binom{k}{\ell \p 1} p_i^{\ell} (1 \m p_i)^{k-\ell-2} (\ell \p 1)(k \m \ell \m 1) \Big] \\
		& - 2 \cdot \Big[\!\sum_{\ell = \alpha}^{k} \binom{k}{\ell} p_i^{\ell - 1} (1 \m p_i)^{k-\ell-1} \ell(k \m \ell) \Big]\\
		= & \,\Big[\!\sum_{\ell = \alpha-1}^{k} \binom{k}{\ell} p_i^{\ell - 1} (1 \m p_i)^{k-\ell-1} \ell(k \m \ell)\Big] \\
		& + \Big[\!\sum_{\ell = \alpha+1}^{k} \binom{k}{\ell} p_i^{\ell - 1} (1 \m p_i)^{k-\ell-1} \ell(k \m \ell) \Big]\\
		& - 2 \cdot \Big[\!\sum_{\ell = \alpha}^{k} \binom{k}{\ell} p_i^{\ell - 1} (1 \m p_i)^{k-\ell-1} \ell(k \m \ell) \Big]\\
		= & \, \binom{k}{\alpha\m 1} p_i^{\alpha - 2} (1 \m p_i)^{\alpha-1} (\alpha \m 1)\alpha - \binom{k}{\alpha} p_i^{\alpha - 1} (1 \m p_i)^{\alpha-2} (\alpha \m 1)\alpha \\
		= & \, \binom{k}{\alpha} p_i^{\alpha - 2} (1 \m p_i)^{\alpha-2} (\alpha \m 1)\alpha \cdot (1-2p_i) \tag*{\small \textit{since} $\binom{k}{\alpha}=\binom{k}{\alpha\m 1}$}\\
		\leq & \, 0 \quad\text{ for } p_i \geq \tfrac{1}{2}. \tag*{\qedhere} 
	\end{align*}	
\end{proof}

Next, we show that the progress towards consensus in a single round is limited, in particular around the balanced state. Here we utilize two tools. First of all, we employ Lemma \ref{lem:bound_derivative} that bounds the progress around an almost balanced state but only for the "well behaved" case $k=2\alpha \m 1$. Second, we use Lemma \ref{lem:delta_even_odd} to extend this to even Lemma \ref{lem:delta_domination} to extend it to any $k$ and $\alpha$ for which $\tfrac{k}{2} < \alpha \leq k$.

\begin{restatable}{lemma}{lemprogressboundsingleround}
	\label{lem:progress_bound_single_round}
	Let $k \geq 2$ and $\tfrac{k}{2} < \alpha \leq k$ and $S_i \geq \tfrac{n}{2}$ (w.l.o.g.). Then $\Delta_{i+1} > (k \m 1)  \big(S_i \m \tfrac{n}{2}\big) + t \sqrt{n}$ with probability at most $\tfrac{1}{t^2}$, for any $t \geq 1$.
\end{restatable}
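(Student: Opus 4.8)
The plan is to condition on the current state $S_i$ and exploit that, given $S_i$, the one-round progress $\Delta_{i+1} = \sum_{j=1}^n (X_{i+1,j} - X_{ij})$ is a sum of $n$ \emph{independent} contributions $D_j := X_{i+1,j} - X_{ij} \in \{-1,0,1\}$, one per party; independence holds because each party draws its own sample of size $k$ independently. This reduces the statement to two separate tasks: controlling the conditional mean $\E(\Delta_{i+1} \mid S_i)$, and a variance-based concentration around that mean.

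For the mean, I would use $\E(\Delta_{i+1} \mid S_i) = n\,\delta(p_i)$ with $p_i = S_i/n$ (Lemma \ref{lem:delta}) and argue $n\,\delta(p_i) \leq (k-1)(S_i - n/2)$. In the clean case $k = 2\alpha-1$ this follows by integrating the derivative bound of Lemma \ref{lem:bound_derivative}: since $\delta(\tfrac12) = 0$ by point symmetry and $\partial\delta/\partial p \leq k-1$ on $[\tfrac12,1]$, we obtain $\delta(p_i) \leq (k-1)(p_i - \tfrac12)$ for $p_i \geq \tfrac12$, hence $n\,\delta(p_i) \leq (k-1)(S_i - n/2)$. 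To cover arbitrary $\alpha$ with fixed $k$ I invoke Lemma \ref{lem:delta_domination}, by which the progress is largest in absolute value at the smallest admissible $\alpha$; the even case is then reduced to an odd one via the identity $\delta^{k,\alpha} = \delta^{k+1,\alpha}$ coming from Lemma \ref{lem:delta_even_odd}.

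For the concentration I bound the conditional variance. Because each $D_j$ is supported on $\{-1,0,1\}$ we have $\Var(D_j \mid S_i) \leq \E(D_j^2 \mid S_i) \leq 1$, and independence gives $\Var(\Delta_{i+1} \mid S_i) \leq n$. Chebyshev's inequality then yields $\P\big(\Delta_{i+1} - \E(\Delta_{i+1}\mid S_i) \geq t\sqrt n \,\big|\, S_i\big) \leq \Var(\Delta_{i+1}\mid S_i)/(t\sqrt n)^2 \leq 1/t^2$. Combining with the mean bound, the event $\{\Delta_{i+1} > (k-1)(S_i - n/2) + t\sqrt n\}$ is contained in $\{\Delta_{i+1} - \E(\Delta_{i+1}\mid S_i) > t\sqrt n\}$ and therefore also has probability at most $1/t^2$; since this holds for every value $S_i \geq n/2$, it holds unconditionally under the hypothesis $S_i \geq n/2$.

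The main obstacle is the mean bound rather than the concentration step, which is routine. The derivative estimate is established only for $k = 2\alpha-1$, so the real work is transferring it uniformly to all admissible pairs $(k,\alpha)$ via the structural results of the previous subsection. In particular I would verify carefully that the reduction through $\delta^{k,\alpha} = \delta^{k+1,\alpha}$ for even $k$ preserves the constant $k-1$, since the naive route only delivers $k$; any slack here would enter the factor governing how fast the imbalance $|S_i - n/2|$ can grow, although it would not affect the eventual $\Omega(\log n/\log k)$ lower bound.
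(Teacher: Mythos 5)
Your proposal follows essentially the same route as the paper's proof: bound the conditional mean by $\E(\Delta_{i+1}\mid S_i)=n\,\delta(p_i)\le (k\m1)n(p_i\m\tfrac12)$ using $\delta(\tfrac12)=0$ and the derivative bound of Lemma~\ref{lem:bound_derivative} for $k=2\alpha\m1$, transfer to even $k$ via Lemma~\ref{lem:delta_even_odd} and to general $\alpha$ via Lemma~\ref{lem:delta_domination}, then apply Chebyshev with $\Var(\Delta_{i+1}\mid S_i)\le n$ (your explicit conditioning and the $D_j\in\{-1,0,1\}$ decomposition is in fact slightly more careful than the paper's wording, which calls the $X_{i+1,j}$ identically distributed). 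The subtlety you flag for even $k$ is genuine but is equally present in the paper's own proof, which transfers the odd-$k$ factor $k\m1$ to the even parameter $k'=k\m1$ and thus delivers factor $k'$ rather than $k'\m1$ there; as you observe, this slack only shifts constants and does not affect the downstream lower bound.
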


\begin{proof}
	We prove the claim for $k = 2\alpha -1$ for $\alpha \geq 2$, and generalize it further below. Since $\delta(p_i)(\tfrac{1}{2}) = 0$ (cf.\ Lemma \ref{lem:delta} or Figure \ref{fig:delta}and \smash{$\frac{\partial\delta(p_i)}{\partial p_i} \leq k$} by Lemma \ref{lem:bound_derivative} we obtain $\E(\Delta_{i+1}) = \delta(p_i)(p_i) \cdot n \leq (k \m 1) n (p_i \m \tfrac{1}{2})$.
	
	To also obtain the claim with the stated probability, let us look at the variance $\sigma^2 := \Var(S_{i+1}) = \Var(\sum_{j=1}^n X_{i+1,j}) $. After round $i$ before round $i \p 1$ we have $\Var(\Delta_{i+1}) = \Var(S_{i+1} \m S_{i}) = \Var(S_{i+1})  = \sigma^2 $, since $S_{i}$ is a constant offset.
	
	The $X_{i+1,j} \in \{0,1\}$ are independent and identically distributed with $\Var(X_{i+1,j}) \leq 1$, thus $\sigma^2 = n \cdot \Var(X_{i+1,j})  \leq n$.
	Using the Chebyshev inequality we obtain 
	\begin{align*}
		\P\Big(\Delta_{i+1} \geq (k \m 1)n (p_i-\tfrac{1}{2}) + t \sqrt{n} \Big) \leq & \,\P\Big(\Delta_{i+1} \geq \E(\Delta_{i+1}) + t \sigma \Big) \\
		\leq & \,\P\Big(|\Delta_{i+1} - \E(\Delta_{i+1})|\geq t \sigma \Big) \!\leq\! \frac{1}{t^2}.
	\end{align*}
	It remains to generalize the claim for $k, \alpha$.	
	Let $\Delta^{k,\alpha}_i$ and $\delta^{k,\alpha}_i$ denote the absolute and relative expected progress in round $i$ for these specific parameters. Let $k = 2 \alpha-1, \alpha \geq 2$ as before. By Lemma \ref{lem:delta_even_odd} we have $\E(\Delta_i^{k, \alpha}) = n \cdot \delta^{k, \alpha}(p_i) = n \cdot \delta^{k-1, \alpha}(p_i) = \E(\Delta_i^{k-1, \alpha})$, hence
	\begin{align*}
		\P\Big(\Delta_{i+1}^{k-1,\alpha} \geq (k \m 1)n (p_i-\tfrac{1}{2}) + t \sqrt{n} \Big) \leq & \,\P\Big(\Delta_{i+1}^{k-1,\alpha} \geq \E(\Delta_{i+1}^{\text{\boldmath$k$},\alpha)} + t \sigma \Big) \\
		= & \,\P\Big(\Delta_{i+1}^{k-1,\alpha} \geq \E(\Delta_{i+1}^{\text{\boldmath$k\m1$},\alpha)} + t \sigma \Big) \!\leq\! \frac{1}{t^2}.
	\end{align*}
	Note that this extends the claim from $k = 2\alpha \m1$ (odd) to the case $k = 2\alpha \m 2$ (even).
	
	Let now $k$ be arbitrary (odd or even) and $\tfrac{k}{2} < \alpha' \leq k$. By Lemma \ref{lem:delta_domination} we have that \[\E(\Delta_i^{k, \alpha'}) = n \cdot \delta(p_i)^{k, \alpha'} \leq n \cdot \delta^{k, \alpha}(p_i) = \E(\Delta_i^{k, \alpha}).\] Applying the Chebyshev bound once more gives us
	\begin{align*}
		\P\Big(\Delta_{i+1}^{k,\alpha'} \geq (k \m 1)n (p_i-\tfrac{1}{2}) + t \sqrt{n} \Big) \leq & \,\P\Big(\Delta_{i+1}^{k,\alpha'} \geq \E(\Delta_{i+1}^{k,\text{\boldmath$\alpha$}}) + t \sigma \Big) \\
		\leq & \,\P\Big(\Delta_{i+1}^{k,\alpha'} \geq \E(\Delta_{i+1}^{k,\text{\boldmath$\alpha'$}}) + t \sigma \Big) \!\leq\! \frac{1}{t^2}.\tag*{\qedhere}
	\end{align*}	
\end{proof}

Building on the previous lemma, we can give the following probabilistic bound for the number of parties that have opinion 1 after $i$ rounds.

\begin{restatable}{lemma}{lemprogressboundmultiround}
	\label{lem:progress_bound_multi_round}
	Let $k \geq 2$, $\tfrac{k}{2} < \alpha \leq k$. Assume the system is in a roughly balanced state with $S_0 \leq \tfrac{n}{2} + f(n)$ for  \smash{$f(n) = \sqrt{n \log n}$}. Then for any \smash{$i \leq \frac{\log n}{c}$} it holds $S_{i} > \frac{n}{2} + (k \p 1)^i f(n)$ with probability at most $1/c$ for any $c \geq 1$.
\end{restatable}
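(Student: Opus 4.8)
The plan is to iterate the single-round estimate of Lemma~\ref{lem:progress_bound_single_round} across the first $i$ rounds, combining a union bound over rounds with a short induction. Throughout I work with the deviation from balance $D_j := S_j - \tfrac{n}{2}$, so the hypothesis reads $D_0 \leq f(n)$ and the target becomes to bound $\P\big(D_i > (k+1)^i f(n)\big)$ by $1/c$. The key observation is that, conditioned on the state after round $j$, Lemma~\ref{lem:progress_bound_single_round} controls the one-round increment: with probability at least $1 - 1/t^2$ we have $\Delta_{j+1} \leq (k-1)D_j + t\sqrt{n}$, which together with $D_{j+1} = D_j + \Delta_{j+1}$ yields the clean recursion $D_{j+1} \leq k D_j + t\sqrt{n}$.

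The crucial parameter choice is $t := \sqrt{\log n}$, so that $t\sqrt{n} = \sqrt{n \log n} = f(n)$. With this choice each single-round failure has conditional probability at most $1/t^2 = 1/\log n$. Defining $B_j$ as the event that the bound of Lemma~\ref{lem:progress_bound_single_round} fails in round $j$ and taking a union bound over $j = 0, \dots, i-1$ gives $\P\big(\bigcup_j B_j\big) \leq i/\log n \leq 1/c$, where the last inequality is exactly the hypothesis $i \leq \log n / c$. Since Lemma~\ref{lem:progress_bound_single_round} delivers a bound conditional on the current state, the union bound is justified by taking expectations of the conditional failure probabilities, so no independence across rounds is needed.

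It then remains to verify that, on the complement of $\bigcup_j B_j$, the bound $D_i \leq (k+1)^i f(n)$ holds deterministically. I would prove $D_j \leq (k+1)^j f(n)$ by induction on $j$: the base case is the hypothesis $D_0 \leq f(n)$, and the inductive step plugs $D_j \leq (k+1)^j f(n)$ into $D_{j+1} \leq k D_j + t\sqrt{n}$ and uses $t\sqrt{n} = f(n) \leq (k+1)^j f(n)$ to conclude $D_{j+1} \leq k(k+1)^j f(n) + (k+1)^j f(n) = (k+1)^{j+1} f(n)$. This also explains the slightly lossy growth factor $(k+1)$ rather than the $k$ coming from the multiplicative progress alone: it is the price of absorbing the additive $\sqrt{n}$-fluctuation term into the growth factor.

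The subtlety I expect to be the main obstacle is that Lemma~\ref{lem:progress_bound_single_round} is stated only for $S_j \geq \tfrac{n}{2}$, i.e.\ $D_j \geq 0$, whereas during the iteration the deviation could momentarily dip below $0$. I would handle this by noting that when $D_j < 0$ the expected increment $\E(\Delta_{j+1}) = n\,\delta(p_j)$ is negative, so the same Chebyshev argument underlying Lemma~\ref{lem:progress_bound_single_round} (with $\Var(S_{j+1}) \leq n$) still gives $\P(\Delta_{j+1} > t\sqrt{n}) \leq 1/t^2$; on the good event this forces $D_{j+1} \leq D_j + t\sqrt{n} < f(n)$, which is consistent with the inductive bound. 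Hence the induction is never broken by excursions below balance, and the argument goes through for all $p_i$ with $\tfrac{k}{2} < \alpha \leq k$, the full generality in $k,\alpha$ having already been secured inside Lemma~\ref{lem:progress_bound_single_round}. Finally I would record the mild requirement $t = \sqrt{\log n} \geq 1$, valid for all sufficiently large $n$, which is the regime of interest.
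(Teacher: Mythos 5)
Your proposal is correct and follows essentially the same route as the paper's proof: choose $t=\sqrt{\log n}$ so that the single-round bound of Lemma~\ref{lem:progress_bound_single_round} fails with probability at most $1/\log n$, union-bound over the at most $\log n/c$ rounds, and on the good event run the induction $D_{j+1} \leq k D_j + f(n) \leq (k+1)^{j+1} f(n)$ with $D_j = S_j - \tfrac{n}{2}$. Your explicit treatment of rounds where $S_j$ dips below $\tfrac{n}{2}$ (where the expected increment is negative, so the same Chebyshev argument still bounds the upward fluctuation) patches a case the paper's proof leaves implicit, and you handle it correctly.
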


\begin{proof}
	Let us first assume that each round the statement $\Delta_{i+1} \leq  (k \m 1)  \big(S_{i} \m \tfrac{n}{2}\big) + f(n)$ is true and assess the overall progress we make at most towards a 1 consensus starting from $S_0$. 
	\[
	S_1 = \Delta_1 + S_{0} \leq (k \m 1)  \big(S_{0} \m \tfrac{n}{2}\big) + f(n) + S_0 \leq \tfrac{n}{2} + (k \m 1)  f(n) + 2f(n) = \tfrac{n}{2} + f(n) (k \p 1).
	\]
	This satisfies the lemma for the first round and serves as our induction base. For the induction step we obtain
	\begin{align*}
		S_{i+1} = & \,\Delta_{i+1} + S_{i} \leq  (k \m 1)  \big(S_{i} \m \tfrac{n}{2}\big) + f(n) +  S_i \tag*{\small\textit{(Lemma \ref{lem:progress_bound_single_round})}}\\ 
		\leq & \,(k \m 1)  (k\p 1)^i  f(n)  + f(n) + \tfrac{n}{2} +  (k\p 1)^i  f(n)\tag*{\small\textit{(induction hypothesis)}}\\
		= & \, \tfrac{n}{2} + k  (k\p 1)^i  f(n)  + f(n) \\
		\leq & \,\tfrac{n}{2} +  (k\p 1)^{i + 1} f(n) 
	\end{align*}
	
	To conclude the proof, we compute the probability that the claim is true. By Lemma \ref{lem:progress_bound_single_round} we have that $\Delta_i > (k \m 1)  \big(S_i \m \tfrac{n}{2}\big) + \sqrt{n \log n}$ with probability at most \smash{$\tfrac{1}{\log n}$}.  Union bounding this for at most \smash{$\tfrac{\log n}{c}$} rounds, we obtain that the claim is true with probability at most \smash{$\tfrac{\log n}{c}\tfrac{1}{\log n}= \tfrac{1}{c}$}.
\end{proof}

We have all tools to deduce the lower bound for the number of rounds of Slush that is required even to get moderately close to a consensus state with some moderate probability.

\begin{restatable}{theorem}{thmslushlowerbound}
	\label{thm:slush_lower_bound}
	For $k \geq 2$ and any $\tfrac{k}{2} < \alpha \leq k$ and sufficiently large $n$, running Slush for at most \smash{$ \tfrac{\log n}{3\log (k+1)\log \gamma}$} rounds there is a majority opinion with at least $\tfrac{n}{2} + \tfrac{n}{\gamma}$ parties with probability at most \smash{$\tfrac{1}{\log(k+1)\log \gamma}$} for any constant $\gamma \geq 2$.
\end{restatable}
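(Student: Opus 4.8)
The plan is to reduce the theorem to the one-sided multi-round bound of Lemma~\ref{lem:progress_bound_multi_round} together with the $0$--$1$ symmetry of the Slush dynamics. Abbreviate $c := \log(k\p1)\log\gamma$ and fix a run length $i \leq \tfrac{\log n}{3c}$. The event ``some opinion reaches at least $\tfrac{n}{2}\p\tfrac{n}{\gamma}$ parties within $i$ rounds'' is precisely the event that $|S_j \m \tfrac{n}{2}| \geq \tfrac{n}{\gamma}$ for some round $j \leq i$, so it suffices to control this deviation. I would start the system from a balanced configuration $S_0 = \tfrac{n}{2}$, which satisfies the hypothesis $S_0 \leq \tfrac{n}{2}\p f(n)$ of Lemma~\ref{lem:progress_bound_multi_round} with $f(n)=\sqrt{n\log n}$; this is the relevant worst case for a lower bound, and by the point symmetry of $\delta$ about $(\tfrac12,0)$ (Lemma~\ref{lem:delta}) the analysis is invariant under swapping the two opinions.

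Next I would apply Lemma~\ref{lem:progress_bound_multi_round} with parameter $2c$ in place of $c$. Since $i \leq \tfrac{\log n}{3c} \leq \tfrac{\log n}{2c}$, the lemma is applicable and yields, with probability at least $1 \m \tfrac{1}{2c}$, that $S_j \leq \tfrac{n}{2} + (k\p1)^j f(n) \leq \tfrac{n}{2} + (k\p1)^i f(n)$ for all $j \leq i$ (the union bound over rounds is already carried out inside the lemma's proof, and $(k\p1)^j f(n)$ is increasing in $j$). Running the same argument with the roles of $0$ and $1$ exchanged bounds $n \m S_j$ in the same way, so a union bound over the two directions gives, with probability at least $1 \m \tfrac{1}{c}$, that $|S_j \m \tfrac{n}{2}| \leq (k\p1)^i f(n)$ for every $j \leq i$.

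It then remains to verify the purely arithmetic inequality $(k\p1)^i f(n) < \tfrac{n}{\gamma}$, for then $|S_j \m \tfrac{n}{2}| < \tfrac{n}{\gamma}$ throughout the run and no majority of the claimed size can appear. Taking logarithms, $i \leq \tfrac{\log n}{3c}$ gives $(k\p1)^i \leq n^{1/(3\log\gamma)}$, so the inequality reduces to $n^{1/(3\log\gamma)}\sqrt{n\log n} < \tfrac{n}{\gamma}$, i.e.\ to $n^{\frac{1}{3\log\gamma}-\frac12}\sqrt{\log n} < \tfrac{1}{\gamma}$. The exponent $\tfrac{1}{3\log\gamma}\m\tfrac12$ is strictly negative precisely because $\gamma \geq 2$ forces $\log\gamma > \tfrac23$, so the left-hand side tends to $0$ and the inequality holds for all sufficiently large $n$. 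Combining the two-sided deviation bound with this inequality yields the stated failure probability $\tfrac{1}{c} = \tfrac{1}{\log(k\p1)\log\gamma}$.

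I expect the main obstacle to be bookkeeping rather than anything conceptual: matching the target probability $\tfrac1c$ exactly (which forces the factor-$2$ loss from the two-sided union bound to be absorbed by invoking Lemma~\ref{lem:progress_bound_multi_round} at parameter $2c$), and keeping the exponent inequality $\tfrac{1}{3\log\gamma} < \tfrac12$ clean across all $\gamma \geq 2$ and the chosen logarithm base. The constant $3$ (rather than $2$) in the round count is exactly the slack that makes both of these go through, leaving only the ``sufficiently large $n$'' to swallow the benign $\sqrt{\log n}$ factor.
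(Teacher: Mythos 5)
Your proposal is correct and follows essentially the same route as the paper's own proof: both reduce the theorem to Lemma~\ref{lem:progress_bound_multi_round} applied from a balanced start state, use the $0$--$1$ symmetry plus a union bound over the two directions, and finish with the arithmetic check $(k\p1)^i f(n) < \tfrac{n}{\gamma}$ for sufficiently large $n$. The only (cosmetic) difference is bookkeeping: you invoke the lemma at parameter $2\log(k\p1)\log\gamma$ to land exactly on the stated probability, whereas the paper invokes it at $3\log(k\p1)\log\gamma$ and absorbs the factor $2$ via $\tfrac{2}{3\log(k+1)\log\gamma} \leq \tfrac{1}{\log(k+1)\log\gamma}$.
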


\begin{proof}
	We will assume that there are initially no parties with opinion $\bot$ (cf.\ the explanation of Slush in Section \ref{sec:protocols}), which only strengthens the lower bound. Furthermore, we restrict ourselves to bound the probability of a 1-majority, since the same claim for a 0-majority holds by symmetry (and then we apply a union bound for the probability of one of either consensus happening). 
	
	The scenario for our lower bound for a 1 majority is the start state $S_0 \leq \tfrac{n}{2} + f(n)$ for \smash{$f(n) = \sqrt{n \log n}$} with $n > 16$, which conforms to the preconditions of Lemma~\ref{lem:progress_bound_multi_round}.
	We choose the parameter $c$ from Lemma \ref{lem:progress_bound_multi_round} as \smash{$c=   3 \log (k\p 1) \log \gamma $}. Then, by Lemma \ref{lem:progress_bound_multi_round}, after at most \smash{$i \leq \frac{\log n}{c}$} rounds it is \smash{$S_ i > \frac{n}{2} + (k \p 1)^i f(n)$} with probability at most $\tfrac{1}{c}$. Furthermore we have
	\[
	(k \p 1)^i f(n) \leq f(n) \cdot (k \p 1)^{\frac{\log n}{3\log (k+1) \log \gamma}} = f(n) \cdot (n/\gamma)^{1/3} \leq \frac{n}{\gamma}.
	\]	
	In the last step we use that $f(n) \leq (n/\gamma)^{2/3}$ for sufficiently large $n$. Note that the same holds for obtaining a majority in the opinion 0 by starting in a state $\tfrac{n}{2} - f(n) \leq S_0 \leq \tfrac{n}{2} + f(n)$ due to symmetry. We obtain the claim of the theorem (for $k=2\alpha-1$) through a union bound on the number of rounds to either a 0-consensus or a 1-consensus with $\tfrac{2}{c} \leq \tfrac{1}{\log(k+1)\log \gamma}$.
\end{proof}

We express the theorem above in a simpler, albeit weaker form.

\begin{corollary}
	\label{cor:slush_lower_bound}
	For $k \geq 2$ and any $\tfrac{k}{2} < \alpha \leq k$, Slush takes \smash{$\Omega\big( \tfrac{\log n}{\log k}\big)$} rounds in expectation to reach a stable consensus (as defined in Definition \ref{def:consensus_state}).
\end{corollary}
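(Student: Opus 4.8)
The plan is to obtain the corollary as a direct consequence of Theorem~\ref{thm:slush_lower_bound}, turning its fixed-horizon tail bound into a lower bound on the expected time to consensus. As in the theorem I consider the roughly balanced start configuration $\tfrac{n}{2}-f(n)\le S_0\le \tfrac{n}{2}+f(n)$ with $f(n)=\sqrt{n\log n}$, which is the natural worst case for the time to consensus. I fix a constant $\gamma = 4$ and write $R := \tfrac{\log n}{3\log(k+1)\log\gamma}$ for the horizon appearing in the theorem, and I let $T$ denote the (random) number of rounds until a state of stable consensus is first reached.

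The first substantive step is an event-containment argument. By Definition~\ref{def:consensus_state}, being in a state of stable consensus means that at least $n-o(n)$ parties agree on one opinion. For all sufficiently large $n$ we have $n-o(n)\ge \tfrac{3n}{4}=\tfrac{n}{2}+\tfrac{n}{\gamma}$ (this is exactly where $\gamma>2$ is needed; $\gamma=2$ would require more than $n$ agreeing parties). Consequently the event $\{T\le R\}$ that stable consensus is reached within $R$ rounds is contained in the event that some opinion is held by at least $\tfrac{n}{2}+\tfrac{n}{\gamma}$ parties within the first $R$ rounds. Theorem~\ref{thm:slush_lower_bound} bounds the probability of the latter by $\tfrac{1}{\log(k+1)\log\gamma}$, which for $k\ge 2$ and $\gamma=4$ is at most $\tfrac{1}{\log 3\,\log 4}<1$; hence $\P(T>R)\ge c$ with $c:=1-\tfrac{1}{\log 3\,\log 4}>0$ a positive constant independent of $k$ and $n$.

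It then remains to convert this tail bound into an expectation bound and to simplify asymptotically. Since $T\ge R$ on the event $\{T>R\}$, we have $\E(T)\ge R\cdot\P(T>R)\ge cR$. Because $\log\gamma$ is a fixed constant and $\log(k+1)=\Theta(\log k)$ for $k\ge 2$, it follows that $R=\Theta\big(\tfrac{\log n}{\log k}\big)$, and therefore $\E(T)=\Omega\big(\tfrac{\log n}{\log k}\big)$, as claimed. The only genuinely delicate point is the event-containment step together with the requirement that the probability bound stay bounded away from $1$ uniformly in $k$, which is what forces the choice $\gamma>2$ and a sufficiently large constant $\gamma$; the remaining passage from the tail bound to the expectation and the asymptotic cleanup of $R$ are routine.
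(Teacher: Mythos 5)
Your proposal is correct and follows exactly the route the paper intends: the paper states Corollary~\ref{cor:slush_lower_bound} as an immediate weakening of Theorem~\ref{thm:slush_lower_bound} without spelling out the derivation, and your argument (fixing a constant $\gamma>2$ so that stable consensus with $n-o(n)$ parties implies a $\tfrac{n}{2}+\tfrac{n}{\gamma}$ majority, applying the theorem's tail bound uniformly in $k$, and converting it to an expectation bound via $\E(T)\ge R\cdot\P(T>R)$) fills in precisely those omitted details. The choice $\gamma=4$, the event-containment step, and the observation that $\log(k+1)\log\gamma=\Theta(\log k)$ are all sound.
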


\subsection{Upper Bound}

We show how to use the structural insights about Slush with respect to parameters $k$ and $\alpha$ to extend known upper bounds for the so called Median protocol, the 3-Majority protocol and the 2-Choices protocol. These are usually conceptualized for the case of multiple ($>2$) opinions and are defined as follows.

\begin{definition}[cf.\ \cite{Becchetti2020}]
	\label{def:randomized_consensus_protocols}
	The \emph{Median protocol} assumes some globally known total order among opinions. In each round, each party samples the opinion of two others and adopts the median among those two and its own.
	
	In the \emph{3-Majority} protocol, in each round, each party samples the opinion of three others and adopts the majority opinion, or picking a random opinion among the three in case of a tie.
	
	In the \emph{2-Choices} protocol, in each round, each party samples the opinion of two others and then applies the 3-Majority rule, defaulting to its own opinion in case of a tie.	
\end{definition}

We make the following observation.

\begin{remark}
	\label{obs:equivalence_slush}		
	Assume that all parties have initially only one of two opinions (i.e., the binary case, in particular, there are no parties with opinion $\bot$). Then the Median protocol, 2-choices protocol and Slush for $k=2$ and $\alpha = 2$ are all equivalent. This is because in the \emph{binary case}, in all three protocols a given party will switch its own opinion if and only if it samples two parties that both have a different opinion from its own. 	
	Under the same circumstances and for the same reason, the 3-Majority protocol is equivalent to Slush for $k=3$ and $\alpha = 2$ (exploiting that there can never be a tie in the binary case).
\end{remark}

There has been extensive research on the dynamics of the Median, 2-Choices and 3-Majority protocol (Definition \ref{def:randomized_consensus_protocols}) and the techniques are for the most part analogous or at least quite similar if the number of opinions is kept constant. We have already established the lower bound of \smash{$\Omega\big(\tfrac{\log n}{\log k}\big)$}, i.e., the \emph{additional} speed-up one can gain by increasing the query size $k$ diminishes very fast. Therefore, we do not deem it particularly worthwhile to show an upper bound that strictly improves on the $\bigO(\log n)$ bound for the aforementioned cases.

Furthermore, it is \textit{not} the scope of this paper to give detailed proofs of slight generalizations of those for the protocols from Definition \ref{def:randomized_consensus_protocols}. To keep this paper reasonably self-contained we showcase how these proofs generalize to Slush with arbitrary $k\geq 2$ and $\alpha = \lceil \tfrac{k+1}{2}\rceil$. We will give an extended proof sketch that shows how the existing proof techniques generalize to obtain the following theorem. For more details we refer to the according sources (in particular \cite{Doerr2011,Becchetti2017}).

\begin{restatable}{theorem}{thmslushupperbound}
	\label{thm:slush_upper_bound}
	Let $k\geq 2$ and $\alpha = \lceil \tfrac{k+1}{2}\rceil$. Then Slush reaches a state where all but $n-O(\!\sqrt n)$ have the same opinion in $\bigO(\log n)$ rounds with high probability, even in the presence of a \smash{$\sqrt{n}$}-bounded adversary.	
\end{restatable}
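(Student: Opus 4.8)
The plan is to carry the three-phase convergence analysis developed for the Median and $3$-Majority protocols \cite{Doerr2011,Becchetti2017} over to Slush for arbitrary $k \ge 2$ with $\alpha = \lceil\tfrac{k+1}{2}\rceil$, driving it by the expected-progress function $\delta(p_i)$ of Lemma \ref{lem:delta}. By the point symmetry of $\delta$ about $(\tfrac12,0)$ I may assume the system heads towards a $1$-consensus. Since $\alpha$ is already the extremal value $\lceil\tfrac{k+1}{2}\rceil$, Lemma \ref{lem:delta_short_form} gives the clean form $\delta(p_i) = \big[\sum_{\ell=\alpha}^{k}\binom{k}{\ell}p_i^{\ell}(1-p_i)^{k-\ell}\big]-p_i$ in the odd case $k=2\alpha-1$, and Lemma \ref{lem:delta_even_odd} shows the even case $k=2\alpha-2$ shares the same $\delta$, so it suffices to analyze odd $k$. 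Crucially, by Lemma \ref{lem:delta_domination2} the drift magnitude $|\delta(p_i)|$ is non-decreasing in $k$; hence the base cases $k=2$ (Median/$2$-Choices) and $k=3$ ($3$-Majority) of Remark \ref{obs:equivalence_slush} are the slowest, and every larger $k$ pushes towards consensus at least as hard at every state, which is what lets me transfer the known $\bigO(\log n)$ bound.

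The convergence argument itself splits according to the current bias $b_i := |S_i - \tfrac{n}{2}|$. In the \emph{symmetry-breaking} phase (near $p_i=\tfrac12$) the drift is negligible, but $\Var(\Delta_{i+1}) = \Theta(n)$ (the variance computation of Lemma \ref{lem:progress_bound_single_round}), so a second-moment / random-walk argument shows that within $\bigO(\log n)$ rounds the bias escapes the band $b_i \le \sqrt{n\log n}$ with high probability. In the \emph{amplification} phase, once $b_i=\Omega(\sqrt{n\log n})$ the expected drift $\E(\Delta_{i+1})=n\,\delta(p_i)$ exceeds the $\Theta(\sqrt n)$ standard deviation, and since $\delta(p_i)\ge c\,(p_i-\tfrac12)$ for a constant $c>0$ on $[\tfrac12,1-\epsilon]$ (positivity of $\delta'(\tfrac12)$ is read off the clean form, and is non-decreasing in $k$ by Lemma \ref{lem:delta_domination2}), a Chebyshev/Chernoff bound shows $b_i$ grows by a multiplicative factor $\ge 1+c$ per round, reaching a constant fraction of $n$ in $\bigO(\log n)$ rounds. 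In the \emph{completion} phase, expanding the clean form near $p_i=1$ gives $\delta(p_i)=(1-p_i)-\bigO\!\big((1-p_i)^2\big)$, so the minority fraction $q_i:=1-p_i$ obeys $q_{i+1}=\bigO(q_i^2)$ in expectation; with concentration the minority collapses to $\bigO(\sqrt n)$ within $\bigO(\log n)$ further rounds whp.

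Two further points close the argument. The $\sqrt n$-bounded adversary alters $S_i$ by at most $\sqrt n$ at the start of each round; since this matches the order of the natural $\Theta(\sqrt n)$ fluctuation already present, it merely shifts the additive error term in every phase by $\bigO(\sqrt n)$ and caps the attainable minority at $\Theta(\sqrt n)$, which is exactly why the guarantee is $n-\bigO(\sqrt n)$ parties in agreement rather than unanimity. Parties with initial opinion $\bot$ adopt a genuine opinion as soon as they observe an $\alpha$-majority, which happens whp during the symmetry-breaking phase and does not affect the asymptotics. I expect the main obstacle to be the symmetry-breaking step: the expected-progress lemmas of Section \ref{sec:slush_dynamics} control only $\E(\Delta_i)$, whereas escaping the balanced band in the presence of an adversary that can repeatedly re-balance the configuration requires a genuine anti-concentration/martingale argument showing the random $\Theta(\sqrt n)$ kicks accumulate faster than the adversary of magnitude $\sqrt n$ can cancel them. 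This is where the reduction via Lemmas \ref{lem:delta_even_odd}, \ref{lem:delta_domination2} and Remark \ref{obs:equivalence_slush} to the already-analyzed cases $k\in\{2,3\}$ is most valuable, since it lets me inherit this delicate step rather than re-deriving it for every $k$.
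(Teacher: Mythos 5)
Your proposal follows essentially the same route as the paper's own proof sketch: a phase-based drift analysis inherited from the Median/3-Majority literature \cite{Doerr2011,Becchetti2017}, reduced to the base cases $k\in\{2,3\}$ via Remark~\ref{obs:equivalence_slush} and lifted to general $k$ with Lemmas~\ref{lem:delta_even_odd} and~\ref{lem:delta_domination2}, with the $\sqrt{n}$-bounded adversary and the $\bot$ opinion handled just as the paper does. The only cosmetic differences are that you merge the paper's Phases 2 and 3 into a single amplification phase and note quadratic (rather than merely geometric) decay of the minority near consensus, both of which are consistent with the paper's argument.
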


\begin{proof}[Proof Sketch]
	Our proof rests on the structure of the according proof for the $k=\alpha=2$ and $(k,\alpha)=(3,2)$ from \cite{Doerr2011,Becchetti2017} for the binary case. The proof is divided into a constant number of phases. The phases range from the worst case where the distribution of opinions is roughly in an equilibrium, to the state of a stable consensus.
	We will show that it always takes at most $\bigO(\log n)$ rounds to arrive in the corresponding next phase, even when starting from the worst case of an equilibrium.
	Due to symmetry we assume $S_i \geq 0$ w.l.o.g. 
	We will at first make the main argument without considering the \smash{$\sqrt{n}$}-bounded adversary and the opinion $\bot$ (see Section \ref{sec:slush}) and argue why the proof holds for these cases at the end.
	
	\textbf{Phase 1: \smash{\boldmath$\tfrac{n}{2} \leq S_i \leq \tfrac{n}{2}  + c_1\sqrt{n \log n}$}.} To lift $S_i$ over the threshold of \smash{$\tfrac{n}{2}  + c_1\sqrt{n \log n}$} to the next phase, anti-concentration bounds are used. In particular, because $\alpha = \lceil \tfrac{k+1}{2}\rceil$ there is always a majority for either 0 or 1 in each query, thus we have that $X_{ij} \sim \mathcal B(q_i)$ is Bernoulli distributed with probability $q_i \approx p_i \approx \tfrac{1}{2}$ (since the system is close to balanced state). Hence, \smash{$S_i = \sum_{j=1}^{n} X_{ij}$} is a sum of independent, identically distributed Bernoulli variables. By the Central Limit Theorem, for sufficiently large $n$ the variable $S_i$ approximates a normal distribution with deviation of $\Omega\big(\!\sqrt{n}\big)$ around the expectation $\E(S_i) \approx \tfrac{n}{2}$. The property of the normal distribution implies that there is a constant probability for $|S_i - \tfrac{n}{2}|\geq   c\sqrt{n}$ in a single round. 
	Applying concentration (Chernoff) bounds and considering that by symmetry we are allowed to escape Phase 1 in either direction, one can ensure $|S_i -\tfrac{n}{2}| > c_1\sqrt{n \log n}$ after $\bigO(\log n)$ rounds w.h.p.\ (cf. \cite{Doerr2011} for more details).
	
	\textbf{Phase 2: \smash{\boldmath$\tfrac{n}{2}  + c_1\sqrt{n \log n} < S_i \leq \tfrac{n}{2}  +\tfrac{n}{c_2}$}.}
	We explain the idea in a style that is akin to the proof by \cite{Becchetti2017}, which shows the argument for $(k,\alpha)=(3,2)$ and then extend it to the general case.  Note that expected progress is  $\E(\Delta_i) = n \cdot \delta^{3,2}(p_i)$. The next step is to show that in the Phase 2 interval of $S_i$ (here $c_2=3$ in general, $c_2$ depends on $k, \alpha$) the function $\delta^{3,2}(p_i)$ can be lower bounded by a linear function of constant positive gradient through the point $(\tfrac{1}{2},0)$ (see Figure \ref{fig:delta_a} for a visual representation). This implies that in each round the expected progress increases linearly in the progress that was made in the previous round.
	Intuitively, this corresponds to a situation of ``compounding interest'' in expected progress $\E(\Delta_i)$ with each round $i$, i.e., exponential growth in $i$. Hence, it takes at most $\bigO(\log n)$ rounds until $\Psi_i > \tfrac{n}{c_2}$. 
	
	The caveat is, that this only works if we can guarantee that the system makes progress that is at least some constant fraction of the expected progress. Due to randomness the expected progress could be undershot or the system could even backslide towards the equilibrium. To show that this does not happen w.h.p., we exploit the intuition that in Phase 2 the system is already relatively far advanced into a majority, such that, the expected progress $\E(\Delta_i)$ exceeds the standard deviation of $\Delta_i$. Concretely, one can use concentration bounds and union bounds, to guarantee a constant fraction of the expected progress $\E(\Delta_i)$ w.h.p.\ for each round in this phase.
	
	By Lemma \ref{lem:delta_even_odd} we have that $\delta^{3,2}(p_i) = \delta^{2,2}(p_i)$, so the argument also extends to the 2-Choices protocol. Finally, Lemma \ref{lem:delta_domination2} shows that the expected progress $\delta^{k,\alpha}$ for $k \geq 4$ and $\alpha = \lceil \tfrac{k+1}{2}\rceil$ dominates that for $k \in \{2,3\}$, and consequently the argument above applies for Slush in general.
	
	\textbf{Phase 3: \smash{\boldmath$\tfrac{n}{2}  + \tfrac{n}{c_2} < S_i \leq \tfrac{n}{2}  +\tfrac{n}{c_3}$.}} This is arguably the simplest phase, since the system is a constant fraction of parties from both the equilibrium state and the consensus state. This implies that $\delta^{2,2}(p_i)$ ($= \delta^{3,2}(p_i)$ by Lemma \ref{lem:delta_even_odd}) can be lower bounded by a constant (cf.\ Figure \ref{fig:delta_a}), therefore a constant fraction of all parties will switch to the majority opinion in expectation. It is not hard to show that this also holds w.h.p. and by Lemma \ref{lem:delta_domination2}, also for any $k \geq 4$ and $\alpha = \lceil \tfrac{k+1}{2}\rceil$.
	
	\textbf{Phase 4: \smash{\boldmath$\tfrac{n}{2}  + \tfrac{n}{c_3} < S_i \leq n - c_4 \sqrt{n}$.}} Although the expected progress $\delta^{2,2}(p_i)$ tends to 0 as $p_i$ approaches 1 (see Figure \ref{fig:delta_a}), one can apply a similar argument as in Phase 2 but ``backwards''. In particular, in the corresponding interval of $S_i$ the expected progress given by $\delta^{2,2}(p_i)$ can be lower bounded by a linear function with constant negative slope through the point $(1,0)$ (see Figure \ref{fig:delta_a} for a visual confirmation). This implies that the expected number of parties holding the minority opinion shrinks by a constant fraction in each round (cf.\ \cite{Becchetti2017}). Thus it takes at most $\bigO(\log n)$ rounds until $S_i$ passes the threshold $n - c_4\sqrt{n}$ given that the progress is at least a constant fraction of the expected progress. As in Phase 2, the latter can be guaranteed w.h.p.\ using concentration bounds. The argument generalizes to $k \geq 3$ and $\alpha = \lceil \tfrac{k+1}{2}\rceil$ by Lemmas \ref{lem:delta_even_odd} and \ref{lem:delta_domination2}. 
	
	\textbf{\smash{\boldmath$\sqrt{n}$}-Bounded Adversary:} The optimal strategy of the adversary to avoid a consensus is to flip $\sqrt{n}$ parties of the majority to the minority opinion. The main argument is that the ability of adversary to influence $S_i$ is asymptotically not more than the standard deviation of $S_i$. In Phase 1 the random deviation from the expectation is a desired effect to lift the system out of an equilibrium and one can show that w.h.p., the given adversary can not inhibit this. In the subsequent phases, the random deviation is undesired as it may reduce the progress below its expectation. The intuition is that since the influence of the adversary is actually less than the standard deviation, we can deal with both the standard deviation and the effect of the adversary as before, by adjusting constants in the running time.
	
	\textbf{Dealing with parties with opinion \boldmath$\bot$:} It remains to argue that the asymptotic running time does not change if we introduce the special opinion $\bot$, which is relatively straight forward. Whenever a party $j_1$ with opinion $\bot$ receives a query for an opinion by another party $j_2$ then $j_1$ adopts $b$ (see Algorithm \ref{algo:slush}).	
	Let $n_{0,1}$ and $n_{\bot}$ be the number of parties that have opinion $0,1$ or $\bot$ respectively ($n = n_{0,1} + n_{\bot}$). As long as $n_{0,1} \leq n_{\bot}$ it can be shown that $n_{0,1}$ doubles every $\bigO(1)$ rounds w.h.p. Similarly, if $n_{0,1} \geq n_{\bot}$, the number of parties halves every $\bigO(1)$ rounds w.h.p. Ultimately, this implies that the opinion $\bot$ will die out after $\bigO(\log n)$ rounds.
\end{proof}

We can translate the above result into the notion of concentration with all but negligible probability (Definition \ref{def:all_but_neglible_probability}) by adding a factor of $\beta$ to the running time that gives more control over the level of security in in particular for small $n$ (see Lemma \ref{lem:whp_abn} and Remark \ref{rem:whp_abn} for the details).  Specifically, the corollary conforms to Definition \ref{def:negligible}, as the runtime is polynomial in $\beta$.

\begin{corollary}
	\label{cor:slush_upper_bound}
	Let $k\geq 2$ and $\alpha = \lceil \tfrac{k+1}{2}\rceil$. Then Slush reaches a stable consensus  in $\bigO(\log n + \beta)$ rounds with all but negligible probability (with respect to $\beta$), even in the presence of a \smash{$\sqrt{n}$}-bounded adversary.	
\end{corollary}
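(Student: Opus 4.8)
The plan is to deduce the corollary from Theorem \ref{thm:slush_upper_bound} by a generic boosting argument that upgrades a with-high-probability guarantee to the negligible-probability guarantee of Definition \ref{def:negligible}. Throughout, the security parameter is $\lambda = \beta$ and the protocol instance (in particular $n$) is fixed, so $\log n$ is a constant offset and the target running time $\rho(\beta) = O(\log n + \beta)$ is genuinely a degree-one polynomial in $\beta$, as Definition \ref{def:negligible} demands. Let $E$ be the event that Slush is in a state of stable consensus (Definition \ref{def:consensus_state}). The goal is to exhibit this single $\rho$ so that for every polynomial $\pi$ there is a $\beta_0$ with $\P(\overline E) \le 1/\pi(\beta)$ whenever $\beta \ge \beta_0$ and Slush has run for $\rho(\beta)$ rounds.

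First I would read off from (the proof of) Theorem \ref{thm:slush_upper_bound} the following per-window statement: started from an arbitrary configuration and even against the $\sqrt n$-bounded adversary, a single window of $T_0 = O(\log n)$ rounds brings Slush into stable consensus with probability at least a constant $1-q$, $q<1$. For large $n$ this probability is actually $1 - n^{-c}$, but only a constant lower bound is needed here, and such a bound holds for the fixed instance because from any configuration there is a positive, bounded-below chance of reaching an almost-unanimous configuration within $O(\log n)$ rounds.

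Second, I would partition the $\rho(\beta) = O(\log n + \beta)$ available rounds into $m = \Theta(\beta/\log n) = \Theta(\beta)$ consecutive windows of length $T_0$ (here $\log n$ is a constant). Treating each window as a fresh attempt from whatever, possibly adversarial, configuration it inherits, the per-window statement gives that each window ends in stable consensus with probability at least $1-q'$ for a constant $q'<1$, where $q'$ also absorbs the small chance of drifting out of the consensus set before the window ends. A short Markov-style recursion then yields $\P(\overline E \text{ at round } \rho(\beta)) \le (q')^{m} + O(\delta)$, with $\delta$ bounding the per-round probability of leaving the consensus set. Since $m = \Theta(\beta)$, the term $(q')^{m}$ is $2^{-\Theta(\beta)}$ and hence eventually falls below $1/\pi(\beta)$ for every polynomial $\pi$, which is exactly the conclusion of Definition \ref{def:negligible}. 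This is the conversion packaged in Lemma \ref{lem:whp_abn} and Remark \ref{rem:whp_abn}.

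The main obstacle is controlling the persistence of consensus over the $\Theta(\beta)$ extra rounds, i.e.\ the baseline term $O(\delta)$: the small per-round chance of drifting back out of the consensus set could in principle accumulate, and for a fixed small $n$ it is not negligible in $n$. The clean resolution is to take the target to be the full-consensus set $S_i \in \{0,n\}$, which in the binary case is genuinely absorbing, since with $\alpha > k/2$ no party ever switches away from a unanimous sample. Combining Theorem \ref{thm:slush_upper_bound} with the observation that the remaining $O(\sqrt n)$ minority parties are absorbed into full consensus within a further $O(\log n)$ rounds with constant probability, each window reaches this absorbing set with constant probability; the process is then a finite chain whose probability of remaining unabsorbed after $t$ rounds decays geometrically in $t$ at the rate set by the sub-dominant eigenvalue of its transient part, and taking $t = \Theta(\beta)$ gives the negligible-in-$\beta$ bound under any adversary strategy, with $\sqrt n$-robustness inherited from Theorem \ref{thm:slush_upper_bound}.
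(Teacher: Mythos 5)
Your first two paragraphs are essentially the paper's own argument: Theorem \ref{thm:slush_upper_bound} gives a per-window success probability of at least $1-1/n$ starting from an arbitrary (even adversarially prepared) configuration, and repeating over $\Theta(\beta/\log n + 1)$ windows drives the failure probability down to $e^{-\Omega(\beta)}$, which is negligible in $\beta$; this is exactly the conversion packaged in Lemma \ref{lem:whp_abn} and Remark \ref{rem:whp_abn} that the paper cites as the proof of the corollary. Had you stopped there, reading $E$ as the monotone hitting event ``Slush has at some point reached a state of stable consensus,'' the proof would be complete: the corollary, like Lemma \ref{lem:whp_abn} (which speaks of $E$ occurring \emph{within} an interval of rounds), is a hitting-time statement, so once $E$ has occurred it cannot un-occur and there is no $O(\delta)$ drift term to control at all. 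This is also how the corollary is used later: where persistence of near-consensus is actually needed (for Blizzard), the paper proves it separately via Lemma \ref{lem:stability} rather than folding it into this corollary.

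The genuine flaw is in your third paragraph, where you instead read $E$ as ``the system is in stable consensus at round $\rho(\beta)$'' and then try to kill the drift term by retargeting to full consensus $S_i \in \{0,n\}$ and declaring that set absorbing. Under the $\sqrt{n}$-bounded adversary that the corollary explicitly includes (Definition \ref{def:adversary}), this is false: the adversary may reset the opinions of $\sqrt{n}$ undecided parties at the beginning of \emph{every} round, so $\{0,n\}$ is not absorbing, and in fact the system can be prevented from ever residing in it; your claim that the remaining $O(\!\sqrt n)$ minority parties are ``absorbed into full consensus within a further $O(\log n)$ rounds'' fails for the same reason. The spectral argument is also not available, since with an adaptive adversary the process is not a time-homogeneous finite Markov chain. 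The correct repair is not absorption but either the hitting-event reading above (which needs nothing further), or, if one insists on a state-at-the-end statement, a quantitative stability bound of the kind the paper proves in Lemma \ref{lem:stability}, which shows the majority survives a prescribed number of rounds with controlled (not zero) backsliding.
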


\section{Dynamics of Snowflake and Snowball}
\label{sec:snowball_dynamics}

In this section we are going to extend the lower bound for Slush derived in Section \ref{sec:slush_bounds} to the Snowflake and Snowball protocols.
Note that the quantities $S_i, p_i, \Delta_i, \delta_i$ can be defined the same as in Slush, see Section \ref{sec:slush_dynamics}, since the variables only depend on the opinion attribute of parties, which is present in all three protocols. However, the actual (expected) changes of these quantities in this section can and will differ from those in Slush.
We will denote these quantities with superscripts (slush, flake, ball) in case we compare them over protocols (but avoid this whenever possible). Moreover, we condition the results of this section on the assumption that no node decides (finalizes) their opinion, before the system reaches a stable majority and consider the repercussions at the end.

Snowball, as explained in Section~\ref{sec:snowball}, augments the consensus mechanism from Slush with the concept of \emph{confidence} associated to the current value, which influences the decision of a party to change its opinion. In a nutshell, a node changes opinion in the Snowball protocol when the cumulative number of queries with majority for the new opinion exceeds that for the old opinion. 
In the Slush protocol, the variable $S_i$ was sufficient to describe the expected progress required to predict the evolution of the system, which is not the case in Snowball anymore, since aforementioned confidence levels play a crucial role.

\begin{definition}
	Define the set $L_i^{c}$ to be the set of parties in round $i$ such that $\op{cnt}(1)-\op{cnt}(0)=c$ (where $\op{cnt}(b)$ is the number of queries of a given party that had a majority of opinion $b$). We further divide the set $L_i^0$ in two subsets $L_i^{0,0}$ and $L_i^{0,1}$. Parties in $L_i^0$ ($L_i^{0,1}$) that have opinion $0$ ($1$) belong to $L_i^{0,0}$ ($L_i^{0,1}$).
\end{definition}

	The variable $S_i$ can be reconstructed as follows: \smash{$S_i=|L_i^{0,1}|\p\sum_{c>0}|L_i^c|$}.
	Given a round $i>0$, the set of parties \CN is contained in $\bigcup_{c=-i}^i L_i^{c}$ as the end of round $i$, 
	since every party performed $i$ queries by the end of round $i$, thus the difference in counts $c$ is bounded between $-i$ and $i$.

\begin{remark}
	\label{remark:evo}
	Consider the collection $\CL_i:=\{L_i^{c}\}_{c=-i}^i \cup \{L_i^{0,0},L_i^{0,1}\}$ of disjoint sets. The evolution of the system in the next round $i+1$ can now be described using this set $\CL_i$.
	After a query is performed a party in $L_{i}^c$ moves to set $L_{i}^{c+1}$ if the query had a majority for $1$, to $L_{i}^{c-1}$ if the query had a majority for $0$, or $L_{i}^{c}$ if the query had no majority.
	The only parties that can change value after round $i$ are the parties contained in $L_i^{0}$.
\end{remark}

\begin{definition}
	\label{def:sbdelta}
	For $i\geq 1$, we define the absolute progress as $\Delta_i:=S_i-S_{i-1}=|L_{i}^{0,1}|- |L_{i-1}^{0,1}|+\sum_{c=1}^{i}(|L_i^{c}|-|L_{i-1}^{c}|)$, i.e., the number of parties with $1$ in their view in round $i$ minus the number of parties with $1$ in their view in round $i-1$. As before, we define the expected relative progress as $\delta_i:=\E(\Delta_i)/n$.
\end{definition}

The following Lemma shows that in Snowball, $\Delta_i$ is only affected by parties migrating from $L_{i-1}^{0,1}$ or $L_{i-1}^{0,0}$ in round $i\m1$ to $L_i^{-1}$ or $L_i^{1}$ in round $i$, respectively.

\begin{restatable}{lemma}{lemmasbprogress}
	\label{lemma:sbprogress}
	The absolute progress can be expressed as $\Delta_i=|\Lambda_i^1|-|\Lambda_i^0|$, where
	$\Lambda_i^{0} := L_i^{-1}\cap L_{i-1}^{0,1}$ and $\Lambda_i^{1} := L_i^{1}\cap L_{i-1}^{0,0}$.
\end{restatable}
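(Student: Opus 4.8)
The plan is to establish the stronger accounting identity $S_i = S_{i-1} + |\Lambda_i^1| - |\Lambda_i^0|$ by tracking exactly which parties contribute an opinion $1$ to $S_i$, and then subtract $S_{i-1}$. The conceptual core is that, since every query shifts a party's count $\op{cnt}(1)-\op{cnt}(0)$ by at most one, the partition of parties into the ``opinion-$0$ region'' (counts $\leq -1$ together with $L_i^{0,0}$) and the ``opinion-$1$ region'' (counts $\geq 1$ together with $L_i^{0,1}$) can only be crossed at the boundary level $c=0$.

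First I would invoke the evolution rules of Remark \ref{remark:evo} to pin down the two boundary crossings precisely. A party contributes to $S_i$ exactly when its round-$i$ count is positive, or is zero with retained opinion $1$. Using the Snowball switching rule (a party abandons opinion $b$ only once the count for the opposite value \emph{strictly} surpasses that for $b$), I would check the delicate returns to level $0$: a party in $L_{i-1}^1$ that sees a $0$-majority drops to count $0$ but keeps opinion $1$, hence lands in $L_i^{0,1}$; symmetrically a party in $L_{i-1}^{-1}$ that sees a $1$-majority rises to count $0$ but keeps opinion $0$, landing in $L_i^{0,0}$. Consequently the only transition that moves a party from the opinion-$0$ region into the opinion-$1$ region is $L_{i-1}^{0,0} \to L_i^1$ (a $1$-majority), which is exactly $\Lambda_i^1$, and the only transition in the opposite direction is $L_{i-1}^{0,1}\to L_i^{-1}$ (a $0$-majority), which is exactly $\Lambda_i^0$.

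Next I would partition the opinion-$1$ parties of round $i$ into those inherited and those freshly switched. Every party that had opinion $1$ in round $i-1$ retains it in round $i$ unless it lies in $\Lambda_i^0$ (the only opinion-$1$ parties that can flip are those in $L_{i-1}^{0,1}$, and exactly the $0$-majority ones among them do flip). Hence the inherited contribution equals $S_{i-1} - |\Lambda_i^0|$, and the only parties newly acquiring opinion $1$ are those in $\Lambda_i^1$, contributing $|\Lambda_i^1|$. Summing gives $S_i = S_{i-1} - |\Lambda_i^0| + |\Lambda_i^1|$, and subtracting $S_{i-1}$ yields $\Delta_i = |\Lambda_i^1| - |\Lambda_i^0|$. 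Equivalently, one can expand the telescoping expression for $\Delta_i$ from Definition \ref{def:sbdelta} by writing each $|L_i^c|$ as the sum of the incoming migrations from $L_{i-1}^{c-1}$ (a $1$-majority), $L_{i-1}^{c+1}$ (a $0$-majority) and $L_{i-1}^c$ (no majority); the cross-level flows then cancel between adjacent terms of the sum, leaving only the two boundary flows $L_{i-1}^{0,0}\to L_i^1$ (which is $|\Lambda_i^1|$) and $L_{i-1}^{0,1}\to L_i^{-1}$ (which is $|\Lambda_i^0|$).

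The main obstacle is the careful bookkeeping at the boundary $c=0$: one must correctly attribute the parties whose count returns to $0$ (from $c=\pm1$) to the right half of the split $L_i^{0,0}/L_i^{0,1}$ according to the retained opinion, since a naive count of ``parties reaching level $1$ minus parties reaching level $-1$'' would miscount these returning parties. Ensuring that these returning flows stay within their own opinion region, and therefore do not contribute to $\Delta_i$, is precisely the step that makes the net change collapse to the two boundary terms.
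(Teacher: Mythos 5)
Your proof is correct, and all the delicate points check out: the strict-surpass rule in Snowball does mean that a party returning to count $0$ from $L_{i-1}^{1}$ (resp.\ $L_{i-1}^{-1}$) lands in $L_i^{0,1}$ (resp.\ $L_i^{0,0}$) without flipping, so the only opinion-region crossings are exactly $\Lambda_i^1$ and $\Lambda_i^0$. Your main route is, however, organized differently from the paper's. The paper works directly with the expanded expression $\Delta_i=|L_{i}^{0,1}|-|L_{i-1}^{0,1}|+\sum_{c=1}^{i}\bigl(|L_i^{c}|-|L_{i-1}^{c}|\bigr)$ from Definition~\ref{def:sbdelta} and performs a per-party case analysis (on whether $j$ sits in $L_{i-1}^c$ with $c<0$, $c>0$, in $L_{i-1}^{0,0}$, or in $L_{i-1}^{0,1}$), verifying that each party's signed contribution to that sum cancels except for the two boundary flows; your primary argument instead proves the conservation identity $S_i=(S_{i-1}-|\Lambda_i^0|)+|\Lambda_i^1|$ by partitioning the round-$i$ opinion-$1$ population into inherited and newly flipped parties, and only then subtracts $S_{i-1}$. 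The two arguments rest on exactly the same structural facts (Remark~\ref{remark:evo} plus the boundary behavior at $c=0$), but yours is arguably cleaner: it makes the ``flow across the opinion boundary'' intuition explicit and avoids the mechanical sign-tracking in the telescoping sum, whereas the paper's version has the virtue of verifying the stated decomposition of $\Delta_i$ term by term. Note also that your closing ``equivalently'' remark --- expanding each $|L_i^c|$ into incoming migrations and cancelling adjacent flows --- is essentially the paper's proof, so you have in effect given both arguments.
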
	

\begin{proof}
	Recall $\Delta_i=|L_{i}^{0,1}|- |L_{i-1}^{0,1}|+\sum_{c=1}^{i}(|L_i^{c}|-|L_{i-1}^{c}|)$ and consider a party $j$. We distinguish the following cases:
	\begin{itemize}
		\item If $j\in L_{i-1}^c$ for $c<0$, then Remark~\ref{remark:evo} guarantees that $j\in L_{i}^{c-1} \cup L_{i}^{c}\cup L_{i}^{c+1}\cup L_i^{0,0}$. None of the previous sets are involved in the definition of $\Delta_i$, thus the value of $\Delta_i$ is independent from party $j$.
		\item If $j\in L_{i-1}^c$ for $c>0$, then Remark~\ref{remark:evo} guarantees that $j\in L_{i}^{c-1} \cup L_{i}^{c}\cup L_{i}^{c+1}\cup L_i^{0,1}$. In the definition of $\Delta_i$, the terms $|L_i^{c^\prime}|$ for $c^\prime\in\{c-1,c,c+1, \{0,1\}\}$ and $|L_{i-1}^{c}|$ appear with opposite signs. Thus, the contribution of party $j$ cancels out.  
		\item If $j\in L_{i-1}^{0,0}$, then Remark~\ref{remark:evo} guarantees that $j\in L_{i}^{-1} \cup L_{i}^{0,0}\cup L_{i}^{1}$. If $j\in L_{i}^{-1}$ or $j\in L_{i}^{0,0}$, the terms terms $L_{i-1}^{0,0}$, $j\in L_{i}^{0,0}$, and $ L_{i}^{-1}$ do not appear in the definition of $\Delta_i$. If $j\in L_{i}^{1}$, then $j\in L_{i-1}^{0,0}\cap L_{i}^{1}=\Lambda_i^1$ and $j$ contributes with $+1$ to $\Delta_i$.
		\item If $j\in L_{i-1}^{0,1}$, then Remark~\ref{remark:evo} guarantees that $j\in L_{i}^{-1} \cup L_{i}^{0,0}\cup L_{i}^{1}$. If $j\in L_{i}^{1}$ or $j\in L_{i}^{0,1}$, the term $ L_{i-1}^{0,1}$ appears with coefficient $-1$, whereas the terms $j\in L_{i}^{1}$ and $j\in L_{i}^{0,1}$ appear with coefficient $1$. Thus, their contribution cancel out. If $j\in L_i^{-1}$, then $j\in L_{i-1}^{0,1}\cap L_i^{-1}=\Lambda_i^0$ the term $j\in L_{i-1}^{0,1}$ appears with coefficient $-1$, whereas the term $j\in L_i^{-1}$ does not appear. 
	\end{itemize}
	We conclude that $\Delta_i=|\Lambda_i^1|-|\Lambda_i^0|$.
\end{proof}

An interesting interpretation of Lemma~\ref{lemma:sbprogress} if the following. Since the parties contained in the set $L_i^0$ are the only parties that can change their opinion,  the expected progress of Snowball in a given round is the same as the expected progress of Slush restricted to the parties in $L_i^0$. We formalize this intuition in the following lemma.

\begin{restatable}{lemma}{lemmacomparison}
	\label{lemma:comparison}
	The expected absolute progress of Snowball is at most as high as in Slush, i.e., $\delta^\var{ball}_i\leq\delta^\var{slush}_i$.
\end{restatable}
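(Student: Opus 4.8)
The plan is to compare the expected absolute progress of Snowball and Slush by focusing on the only parties that can change their opinion, namely those in $L_i^0$ (cf.\ Remark~\ref{remark:evo} and the interpretation following Lemma~\ref{lemma:sbprogress}). By Lemma~\ref{lemma:sbprogress} we have $\Delta_i^\var{ball}=|\Lambda_i^1|-|\Lambda_i^0|$, where $\Lambda_i^1=L_i^1\cap L_{i-1}^{0,0}$ collects the parties that switch $0\to1$ and $\Lambda_i^0=L_i^{-1}\cap L_{i-1}^{0,1}$ collects those that switch $1\to0$. The first step is to express $\E(\Delta_i^\var{ball})$ in terms of the sizes of $L_{i-1}^{0,0}$ and $L_{i-1}^{0,1}$, multiplied by the per-party switching probabilities. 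Since a single query is Bernoulli/binomial distributed exactly as in Slush, a party in $L_{i-1}^{0,0}$ (opinion $0$) moves to $L_i^1$ precisely when it observes an $\alpha$-majority for $1$, i.e.\ with probability $\P(Y_{ij}\geq\alpha)$, and symmetrically a party in $L_{i-1}^{0,1}$ switches with probability $\P(Y_{ij}\leq k-\alpha)$. These are exactly the conditional switching probabilities computed in the proof of Lemma~\ref{lem:delta}.

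\textbf{Deriving the Snowball progress.} Writing $a:=|L_{i-1}^{0,0}|/n$ and $b:=|L_{i-1}^{0,1}|/n$, I expect to obtain
\[
\delta_i^\var{ball}=a\cdot\P(Y_{ij}\geq\alpha)-b\cdot\P(Y_{ij}\leq k-\alpha)
= a\sum_{\ell=\alpha}^k\binom{k}{\ell}p_i^\ell(1\m p_i)^{k-\ell}-b\sum_{\ell=\alpha}^k\binom{k}{\ell}(1\m p_i)^\ell p_i^{k-\ell}.
\]
By contrast, the Slush progress from Lemma~\ref{lem:delta} treats \emph{all} parties as eligible to switch, giving $\delta_i^\var{slush}$ with the coefficients $a,b$ effectively replaced by the full shares $1\m p_i$ (parties with opinion $0$) and $p_i$ (parties with opinion $1$). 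The key structural fact to establish is that $a\leq 1\m p_i$ and $b\leq p_i$, which is immediate because $L_{i-1}^{0,0}\subseteq\{\text{parties with opinion }0\}$ and $L_{i-1}^{0,1}\subseteq\{\text{parties with opinion }1\}$; the confidence mechanism only \emph{restricts} which parties are allowed to flip.

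\textbf{Concluding the inequality.} The final step is a sign argument analogous to Lemma~\ref{lem:delta_domination}. I would split into the cases $p_i\geq\tfrac12$ and $p_i\leq\tfrac12$. For $p_i\geq\tfrac12$ the first (positive) summand $\P(Y_{ij}\geq\alpha)$ dominates and shrinking its coefficient from $1\m p_i$ down to $a$ can only decrease $\delta_i$, while shrinking the coefficient of the subtracted (negative) term from $p_i$ to $b$ only increases it; so care is needed about which direction helps. The clean way is to observe that replacing $(a,b)$ by $(1\m p_i,\,p_i)$ changes $\delta_i^\var{ball}$ into $\delta_i^\var{slush}$, and since $p_i\geq\tfrac12$ means the net Slush progress is nonnegative and points toward $1$, the restricted Snowball version makes weakly less progress in absolute value; the symmetric case $p_i\leq\tfrac12$ follows by the point-symmetry of $\delta$ about $(\tfrac12,0)$. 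The main obstacle I anticipate is handling this sign bookkeeping carefully so that the inequality $\delta_i^\var{ball}\leq\delta_i^\var{slush}$ holds as stated (a \emph{signed} inequality, toward $1$-consensus under the w.l.o.g.\ assumption $S_i\geq\tfrac n2$), rather than merely $|\delta_i^\var{ball}|\leq|\delta_i^\var{slush}|$; I would make the w.l.o.g.\ direction explicit and invoke symmetry for the other half to keep the argument rigorous.
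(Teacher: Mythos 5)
Your setup is correct and in fact more careful than what the paper writes down: by Lemma~\ref{lemma:sbprogress} and the switching probabilities from Lemma~\ref{lem:delta}, the expected relative progress of Snowball is indeed
\[
\delta_i^{\var{ball}} \;=\; a\,\P(Y_{ij}\geq\alpha)\;-\;b\,\P(Y_{ij}\leq k\m\alpha),
\qquad a=\tfrac{|L_{i-1}^{0,0}|}{n},\quad b=\tfrac{|L_{i-1}^{0,1}|}{n}.
\]
The gap is in your concluding step. From $a\leq 1\m p_i$ and $b\leq p_i$ alone, the inequality $\delta_i^{\var{ball}}\leq\delta_i^{\var{slush}}$ does \emph{not} follow, and no sign bookkeeping can rescue it, because the two coefficient bounds pull in opposite directions: lowering $a$ decreases $\delta_i^{\var{ball}}$, but lowering $b$ \emph{increases} it, since the term it multiplies is subtracted. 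Concretely, consider a state with $p_i\geq\tfrac12$ in which every opinion-$0$ party has equal counters (so $a=1\m p_i$) while every opinion-$1$ party has $\op{cnt}(1)>\op{cnt}(0)$ (so $b=0$). Such states are reachable with positive probability: with $k$ odd, let every initial $1$-party see $1$-majorities in rounds $1$ and $2$, and every initial $0$-party see a $0$-majority in round $1$ and a $1$-majority in round $2$ (the latter ties its counters without flipping it). In that state
\[
\delta_i^{\var{ball}} \;=\; (1\m p_i)\,\P(Y_{ij}\geq\alpha)
\;>\;(1\m p_i)\,\P(Y_{ij}\geq\alpha)-p_i\,\P(Y_{ij}\leq k\m\alpha)\;=\;\delta_i^{\var{slush}},
\]
strictly, whenever $p_i\,\P(Y_{ij}\leq k\m\alpha)>0$. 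Your sentence asserting that the restricted Snowball version ``makes weakly less progress'' is precisely the claim that needs proof, and at the level of generality you set up it is false.

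The paper avoids this at exactly the point where your argument breaks, by never keeping $a$ and $b$ as free parameters: it asserts that, restricted to $L_{i-1}^0$, the expected rate of progress equals the Slush rate $\delta(p_i)$ --- implicitly treating $L_{i-1}^0$ as having the same opinion composition as the whole network, i.e., $a:b=(1\m p_i):p_i$ --- and then concludes
$\E(\Delta_i^{\var{ball}})=|L_i^0|\cdot\delta(p_i)=\tfrac{|L_i^0|}{n}\,\E(\Delta_i^{\var{slush}})\leq\E(\Delta_i^{\var{slush}})$,
where the last inequality additionally uses $\delta(p_i)\geq0$, i.e., the w.l.o.g.\ orientation $p_i\geq\tfrac12$ that you correctly identified as necessary. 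So to complete your route you would have to either adopt and justify that proportionality assumption on $L_{i-1}^0$ (this is where the real content of the comparison lies), or derive the scaled identity $\E(\Delta_i^{\var{ball}})=\tfrac{|L_{i-1}^0|}{n}\E(\Delta_i^{\var{slush}})$ under it; the containments $L_{i-1}^{0,0}\subseteq\{j: X_{i-1,j}=0\}$ and $L_{i-1}^{0,1}\subseteq\{j: X_{i-1,j}=1\}$ by themselves are provably insufficient.
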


\begin{proof}
	Lemma~\ref{lemma:sbprogress} states that only parties in $L_{i-1}^0$ can modify the state $S_i$. Given a party $j\in L_{i-1}^{0,0}$ (respectively $j\in L_{i-1}^{0,1}$)
	The probability that $j$ changes to value $1$ (respectively $0$) is given by 
	\begin{align*}
		\P\big(X_{ij} & = 1 \mid j\in L_{i-1}^{0,0}\big) = \P\big(Y_{ij} \geq \alpha \big)= \sum_{\ell = \alpha}^{k} \binom{k}{\ell} p_i^\ell (1 \m p_i)^{k-\ell}.\\
		\P\big(X_{ij} & = 0 \mid j\in L_{i-1}^{0,1}\big) = \P\big(Y_{ij} \leq k-\alpha \big) 
		= \sum_{\ell = 0}^{k-\alpha} \binom{k}{\ell} p_i^\ell (1 \m p_i)^{k-\ell},
	\end{align*}
	where the quantities $X_{ij}, Y_{ij}$ are defined the same as in Section \ref{sec:slush_dynamics}, i.e., we deal with exactly the same probabilities as in Lemma~\ref{lem:delta}. 
	Consequently, we are able to apply Lemma \ref{lem:delta} and conclude that the expected rate of progress is $\delta^{\text{ball}}_i = \delta(p_i)$ when restricted to $L_{i-1}^0$.
	
	The latter is an important caveat, since what changes in Snowball is the set of parties on which the expected rate of progress is applied. Then the way the expected absolute progress of Snowball and Snowflake relate to each other is given as follows
	\begin{equation*}
		\E(\Delta^\text{ball}_i)
		= |L_i^0|\cdot\delta^\text{ball}_i
		= |L_i^0|\cdot\delta(p_i)
		= |L_i^0|\cdot\delta^\text{slush}_i
		= \tfrac{|L_i^0|}{n}\cdot\E(\Delta^\text{slush}_i)
		\stackrel{|L_i^0|\leq n}{\leq}\E(\Delta^\text{slush}_i).
	\end{equation*}
	Dividing by $n$ on both sides and using $\delta_i^{\text{ball}} = \frac{\E(\Delta^\text{ball}_i)}{n}$, $\delta_i^{\text{slush}} = \frac{\E(\Delta^\text{slush}_i)}{n}$ (see start of Section \ref{sec:slush_dynamics}), yields the desired result.	
\end{proof}	

Lemma~\ref{lemma:comparison} states that the expected progress of the Snowball protocol is upper-bounded by the expected progress of the Slush protocol. We conclude that the expected number of rounds even to reach majority of a constant fraction of nodes of one opinion of the Snowball protocol is lower-bounded by the Slush protocol, if no node decides prematurely. Note that the same is clearly true for as Snowflake which is essentially equal to Slush if  no node decides prematurely.

\begin{corollary}[cf.\ Corollary \ref{cor:slush_lower_bound}]
	\label{cor:ball_lower_bound}
	For $k \geq 2$ and any $\tfrac{k}{2} < \alpha \leq k$, Snowball and Snowball take \smash{$\Omega\big( \tfrac{\log n}{\log k}\big)$} rounds in expectation to reach a state state of stable consensus for any constant $\gamma \geq 2$, assuming that nodes do not decide before such a state is reached.
\end{corollary}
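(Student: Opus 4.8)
The plan is to reduce the lower bound for both protocols to the one already established for Slush (Theorem~\ref{thm:slush_lower_bound} and its expectation version Corollary~\ref{cor:slush_lower_bound}), handling Snowflake by a direct distributional equivalence and Snowball through the progress comparison of Lemma~\ref{lemma:comparison}. Throughout I would condition on the stated assumption that no party decides before a stable consensus (Definition~\ref{def:consensus_state}) is reached, so that the decision rules of the two protocols never interfere with the opinion dynamics.

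For Snowflake the argument is immediate. As long as no party has decided, a party updates its opinion exactly as in Slush, namely by adopting any observed $\alpha$-majority; the confidence counter that Snowflake maintains governs only the termination condition and never the opinion itself. Consequently the random variables $S_i$, $p_i$ and $\Delta_i$ are distributed identically to their Slush counterparts, and the lower bound of Theorem~\ref{thm:slush_lower_bound} (hence Corollary~\ref{cor:slush_lower_bound}) transfers verbatim.

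For Snowball I would re-run the Slush lower-bound machinery, feeding in Lemma~\ref{lemma:comparison} wherever the expected progress is used. Concretely, I would reprove the single-round bound of Lemma~\ref{lem:progress_bound_single_round} by conditioning on the configuration $\CL_i$ at the start of round $i\p1$. On the expectation side, Lemma~\ref{lemma:comparison} (combined with the derivative bound from Lemma~\ref{lem:bound_derivative} already exploited for Slush) gives $\E(\Delta_{i+1}^{\text{ball}}\mid\CL_i)\le\E(\Delta_{i+1}^{\text{slush}})\le(k\m1)(S_i\m\tfrac{n}{2})$. On the variance side, Lemma~\ref{lemma:sbprogress} writes $\Delta_{i+1}^{\text{ball}}=|\Lambda_{i+1}^1|-|\Lambda_{i+1}^0|$; since $\Lambda_{i+1}^0$ and $\Lambda_{i+1}^1$ are disjoint, each party contributes a single term in $\{-1,0,1\}$, and conditioned on $\CL_i$ the per-party queries are independent, so $\Var(\Delta_{i+1}^{\text{ball}}\mid\CL_i)\le n$. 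Chebyshev then yields the same single-round tail bound as for Slush, the multi-round induction and union bound of Lemma~\ref{lem:progress_bound_multi_round} go through unchanged, and the conclusion of Theorem~\ref{thm:slush_lower_bound} holds for Snowball as well.

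Finally I would pass from the probabilistic statement to the expected-time bound exactly as in Corollary~\ref{cor:slush_lower_bound}: reaching a stable consensus requires a majority of at least $\tfrac{n}{2}+\tfrac{n}{\gamma}$ parties, and by the transferred Theorem~\ref{thm:slush_lower_bound} this fails to occur within $\Omega\big(\tfrac{\log n}{\log k}\big)$ rounds with constant probability, which forces the expected number of rounds to be $\Omega\big(\tfrac{\log n}{\log k}\big)$. I expect the main obstacle to be precisely the Snowball variance step: one must verify that the confidence-counter bookkeeping introduces no correlations beyond the current round, i.e., that conditioned on $\CL_i$ the switching events of distinct parties are genuinely independent and each bounded, so that the bound $\Var\le n$ underlying the Chebyshev argument survives despite Snowball's richer state space compared to Slush.
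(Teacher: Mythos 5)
Your proposal is correct and takes essentially the same route as the paper: Snowflake is dispatched by noting its opinion dynamics coincide with Slush as long as no party has decided, and Snowball by using Lemma~\ref{lemma:sbprogress} and Lemma~\ref{lemma:comparison} to dominate its expected per-round progress by that of Slush and then transferring the lower bound of Theorem~\ref{thm:slush_lower_bound} (hence Corollary~\ref{cor:slush_lower_bound}). The only difference is that you explicitly re-run the conditional-expectation/variance and Chebyshev steps behind Lemma~\ref{lem:progress_bound_single_round} for Snowball's richer state space, a detail the paper leaves implicit when it asserts the transfer, so this is a more careful rendering of the same argument rather than a different one.
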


Note that since the decision mechanism in Snowflake and Snowball implies that no node can decide before $\beta$ rounds have passed the corollary implies a lower bound of \smash{$\Omega\big(\min\big( \tfrac{\log n}{\log k}, \beta\big)\big)$} rounds. Furthermore, we will see in Section \ref{sec:security_ball_flake} that the dependence of the runtime on $\beta$ behaves much worse than $\Omega(\beta)$ as the adversary can exploit the decision mechanism to delay a decision \emph{super-polynomially} in $\beta$.

\section{Security of Snowflake and Snowball}
\label{sec:security_ball_flake}

We show that in Snowflake and Snowball, has a vulnerability towards an adversary that intends to delay consensus (as defined in Definition~\ref{def:consensus}). In particular, there might is an unfavorable trade-off between confidence of success with and latency.
In particular, the mechanic that Snowflake and Snowball protocols use introduces a security parameter $\beta$ to control the probability of failure of obtaining a consensus (according to Definition \ref{def:consensus}). 
We show that this mechanism to make decision allows an adversary to delay the decision of any given party when using the consensus mechanisms of Snowflake and Snowball for a super-polynomial number of rounds in $\beta$.
This is independent of the current state of the system, i.e., the claim is true even if the system is in a state of a stable consensus (see Definition \ref{def:consensus_state}) and is true for a weaker notion of the $F$-bounded adversary from Definition \ref{def:adversary} for a small $F$.

\begin{definition}
	\label{def:weak_bounded_adversary}
	A weak $F$-bounded adversary controls up to $F$ \emph{undecided} parties whose state (opinion) it can set once each round. We call these \emph{influenced parties} and in particular we assume that the adversary can reset any decision on some opinion made by those.
\end{definition}

We start by giving a lower bound for the probability that some party samples a majority of influenced parties. 

\begin{restatable}{lemma}{lemsinglepartyprobability}
	\label{lem:single_party_probability}
	The probability that a random sample of $k$ parties contains at least $\alpha$ that are influenced by a weak $F$-bounded adversary is at least $\big(\frac{F}{n}\big)^k$.
\end{restatable}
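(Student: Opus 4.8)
The plan is to exploit that the event in question contains a much simpler sub-event---namely that \emph{every} one of the $k$ sampled parties is influenced---and to lower bound its probability directly, discarding all the other ways of obtaining an $\alpha$-majority of influenced parties.

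First I would fix the sampling model. As established in Section~\ref{sec:model}, a sample of size $k$ is drawn uniformly \emph{with repetition}, so the $k$ draws are mutually independent and each selects a uniformly random party among the $n$. Since exactly $F$ of these parties are influenced by the weak $F$-bounded adversary, each individual draw lands on an influenced party with probability $F/n$. Writing $Z$ for the number of influenced parties among the $k$ samples, this gives $Z \sim \text{Bin}(k, F/n)$.

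The key step is then the trivial containment of events. Because $\alpha \leq k$, the event $\{Z = k\}$ that all sampled parties are influenced is a subset of the target event $\{Z \geq \alpha\}$, so $\P(Z \geq \alpha) \geq \P(Z = k)$. By independence of the $k$ draws, $\P(Z = k) = (F/n)^k$, which is exactly the claimed bound.

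I do not expect any real obstacle here; the only point requiring care is to invoke the sampling-with-repetition assumption so that the draws are genuinely independent and the product $(F/n)^k$ is an exact value rather than an approximation (under sampling without repetition one would instead get a hypergeometric expression). The bound is intentionally crude, since it ignores every outcome with strictly fewer than $k$ influenced parties; this is harmless because the subsequent arguments only need a lower bound of the form $(F/n)^k$ that the adversary can leverage across many rounds, and tightening it would not improve those conclusions.
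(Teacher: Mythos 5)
Your proposal is correct and matches the paper's proof: the paper likewise models the number of influenced parties in a sample as $\mathrm{Bin}(k, F/n)$ (using the sampling-with-repetition assumption) and lower bounds $\P(Z \geq \alpha) = \sum_{\ell=\alpha}^{k}\binom{k}{\ell}\big(\tfrac{F}{n}\big)^\ell\big(\tfrac{n-F}{n}\big)^{k-\ell}$ by retaining only the $\ell = k$ term, which is exactly your event-containment argument $\P(Z \geq \alpha) \geq \P(Z = k) = \big(\tfrac{F}{n}\big)^k$ phrased as a truncation of the binomial sum.
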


\begin{proof}
	Recall that we nodes are sampled uniformly at random with repetition.
	We can lower bound $q_j$ with the probability that $j$ samples at least $\alpha$ Byzantine parties as follows
	\[
	q_j = \sum_{\ell = \alpha}^{k} \binom{k}{\ell}\Big(\frac{F}{n}\Big)^\ell \Big(\frac{n-F}{n}\Big)^{k-\ell} \geq \sum_{\ell = k}^{k} \binom{k}{\ell}\Big(\frac{F}{n}\Big)^\ell \Big(\frac{n-F}{n}\Big)^{k-\ell} = \Big(\frac{F}{n}\Big)^k.\tag*{\qedhere}
	\]
\end{proof}

Note that even though the probability above decreases with $k$, the parameter $k$ is considered a small, constant sized tuning parameter \cite{Rocket2019} and is not a proper security parameter, particularly since the message complexity scales in $\Omega(kn)$.

Interestingly, the lemma shows that even in a stable consensus (according to Definition \ref{def:consensus_state}), i.e., where almost all nodes share the same opinion, even a weak adversary can create a small but inherent ``background noise'', i.e., an expected fraction \smash{$\big(\frac{F}{n}\big)^k$} of all parties can be reverted by the adversary to the minority opinion each round, because it gains a majority in a sample.

This situation is what the security mechanic of Snowflake and Snowball protocols is intended for as it makes parties decide and finalize an opinion by introducing an according mechanism with a security parameter $\beta$. One condition for some party to decide an opinion, is that it must have at least $\beta$ consecutive queries with an $\alpha$-majority of the same opinion, see Section \ref{sec:protocols} or Section \ref{apx:pseudocodes} for detailed pseudocode. (Note that Snowball imposes an additional condition for parties to decide based on the history of queries, which, however, only delays the decision of a party down even further).

The idea behind this mechanism is to reduce the probability that an adversary can make some party accept the \emph{minority} opinion, since sampling the minority opinion $\beta$ times in a row has a probability that is negligible with respect to $\beta$.  This mechanism is flawed in the sense that even an adversary that influences just a single party can abuse it to introduce a delay to the decision of a party that scales badly in $\beta$.
 
\begin{restatable}{lemma}{lemminqueries}
	\label{lem:min_queries}
	In the Snowflake and Snowball protocols, there exists a value $c > 1$ which is constant in $\beta$, such that a weak $F$-bounded adversary (Def.\ \ref{def:weak_bounded_adversary}, for any $F \geq 2$) can ensure that the probability that a given party decides within $c^{\beta-1}/2$ queries (rounds) is at most $4/c^{2(\beta-1)}$.
	To enforce this, the adversary needs no information on the current state of the network. 
\end{restatable}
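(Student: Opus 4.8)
The plan is to exhibit a single, state-oblivious adversarial strategy and then reduce the decision event to the occurrence of an improbably early ``clean run'' of queries, which a first-moment (union-bound) argument rules out within $c^{\beta-1}/2$ rounds. Throughout, note that $n,F,k,\alpha$ are constants with respect to the security parameter $\beta$, so any quantity of the form $(F/n)^k$ is a positive constant in $\beta$, and an $\beta$-fold product of such a quantity can still be super-polynomial in $\beta$.

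First I would fix the adversary's strategy: since $F \geq 2$, the adversary permanently assigns (at least) one influenced party the opinion $0$ and (at least) one the opinion $1$, and never changes this, so it uses no information about the current state. The crucial observation --- and the reason $F \geq 2$ rather than $F \geq 1$ is needed --- is that sampling is done \emph{with repetition}: a single influenced party can be drawn in all $k$ slots of a query. Hence, for either target value $b \in \{0,1\}$, the event that a query hits the $\bar b$-influenced party in all $k$ slots has probability $(1/n)^k$ and yields an $\alpha$-majority for $\bar b$ (this is the $\ell=k$ term of Lemma~\ref{lem:single_party_probability}). Consequently, in \emph{every} round and \emph{every} network state, and for \emph{both} opinions $b$ simultaneously, the probability that a query returns an $\alpha$-majority for $b$ is at most $s := 1 - r$ with $r := (1/n)^k > 0$. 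The point is that this bound holds no matter which opinion the party would eventually finalize, so each query ``continues a $b$-streak'' with probability at most $s$ regardless of state.

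Next I would bound the decision probability. In Snowflake a party can decide only after $\beta$ consecutive queries return an $\alpha$-majority for the same opinion; Snowball additionally requires a cumulative-count condition and is therefore only harder, so it suffices to treat the Snowflake condition. For a fixed starting round $i$ and fixed $b$, the chain rule together with the per-round bound $s$ --- which holds conditioned on the entire history, since each query uses fresh independent randomness while the always-present $\bar b$-influenced party keeps the bound valid in the realized state --- gives that rounds $i,\dots,i+\beta-1$ are all $b$-majorities with probability at most $s^\beta$. Setting $T := c^{\beta-1}/2$ and union-bounding over the at most $T$ window positions and the two opinions yields $\P(\text{decide within } T \text{ rounds}) \leq 2\,T\,s^\beta$.

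Finally I would calibrate $c$. Choosing $c := s^{-1/3} = (1-r)^{-1/3}$, a constant strictly larger than $1$ and independent of $\beta$ (as $r>0$ is constant in $\beta$), turns the bound into $2\,T\,s^\beta = c^{\beta-1}s^\beta = c^{-2\beta-1} \leq 4\,c^{-2(\beta-1)}$, which is exactly the claimed $4/c^{2(\beta-1)}$; the exponent $1/3$ is in fact the largest that keeps the estimate valid for all large $\beta$, giving the longest delay. I expect the main obstacle to be the conceptual step of making the strategy succeed against an \emph{unknown} eventual decision value: an adversary opposing only one opinion is defeated in states favoring the other, and the fix --- blocking both opinions at once with one influenced party per opinion and exploiting with-repetition sampling --- is precisely what forces $F\geq 2$ and preserves state-independence. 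The remaining exponent bookkeeping (verifying that the constants $1/2$ and $4$ absorb the factor $2$ from the two opinions, via $1 \leq 4c^3$) is routine.
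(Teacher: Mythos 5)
Your proof is correct, and it takes a genuinely different route from the paper's. The adversarial strategy is the same as in the paper's proof --- pin influenced parties on \emph{both} opinions (this is where $F\geq 2$ enters), so that by sampling with repetition every single query has probability at least $r=(1/n)^k$, constant in $\beta$, of returning an $\alpha$-majority that breaks any streak, regardless of the network state. Where you diverge is the probabilistic argument. The paper models the number of counter resets before a decision as a geometric random variable with success parameter at most $(1-q_j)^{\beta-1}$ (with $q_j$ from Lemma~\ref{lem:single_party_probability}), keeps $c=(1-q_j)^{-1}$, and tries to bound the left tail $\P(Z\leq \mu/2)$ via Chebyshev. You instead run a first-moment argument: union-bound over the at most $T$ window positions and the two opinions to get $\P(\text{decide within } T)\leq 2Ts^{\beta}$, and then spend the existential freedom in the statement by recalibrating $c:=s^{-1/3}$, which makes $2Ts^{\beta}=c^{-2\beta-1}\leq 4c^{-2(\beta-1)}$ close exactly. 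This difference matters beyond style: for a geometric variable the standard deviation is of the same order as the mean, so no Chebyshev-type bound can push $\P(Z\leq\mu/2)$ below a constant (indeed $\P(Z\leq\mu/2)\to 1-e^{-1/2}$ as the success parameter tends to $0$); the paper's computation moreover plugs the variance $(1-q_\beta)/q_\beta^2$ into Chebyshev as if it were the standard deviation, and its concluding line reads the resulting quantity $4/q_\beta^2=4c^{2(\beta-1)}$ (which exceeds $1$) as $4/c^{2(\beta-1)}$. With the paper's large choice of $c$, a union bound only yields a constant, so your cube-root calibration is not cosmetic bookkeeping --- it is precisely the slack (a factor of three in the exponent) needed to absorb the $c^{\beta-1}/2$ candidate windows and obtain the claimed $4/c^{2(\beta-1)}$. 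In short: same attack, but an elementary counting argument with a smaller $c$ in place of a concentration argument that, as written in the paper, does not establish the bound.
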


\begin{proof}
	In the Snowflake and Snowball protocols,  a variable $cnt \geq 1$ maintains the length of the most recent sequence of consecutive queries which all had a majority of the same opinion $b$ (see Section \ref{apx:pseudocodes}). To decide on some opinion $b$ (which we call ``success'' in the following), it is necessary that $cnt \geq \beta$. 
	
	Consider the following adversary strategy, where it splits its influenced parties into two roughly equally sized groups  whose opinions it sets to 0 and 1, respectively (here we need $F\geq 2$, round group sizes if necessary).
	We will now compute the probability $q_\beta$ of the event $E_\beta$ (failure in the sequence of length $\beta$) that some given sequence of $\beta$ queries of some party $j$ contains either a query which had no opinion or at least one of each opinion, 0 or 1.
	
	Assume, that the first opinion in such a sequence had a majority of 1. This is w.l.o.g., firstly, since the case of no majority in the first query already satisfies the condition above (by the law of total probability, neglecting this case gives us a lower bound for $q_\beta$). Secondly, because the case where a sequence starts with a 0 majority is analogous, if we treat the two groups of the adversary as two separate adversaries of size at least $F' \geq \lfloor \tfrac{F}{2} \rfloor \geq 1$.
	
	This means $q_{\beta}$ is lower bounded by the probability that the next $\beta-1$ queries have a majority from the $F'$-bounded adversary. Thus $q_\beta \geq (1-q_j)^{\beta-1}$, where $q_j$ is the lower bound of the probability that at least $\alpha$ Byzantine parties are sampled in a given query from Lemma \ref{lem:single_party_probability}. Note that $q_j$ is lower bounded by a non-zero value that is independent of $\beta$.

	The number of queries crucially depends on the probability $q_\beta$ of failure of deciding in a sequence of length $\beta$. To measure the number of queries until success we introduce a r.v.\ $Z_j$ that lower bounds the number counter resets  because event $E_\beta$ occurs (with probability $q_\beta$). This means that $Z_j$ follows a geometric distribution $Z_j \sim \mathcal G(q_\beta)$, describing the number of failures (i.e., counter resets) until success (deciding on a value).
	
	Note that $Z_j$ is a conservative lower bound for the overall number of queries that $j$ has to make. First, we significantly underestimate the probability of failure $q_\beta$. Second, we consider only counter resets caused by sampling a Byzantine majority. Third, we do not account for queries (that increase the counter) in between counter resets.
	
	The expectation of $Z_j \sim \mathcal G(q_\beta)$ is $\mu = 1/q_\beta$ and the variance is $\sigma := (1-q_\beta)/q_\beta^2$. Then the probability that $X$ is at most half its mean can be bounded with the Chebyshev inequality:
	\begin{align*}
		\P(X \leq \tfrac{\mu}{2}) & \leq \P( |X - \mu| \geq \tfrac{\mu}{2})\\
		& =  \P(|X - \mu| \geq k \cdot \sigma) \tag*{\textit{where} $k := \tfrac{\mu}{2\sigma}$}\\
		& \leq \tfrac{1}{k^2} = \smash{\tfrac{4(1-q_\beta)^2}{q_\beta^2}} \leq \tfrac{4}{q_\beta^2}.
	\end{align*}
	Let $c := (1-p_j)^{-1}$ ($> 1$), thus $q_\beta \geq c^{-(\beta-1)}$. The claim follows from the fact that $\mu = 1/q_\beta = c^{\beta-1}$ and $\P(X > \tfrac{\mu}{2}) = 1 - \P(X \leq \tfrac{\mu}{2}) \geq 1-{4}/{c^{2(\beta-1)}}$.	
\end{proof}

Using the lemma above, we show that Snowflake and Snowball cannot satisfy Definition \ref{def:negligible}, which states the conditions for a mechanism that provides a decent trade-off between security and performance. Specifically, the following theorem shows that having consensus with all but negligible probability w.r.t., $\beta$ and a polynomial runtime in $\beta$ are mutually exclusive.

\begin{restatable}{theorem}{thmimpossibility}
	\label{thm:impossibility}
	In the Snowflake and Snowball protocol with a weak $F$-bounded adversary ($F \geq 2$) the following properties are mutually exclusive
	\begin{itemize}
		\item The protocol ensures consensus with all but negligible probability with respect to $\beta$ (cf.\ Def.\ \ref{def:consensus}).
		\item Parties decide with less than $\pi(\beta)$ queries with all but negligible probability with respect to $\beta$, for any fixed polynomial $\pi$.
	\end{itemize}
Note that this holds even when the definition of consensus is restricted to those parties which are not influenced by the adversary.
\end{restatable}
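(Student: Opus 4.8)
The plan is a proof by contradiction that plays the exponential-in-$\beta$ delay from Lemma~\ref{lem:min_queries} against the polynomial-in-$\beta$ latency demanded by the second bullet, exploiting that both bullets are phrased with respect to the \emph{same} security parameter $\beta$ that also drives the decision rule. Concretely, I would fix the adversary to be the one built in Lemma~\ref{lem:min_queries} (split the $F\geq 2$ influenced parties into two groups with opinions $0$ and $1$); it needs no information about the network state. Since the conclusion of that lemma concerns an \emph{arbitrary} given party, I instantiate it on a party $j$ that the adversary does \emph{not} influence---this is exactly what lets the final sentence of the theorem go through, and such a $j$ exists because $F$ is a small constant while $n$ is large. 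Lemma~\ref{lem:min_queries} then supplies a constant $c>1$, independent of $\beta$, with
\[
	\P\big(j\text{ decides within }c^{\beta-1}/2\text{ queries}\big)\;\leq\;\frac{4}{c^{2(\beta-1)}}.
\]

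Assume, toward a contradiction, that both bullets hold under this adversary, and fix the polynomial $\pi$ of the second bullet. Unpacking that bullet through Definition~\ref{def:negligible} with $\lambda=\beta$ yields a runtime polynomial $\rho$ such that for every error polynomial $\pi'$ and all sufficiently large $\beta$, executing the protocol for $\rho(\beta)$ rounds gives $\P(j\text{ decides within }\pi(\beta)\text{ queries})\geq 1-1/\pi'(\beta)$. Because $\pi$ is a polynomial whereas $c^{\beta-1}/2$ grows exponentially, we have $\pi(\beta)\leq c^{\beta-1}/2$ for all large $\beta$, so the event ``$j$ decides within $\pi(\beta)$ queries'' is contained in the event above; monotonicity of $\P$ together with the displayed bound then forces
\[
	1-\frac{1}{\pi'(\beta)}\;\leq\;\P\big(j\text{ decides within }\pi(\beta)\text{ queries}\big)\;\leq\;\frac{4}{c^{2(\beta-1)}}.
\]
As $\beta\to\infty$ the left side tends to $1$ and the right side to $0$, which is the desired contradiction. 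The first bullet is refuted by the same inequality: consensus in the sense of Definition~\ref{def:consensus} entails that $j$ terminates (decides) within the polynomial runtime granted by Definition~\ref{def:negligible} for that bullet, which the displayed bound likewise precludes. Hence the two properties cannot hold together, and this remains true when consensus is restricted to the non-influenced party $j$.

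I expect the main difficulty to be bookkeeping with the nested quantifiers of Definition~\ref{def:negligible} rather than any analytic estimate: one must keep the roles of the runtime polynomial $\rho$ and the arbitrary error polynomial $\pi'$ separate, and use that the constant $c$ of Lemma~\ref{lem:min_queries} does not depend on $\beta$, so that the single threshold $c^{\beta-1}/2$ eventually dominates \emph{every} polynomial latency bound $\pi(\beta)$. Two smaller points I would state explicitly for rigor: first, Lemma~\ref{lem:min_queries} already internalizes all protocol-specific behavior (the counter-reset mechanism, and the additional Snowball history condition which can only postpone a decision), so the argument reduces to comparing a polynomial against an exponential; second, intersecting the single-party event over all non-influenced parties only lowers the probability, so no generality is lost in the restriction asserted by the theorem's last sentence.
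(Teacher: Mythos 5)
Your proof is correct (conditional on Lemma~\ref{lem:min_queries}, exactly as the paper's is) and its core is the same as the paper's: play the exponential threshold $c^{\beta-1}/2$ from that lemma against an arbitrary polynomial latency bound $\pi(\beta)$ and let $\beta\to\infty$. The structural difference is that the paper splits into two cases around a constant threshold $B:=\tfrac{2}{\log c}+1$: for $\beta\le B$ it invokes Lemma~\ref{lem:single_party_probability} to argue that the adversary forces some party to decide the minority value with probability at least $q_j^B$, a constant, so agreement---and hence the first bullet---fails; only for $\beta>B$ does it run the polynomial-versus-exponential comparison that you give. You skip the small-$\beta$ case entirely, and that is legitimate: since negligibility in Definition~\ref{def:negligible} is an asymptotic statement in $\beta$, refuting the second bullet for all sufficiently large $\beta$ already yields mutual exclusivity, and your chain $1-1/\pi'(\beta)\le \P\big(j\text{ decides within }\pi(\beta)\text{ queries}\big)\le 4/c^{2(\beta-1)}$ does exactly that. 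Indeed, your observation that the contradiction consumes only the second bullet (so that bullet is false outright under this adversary, and, as you note, so is the first, since consensus includes termination within the polynomial runtime $\rho(\beta)$ of Definition~\ref{def:negligible}) is sharper than the paper's phrasing, which nominally assumes the first bullet in order to derive the failure of the second. What the paper's extra case buys is not logical necessity but the intended trade-off narrative: it makes explicit that one cannot dodge the super-polynomial delay by fixing $\beta$ to a small constant, because then agreement itself fails with constant, hence non-negligible, probability. Finally, your instantiation on a non-influenced party $j$, together with the remark that intersecting the decision events over such parties only lowers the probability, correctly discharges the theorem's last sentence, which the paper leaves implicit.
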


\begin{proof}
	By Lemma \ref{lem:min_queries}, the probability that at least $c^{\beta-1}/2$ queries are required is at least $1\!-\!4/c^{2(\beta-1)}$. 
	Intuitively, this means that the Snow protocols are likely to fail if not enough queries are used, unless $\beta$ is small. 
	If, on the other hand, $\beta$ is indeed small, then there is a non-negligible chance that some party decides the minority value, thus prohibiting agreement.
	
	Formally, assume that $\beta$ is a security parameter as in Definition \ref{def:negligible}.
	Consider the threshold value \smash{$B := \frac{2}{\log c}+1$} (constant in $\beta$). 
	Consider the case that $\beta$ is small, i.e., $\beta \leq B$. 
	By Lemma \ref{lem:single_party_probability} there is a non zero probability $q_j$ (which does not depend on $\beta$) that an adversary controls at least $\alpha$ parties that have been queried by some party $j$. 
	Then the probability that the adversary can make $j$ decide on the minority value is at least $q_j^B$, which is constant in $\beta$, i.e., non-negligible.
	
	Now consider $\beta > B$. Then the probability that the protocol fails is more than $3/4$ if less than $c^{\beta-1}/2$ queries are conducted because a party did not yet decide. 
	To show the mutual exclusivity for $\beta$ above this threshold, assume that the protocol decides with all but negligible probability. 
	Then (much) more than $c^{\beta-1}/2$ queries are required by Lemma \ref{lem:min_queries}, which is larger than $\pi(\beta)$ for sufficiently large $\beta$.
\end{proof}

\section{Reconciling Security and Fast Consensus}
\label{sec:blizzard}

Theorem~\ref{thm:impossibility} shows that the Snowflake and Snowball protocols cannot achieve consensus within a polynomial number of rounds with all but negligible probability with respect to security parameter $\beta$, due to the termination condition.  We propose in Algorithm~\ref{algo:app:blizzard} a modification of the Slush protocol that we call \emph{Blizzard} and which incorporates confidence levels as a termination criterion. Importantly, Blizzard leaves the basic dynamics of the Slush protocol intact, in particular Blizzard neglects the mechanic of Snowball, where parties change their current opinion depending not only on the current but also on past queries, which is in general detrimental for the time it takes to converge to a stable consensus as shown in Section \ref{sec:snowball_dynamics}.

This modification is arguably simpler than the corresponding mechanisms in Snowball and works as follows. Each party  maintains two counters, which track the total number of queries that contained at least $\alpha$ of opinion 0 or 1, respectively (an $\alpha$-majority). A decision in favor of one opinion is made if the corresponding counter has a decisive lead over the other. The lead that is required is of the order $\bigO(\log n + \beta)$ and we show that this also corresponds to the number of rounds until consensus is established with all but negligible probability (w.r.t. $\beta$).

The idea is that within the given time frame, the network will reach a state of a stable consensus, and it will remain close to this state for a sufficiently long time such that each party can establish a lead in the counter for the opinion which is in the majority. Furthermore, unanimity is ensured because the required lead is large enough such that no party can accidentally make a ``premature'' decision, i.e., reaching the threshold even when no stable majority has been established yet. 

Note that as time progresses and given an adversary that controls at least $\alpha$ parties, there is always a small but non-zero chance that a system reverts from a state of stable consensus and even switches majorities. However, we show that once the system is in a stable consensus, it will not ``slide back'' too far within a given time frame, i.e., one opinion retains an overwhelming majority for a sufficient amount of time, even in the presence of an adversary. We start with a lemma about the probabilities to sample an $\alpha$-majority of an opinion given that one opinion has a majority.

\begin{restatable}{lemma}{lemswitchingprobabilities}
	\label{lem:switching_probabilities}
	Let $S_i \geq \frac{15n}{16}$ and $\alpha = \lceil \frac{k+1}{2}\rceil$. Then 
	$
		\P\big(Y_{ij} \geq \alpha \big) \geq p_i \text{ and } \P\big(Y_{ij} \leq k \m \alpha \big) \leq 4(1 \m p_i)^2.
	$
\end{restatable}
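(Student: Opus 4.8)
The plan is to prove the second (upper-bound) inequality first, because the first one then follows almost for free from the observation that, with $\alpha = \lceil\tfrac{k+1}{2}\rceil$, a sample of $k$ parties yields an $\alpha$-majority for one of the two opinions in every outcome except the (even-$k$) tie.

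For the upper bound I would use the symmetry of the binomial distribution, exactly as in the proof of Lemma \ref{lem:delta} (cf.\ Appendix \ref{sec:binom}), to turn the left tail into a right tail. Writing $q := 1 \m p_i$, the hypothesis $S_i \geq \tfrac{15n}{16}$ gives $q \leq \tfrac{1}{16}$, and symmetry yields
\[
	\P\big(Y_{ij} \leq k \m \alpha\big) = \P\big(\mathrm{Bin}(k,q) \geq \alpha\big) = \sum_{\ell=\alpha}^{k}\binom{k}{\ell}q^\ell p_i^{k-\ell}.
\]
Every summand carries a factor $q^\ell$ with $\ell \geq \alpha \geq 2$, so $q^2$ factors out and it remains to bound the residual sum by the constant $4$. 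The key estimate is that this sum is dominated by its first term: the ratio of the $(\ell\p1)$-th summand to the $\ell$-th is $\tfrac{k-\ell}{\ell+1}\cdot\tfrac{q}{p_i}$, which for $\ell \geq \alpha > \tfrac{k}{2}$ is at most $\tfrac{q}{p_i} \leq \tfrac{1}{15}$ (using $q\leq\tfrac{1}{16}$). Summing the geometric series bounds the whole expression by $\tfrac{15}{14}\binom{k}{\alpha}q^\alpha p_i^{k-\alpha}$, so after dividing out $q^2$ it suffices to verify $\tfrac{15}{14}\binom{k}{\alpha}q^{\alpha-2}p_i^{k-\alpha} \leq 4$. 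Using $\binom{k}{\alpha}\leq 2^k$ and $q^{\alpha-2}\leq 16^{-(k-3)/2} = 64\cdot 4^{-k}$ (from $\alpha \geq \tfrac{k+1}{2}$) the residual factor is at most $\tfrac{15}{14}\cdot 64/2^k$, which settles all $k \geq 5$; the finitely many cases $k \in \{2,3,4\}$ are checked directly.

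With the upper bound in hand, the first inequality follows by complementarity. For odd $k$ we have $k \m \alpha = \alpha \m 1$, so the events $\{Y_{ij}\geq\alpha\}$ and $\{Y_{ij}\leq k\m\alpha\}$ partition the sample space; hence $\P(Y_{ij}\geq\alpha) = 1 \m \P(Y_{ij}\leq k\m\alpha) \geq 1 \m 4q^2$, and $1 \m 4q^2 \geq p_i = 1\m q$ reduces to $4q \leq 1$, which holds since $q \leq \tfrac{1}{16}$. Alternatively, for odd $k$ Lemma \ref{lem:delta_short_form} gives directly $\P(Y_{ij}\geq\alpha) = \delta^{k,\alpha}(p_i) + p_i \geq p_i$, as $\delta^{k,\alpha}(p_i)\geq 0$ for $p_i \geq \tfrac{1}{2}$.

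The even-$k$ case is where I expect the real care to be needed and is the main obstacle. Here a third outcome is possible, the tie $Y_{ij} = \tfrac{k}{2}$, so $\P(Y_{ij}\geq\alpha) = 1 \m \P(Y_{ij}\leq k\m\alpha) \m \P(Y_{ij}=\tfrac{k}{2})$ and I must show $\P(Y_{ij}\leq k\m\alpha) + \P(Y_{ij}=\tfrac{k}{2}) \leq q$. The first summand is $\leq 4q^2 \leq \tfrac{q}{4}$ (again since $q\leq\tfrac{1}{16}$). The tie term is the central binomial probability $\binom{k}{k/2}(p_i q)^{k/2}$; pulling out one factor of $q$, the remainder $\binom{k}{k/2}p_i^{k/2}q^{k/2-1}$ is bounded by $\binom{k}{k/2}q^{k/2-1}\leq 16/2^k$ (via $\binom{k}{k/2}\leq 2^k$ and $q\leq\tfrac{1}{16}$), which is below $\tfrac34$ for $k\geq 6$, so the tie term is at most $\tfrac{3q}{4}$ and the two pieces sum to at most $\tfrac{q}{4}+\tfrac{3q}{4}=q$ as required. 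The smallest even value $k=4$ is verified by hand, and the degenerate case $k=2$ must be excluded since there the tie term alone already exceeds $q$.
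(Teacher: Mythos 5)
Your proposal is correct, and although it follows the same two-step architecture as the paper's proof---first the tail bound $\P(Y_{ij}\leq k-\alpha)\leq 4(1-p_i)^2$, then the lower bound by complementarity using $4(1-p_i)^2\leq 1-p_i$---both steps are carried out differently, and yours is the more rigorous version. For the tail, the paper does not use your symmetry/geometric-series argument: it bounds the left tail termwise via $p_i^\ell\leq 1$, $(1-p_i)^{k-\ell}\leq(1-p_i)^{\alpha}$ and $\sum_{\ell=0}^{k-\alpha}\binom{k}{\ell}\leq 2^{k-1}$, then uses $k\leq 2\alpha-1$ and $1-p_i\leq\tfrac{1}{16}$ to extract the constant $4$; this is cruder but holds uniformly for all $k\geq2$, so it avoids your separate check of $k\in\{2,3,4\}$. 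The substantive difference is in the second step: the paper asserts that ``there is always an $\alpha$-majority for either $0$ or $1$,'' which is true only for odd $k$ (where $k-\alpha=\alpha-1$); for even $k$ the tie $Y_{ij}=k/2$ occurs with positive probability, and the paper's proof never accounts for it. Your explicit bound on the tie term is exactly the missing argument, and your exclusion of $k=2$ is a genuine finding about the statement rather than a gap in your proof: for $k=2$, $\alpha=2$, one has $\P(Y_{ij}\geq\alpha)=p_i^2<p_i$ whenever $p_i<1$, so the lemma's first inequality is false as stated, and its downstream use (e.g., bounding $\E(Z_{j,1})$ in the proof of Theorem \ref{thm:security_blizzard}) implicitly requires $k\geq3$ or a repaired bound for $k=2$. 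Your alternative route for odd $k$ through Lemma \ref{lem:delta_short_form}, using $\delta^{k,\alpha}(p_i)\geq0$ for $p_i\geq\tfrac{1}{2}$, is also valid.
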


\begin{proof}
	Note that $1 \m p_i \leq \frac{1}{16}$. From the choice of $\alpha$, it follows that there is always an $\alpha$ majority for either 0 or 1. Thus, it suffices to show that $\P\big(Y_{ij} \leq k \m \alpha \big) \leq 4(1 \m p_i)^2 $ the other inequality follows due to $4(1 \m p_i)^2 \leq 1 \m p_i$.
	\begin{align*}
		\P\big(Y_{ij} \leq k \m \alpha \big) &  = \sum_{\ell = 0}^{k-\alpha} \binom{k}{\ell} p_i^\ell (1 \m  p_i)^{k-\ell} \leq \sum_{\ell = 0}^{k-\alpha} \binom{k}{\ell} (1 \m p_i)^{k-\ell}\\
		& \leq \sum_{\ell = 0}^{k-\alpha} \binom{k}{\ell} (1 \m  p_i)^{\alpha} =  (1\m p_i)^{\alpha} \sum_{\ell = 0}^{k-\alpha} \binom{k}{\ell}\\
		& \leq  (1\m p_i)^{\alpha} \sum_{\ell = 0}^{\lfloor k/2 \rfloor} \binom{k}{\ell} \leq  (1\m p_i)^{\alpha} 2^{k-1}   \\
		& \hspace*{-3.6mm}\stackrel{k\leq 2\alpha-1}{\leq}   (1\m p_i)^{\alpha} 2^{2\alpha-2} =   (1\m p_i)^2(1\m p_i)^{\alpha-2} 2^{2\alpha-2}\vphantom{\sum_{\ell = 0}^{\lfloor k/2 \rfloor}}\\
		& \leq   (1\m p_i)^2(\tfrac{1}{16})^{\alpha-2} 2^{2\alpha-2} = (1\m p_i)^2(\tfrac{1}{2})^{4\alpha-8} 2^{2\alpha-2}\vphantom{\sum_{\ell = 0}^{\lfloor k/2 \rfloor}}\\
		& = (1\m p_i)^2(\tfrac{1}{2})^{2\alpha-6} = 4(1\m p_i)^2(\tfrac{1}{2})^{2\alpha-4} \stackrel{\alpha \geq 2}{\leq} 4(1\m p_i)^2 \vphantom{\sum_{\ell = 0}^{\lfloor k/2 \rfloor}}
		\tag*{\qedhere}
	\end{align*}
\end{proof}

The next lemma shows that once the network is in a stable consensus state, it will very likely conserve a majority for a certain time frame.

\begin{restatable}{lemma}{lemstability}
	\label{lem:stability}
	Let $s,t \geq 1$ with $s \leq \frac{t}{2}$ and $t \leq \sqrt{n}/16$. Assume the number of parties with opinion 1 is currently $S_0 \geq n - s \sqrt n$. Then $S_i \geq n-t\cdot \sqrt{n}$ for at least \smash{$i \leq T := \min \big(\tfrac{\sqrt{n}}{32t}, \tfrac{t}{4}\big)$} rounds with probability at least \smash{$e^{-t^2}$}. This holds even in the presence of a $\sqrt{n}$-bounded adversary.
\end{restatable}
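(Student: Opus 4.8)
The plan is to track the \emph{minority}, i.e., the number of parties holding opinion $0$, defined as $M_i := n - S_i$. The hypothesis $S_0 \geq n - s\sqrt{n}$ reads $M_0 \leq s\sqrt{n} \leq \tfrac{t}{2}\sqrt{n}$, and the goal $S_i \geq n - t\sqrt{n}$ reads $M_i \leq t\sqrt{n}$; I want this to hold simultaneously for all $i \leq T$. The first step is a monotone decomposition of the minority. Writing $A_i \leq \sqrt{n}$ for the number of parties the $\sqrt{n}$-bounded adversary (Definition~\ref{def:adversary}) sets to $0$ at the start of round $i$, and $W_i$ for the number of parties that switch from $1$ to $0$ during the query phase of round $i$, and discarding the (favorable) $0 \to 1$ switches, one gets $M_{i+1} \leq M_i + A_i + W_i$, hence by telescoping
\[
	M_i \leq M_0 + \sum_{j \leq i} A_j + \sum_{j \leq i} W_j \leq \tfrac{t}{2}\sqrt{n} + T\sqrt{n} + \sum_{j \leq T} W_j
\]
for every $i \leq T$. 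Crucially the right-hand side is nondecreasing in $i$, so a single bound on the total number of natural switches $\sum_{j \leq T} W_j$ controls all rounds at once, and no union bound over rounds is needed.

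Next I would bound the three contributions using the two arms of $T = \min\big(\tfrac{\sqrt n}{32 t}, \tfrac t4\big)$. The initial term is $\leq \tfrac t2 \sqrt n$ by hypothesis. The adversarial term uses $T \leq \tfrac t4$ to give $T\sqrt n \leq \tfrac t4 \sqrt n$. For the natural switches, observe that as long as $M_i \leq t\sqrt n$ we have $S_i \geq n - t\sqrt n \geq \tfrac{15 n}{16}$ (since $t \leq \sqrt n/16$ forces $t\sqrt n \leq n/16$), so Lemma~\ref{lem:switching_probabilities} applies and each of the at most $n$ parties with opinion $1$ switches to $0$ with probability at most $4(1-p_i)^2 = 4(M_i/n)^2 \leq 4t^2/n$. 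Thus $\E[W_i] \leq 4 t^2$ conditionally on the current state, and using $T \leq \tfrac{\sqrt n}{32 t}$ the expected total is $\E\big[\sum_{j\leq T} W_j\big] \leq 4 t^2 T \leq \tfrac{t}{8}\sqrt n$. Adding the three pieces yields an expected bound of $\tfrac{7t}{8}\sqrt n < t\sqrt n$, leaving a slack of $\tfrac{t}{8}\sqrt n$ before the threshold is reached.

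The final step is to upgrade the expectation to a tail bound. To keep the per-round switch probability controlled I would introduce the stopping time $\tau := \min\{i : M_i > t\sqrt n\}$ and freeze the switch counts after $\tau$, so that the frozen total $\sum_{j\leq T} W_j$ is stochastically dominated by a sum of independent Bernoulli trials whose mean is at most $\tfrac{t}{8}\sqrt n$. A multiplicative Chernoff bound then gives $\P\big(\sum_{j\leq T} W_j \geq \tfrac t4 \sqrt n\big) \leq \exp\big(-\tfrac{1}{24}\,t\sqrt n\big)$, and since $\sqrt n \geq 16 t$ this is at most $\exp(-\tfrac{2}{3}t^2)$. On the complementary event the displayed inequality yields $M_i \leq \tfrac t2 \sqrt n + \tfrac t4 \sqrt n + \tfrac t4 \sqrt n = t\sqrt n$ for all $i \leq T$, so in fact $\tau > T$ and the freezing was never triggered, which makes the argument self-consistent. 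Hence the desired event holds with probability at least $1 - e^{-2t^2/3} \geq e^{-t^2}$ for all $t \geq 1$, which establishes the claim (indeed in a form slightly stronger than the stated $e^{-t^2}$).

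The main obstacle is precisely this coupling between the switching probabilities and the random minority $M_i$: the per-round switch bound from Lemma~\ref{lem:switching_probabilities} is only valid while the system remains in the regime $M_i \leq t\sqrt n$, so one cannot simply sum independent variables. The stopping-time/freezing device is what lets a single Chernoff bound on an independent-looking sum close the loop. Secondary bookkeeping must also be absorbed into the hypotheses $s \leq \tfrac t2$ and $t \leq \sqrt n/16$, in particular ensuring that the precondition $S_i \geq \tfrac{15n}{16}$ of Lemma~\ref{lem:switching_probabilities} still holds after the adversary adds its up to $\sqrt n$ parties; this only costs a lower-order $\sqrt n$ term and constants, which the generous slack $\tfrac t8\sqrt n$ comfortably accommodates.
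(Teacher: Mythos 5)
Your proof is correct and follows essentially the same route as the paper's: both track the ``backslide'' of parties reverting to the minority, both invoke Lemma~\ref{lem:switching_probabilities} to bound the per-round expected backslide by $4t^2$, both charge the adversary $\sqrt{n}$ per round, and both split the budget $t\sqrt{n}$ as $\tfrac{t}{2}\sqrt{n}+\tfrac{t}{4}\sqrt{n}+\tfrac{t}{4}\sqrt{n}$. The only real difference is the concentration step: the paper applies a Chernoff bound per round, $\P\big(B_i > 2\E(B_i)\big) \leq \exp\big(-\tfrac{4t^2}{3}\big)$, and then union-bounds over the $T \leq \tfrac{t}{4}$ rounds, yielding failure probability $e^{-t^2}$; you instead aggregate all natural switches into one sum, dominate it by an independent binomial via a stopping-time/freezing device, and apply a single Chernoff bound, yielding $e^{-2t^2/3}$. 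Your variant makes the state-dependence of the switching probabilities explicit (the paper handles it implicitly by ``assume the condition holds, then verify consistency''), at the cost of a slightly weaker exponent: note that $1-e^{-2t^2/3}$ is weaker than the $1-e^{-t^2}$ the paper's proof actually delivers (the ``at least $e^{-t^2}$'' in the statement is evidently a typo for $1-e^{-t^2}$), but it is still exponentially small in $t^2$ and suffices for the downstream use in Lemma~\ref{lem:stability_specific}, where one needs failure probability at most $e^{-\beta^2}$ with $t = 4T \geq 4\beta$. One last remark: your closing concern about re-verifying the precondition $S_i \geq \tfrac{15n}{16}$ after the adversary's flips is already resolved by your own telescoping, since the term $T\sqrt{n}$ accounts for all adversarial flips up to and including the current round; hence, on the good event and before the stopping time, the minority is below $t\sqrt{n} \leq \tfrac{n}{16}$ at query time as well, and no extra constants are needed.
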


\begin{proof}
	We assume that the condition $p_i \geq 1- {t}/{\sqrt{n}}$ is true and show for how long it can be maintained starting from round $i=0$ in state $S_0$. The claim is initially true due to $s \leq \frac{t}{2}$. Since $t \leq \sqrt{n}/16$, this assumption entails $p_i \geq 15/16$, which satisfies the precondition of Lemma \ref{lem:switching_probabilities}.
	
	Let $B_{ij}$ the indicator variable that some party changes its opinion from 1 to 0 in round $i$, i.e., $B_{ij} = 1$ if that is the case, $B_{ij} = 0$ else. Then $B_i := \sum_{j = 1}^{n} B_{ij}$ is the "backslide", i.e., the number of nodes that change opinion from 1 to 0, which is a sum of independent, identically distributed Bernoulli variables. We obtain
	\[
	\E(B_i) = \sum_{j = 1}^{n} \E(B_{ij}) = n \P(X_{ij}\! =\! 0 \mid X_{i-1,j} \!=\! 1) \leq \sum_{j = 1}^{n} \P(Y_{ij} \!\geq\! \alpha) =  \sum_{j = 1}^{n} (1 \m p_i) = n(1 \m p_i)  \!\!\stackrel{\text{Lem.}\ref{lem:switching_probabilities}}{\leq}\!\! 4n\Big(\frac{t}{\sqrt{n}}\Big)^2 = 4t^2.
	\]
	We will assume that in each round $i \leq T$ the backslide is $B_i \leq 2 \E(B_i) = 8t^2$ and show that this is the case with high probability later. Considering this upper bound for the backslide, that the adversary can change the opinion of at most $\sqrt{n}$ parties in each round and exploiting the bounds for $s$ and $T$, then in round $i= T$ we obtain
	\begin{align*}
		S_T & \geq n- s\cdot \sqrt{n} - T \cdot 8t^2  - T \cdot \sqrt{n}\\
		& \geq n- \frac{t}{2} \cdot \sqrt{n} - \frac{\sqrt{n}}{32t} \cdot 8t^2 - \frac{t}{4} \cdot  \sqrt{n}\\
		& = n- \frac{t}{2} \cdot \sqrt{n} - \frac{t}{4} \cdot \sqrt{n} - \frac{t}{4} \cdot  \sqrt{n} = n - t \cdot \sqrt{n}
	\end{align*}
	In particular, this also implies $S_i \geq n - t \cdot \sqrt{n}$ thus $p_i \geq 1- t/\sqrt{n}$  for all $i \leq T$ since $B_i \leq 2 \E(B_i)$ for all $i \leq T$. It remains to compute the probability for the latter. We apply a Chernoff Bound (given in Lemma \ref{lem:chernoffbound}) and obtain $\P\big(B_i > 2\E(B_i)\big) \leq \exp\big(-\frac{4t^2}{3}\big).$ A union bound shows that this holds for all $i \leq T$ with probability
	\begin{align*}			
		\P\Big(\bigcup_{i \leq T} B_i \leq  2\E(B_i)\Big) &  =  1-\P\Big(\bigcap_{i \leq T} B_i > 2\E(B_i)\Big) 
		\geq 1- \sum_{i \leq T}  \P\big(B_i > 2\E(B_i)\big)\\
		& \geq 1- T  \cdot \exp\big(\m\tfrac{4t^2}{3}\big)
		\geq 1- \frac{t}{4} \cdot  \exp\big(\m\tfrac{4t^2}{3}\big)\\
		& \geq 1- \exp\big(\tfrac{t}{4} \big) \cdot  \exp\big(\m\tfrac{4t^2}{3}\big) = 1- \exp\big(\tfrac{t}{4} -\tfrac{4t^2}{3}\big) \stackrel{t \geq 1}{\geq} 1- e^{-t^2}.\tag*{\qedhere}
	\end{align*}	
\end{proof}

While Lemma \ref{lem:stability} captures the stability of a state of almost consensus in  a more general way, we will use it in the following form, by specifying some parameters.

\begin{restatable}{lemma}{lemstabilityspecific}
	\label{lem:stability_specific}
	Let $s \geq 1$ be a constant and assume the number of parties with opinion 1 is $S_0 \geq n - s \sqrt n$. For an arbitrary constant $\beta \geq s/2$, let $T \geq \beta$ and $T \in o(n^{1/4})$. Then $S_i \geq \frac{15n}{16}$ for at least $T$ rounds with probability at least \smash{$1-e^{-\beta^2}$}. This holds for sufficiently large $n$ even with a $\sqrt{n}$-bounded adversary.
\end{restatable}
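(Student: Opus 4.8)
The plan is to derive this statement directly from Lemma~\ref{lem:stability} by instantiating its free parameter $t$ as a suitable multiple of the target round count $T$. Concretely, I would set $t := 4T$ and then verify that all preconditions of Lemma~\ref{lem:stability} (namely $t \geq 1$, $s \leq t/2$, and $t \leq \sqrt n/16$) hold for the given $s,\beta,T$ once $n$ is large enough. For $t = 4T \geq 4\beta \geq 1$ I use $\beta \geq s/2 \geq 1/2$; for $s \leq t/2 = 2T$ I use $T \geq \beta \geq s/2$; and for $t \leq \sqrt n/16$, i.e.\ $T \leq \sqrt n/64$, I invoke $T \in o(n^{1/4}) \subseteq o(\sqrt n)$, which makes this hold for all sufficiently large $n$.

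Next I would reconcile the guaranteed number of rounds. Lemma~\ref{lem:stability} preserves the majority for $\min\big(\tfrac{\sqrt n}{32 t}, \tfrac{t}{4}\big)$ rounds. With $t = 4T$ the second term equals exactly $T$, so it suffices to show $\tfrac{\sqrt n}{32 t} = \tfrac{\sqrt n}{128\,T} \geq T$, which rearranges to $T \leq \tfrac{n^{1/4}}{\sqrt{128}}$. This is precisely where the hypothesis $T \in o(n^{1/4})$ enters, and it holds for all sufficiently large $n$; hence the number of rounds delivered by Lemma~\ref{lem:stability} is at least $T$, as required.

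It then remains to translate the conclusion $S_i \geq n - t\sqrt n$ of Lemma~\ref{lem:stability} into $S_i \geq \tfrac{15n}{16}$ and to sharpen the probability. Since $t = 4T \leq \sqrt n/16$ for large $n$, we obtain $n - t\sqrt n = n - 4T\sqrt n \geq n - \tfrac{n}{16} = \tfrac{15n}{16}$. For the probability, the proof of Lemma~\ref{lem:stability} establishes success with probability at least $1 - e^{-t^2} = 1 - e^{-16 T^2}$, and since $T \geq \beta$ this is at least $1 - e^{-\beta^2}$. The adversarial case needs no further work, as Lemma~\ref{lem:stability} already accounts for a $\sqrt n$-bounded adversary.

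I do not expect a genuine obstacle, since the statement is essentially a specialization of Lemma~\ref{lem:stability}; the only point requiring care is the simultaneous satisfiability of the upper and lower constraints on $t$ (namely $4T \leq t \leq \tfrac{\sqrt n}{32 T}$), whose nonemptiness is exactly the condition $T \in o(n^{1/4})$ and the reason the claim is asserted only for sufficiently large $n$. A marginally cleaner alternative would be to keep $t = cT$ as a free multiple and optimize the constant $c$, but $c = 4$ already suffices and I would favor it for concreteness.
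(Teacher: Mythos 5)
Your proposal is correct and follows essentially the same route as the paper's own proof: both instantiate Lemma~\ref{lem:stability} with $t := 4T$, verify the constraints $s \leq t/2$, $t \leq \sqrt{n}/16$, and $T \leq \sqrt{n}/(32t)$ using $T \geq \beta \geq s/2$ and $T \in o(n^{1/4})$, and conclude $S_i \geq n - t\sqrt{n} \geq \tfrac{15n}{16}$ with probability at least $1 - e^{-16T^2} \geq 1 - e^{-\beta^2}$. No gaps; your treatment of the round-count constraint is if anything slightly more explicit than the paper's.
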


\begin{proof}
	Let $t := 4T$, i.e., $T \leq \tfrac{t}{4}$ by definition. Since $T \in o(n^{1/4})$ we have $t \leq \sqrt n / 16$ for large enough $n$. 
	Moreover, we have $\sqrt{n}/32t \in \Omega(\!\sqrt{n}/T) \subseteq \Omega(n^{1/4})$ and therefore $T \leq \sqrt{n}/32t$ for sufficiently large $n$. Furthermore, $s \leq 2 \beta \leq 2 T \leq \tfrac{t}{2}$. This satisfies all the requirements for Lemma \ref{lem:stability} and we conclude that $S_i \geq n- t\sqrt n \geq  15n/16$ with probability at least \smash{$1-e^{-t^2} = 1-e^{-16T^2} \geq 1-e^{-\beta^2}$}.
\end{proof}

We will now use the stability properties of a network that is in a state of stable consensus (Definition \ref{def:consensus_state}) to show that Blizzard ensures consensus  (Definition \ref{def:consensus})  with all but negligible probability after $\bigO(\beta + \log n)$ rounds. Recall that Blizzard essentially corresponds to running an instance of Slush, where each node maintains counters that track how often an $\alpha$-majority was observed for opinion 0 and 1, respectively. If the difference in counters reaches a (sufficiently large) threshold $\tau$, then a decision for the opinion with the larger counter is made. In the following theorem we use a variable running time for Slush (due to the fact that as speedups up to $\log k$ over our upper bound can not be excluded, see Theorems \ref{thm:slush_lower_bound} and \ref{thm:slush_upper_bound}).

\begin{restatable}{theorem}{thmsecurityblizzard}
	\label{thm:security_blizzard}
	 Algorithm \ref{algo:app:blizzard} (Blizzard) with a threshold $\tau := 2 T_{\text{Slush}}$ ensures consensus with all but negligible probability (w.r.t.\ $\beta$) after at most $7 T_{\text{Slush}}$ rounds, assuming that $T_{\text{Slush}}$ is the number of rounds until Slush reaches a state where at least $n - \bigO(\!\sqrt{n})$ parties have the same opinion and $T_{\text{Slush}}$ is a sufficiently large multiple of $\beta$. This holds even with a $\sqrt{n}$-bounded adversary.
\end{restatable}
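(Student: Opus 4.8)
The plan is to split an execution of Blizzard into a \emph{convergence phase} (rounds $0$ to $T_{\text{Slush}}$) and a \emph{stability phase} (rounds $T_{\text{Slush}}$ to $7T_{\text{Slush}}$), and to exploit that Blizzard runs the Slush dynamics verbatim, so that only the cumulative counters $\op{cnt}(0),\op{cnt}(1)$ and the threshold $\tau$ govern decisions. Fix an execution and assume without loss of generality that the eventual majority opinion is $1$. First I would invoke Theorem~\ref{thm:slush_upper_bound}, amplified to all but negligible probability with respect to $\beta$ as in Corollary~\ref{cor:slush_upper_bound}, to obtain that within $T_{\text{Slush}}$ rounds the underlying Slush process reaches $S_{T_{\text{Slush}}} \geq n - s\sqrt n$ for a constant $s$; this is where the hypothesis that $T_{\text{Slush}}$ is a sufficiently large multiple of $\beta$ enters. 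Conditioned on this, I would apply Lemma~\ref{lem:stability_specific} with $T := 6T_{\text{Slush}}$, which satisfies $T \geq \beta$ and $T \in o(n^{1/4})$, to conclude that $S_i \geq \tfrac{15n}{16}$ for every round $i$ of the stability phase with probability at least $1-e^{-\beta^2}$, even against the $\sqrt n$-bounded adversary.

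For Agreement and the absence of premature or minority decisions, the key observation is that a party issues exactly one query per round, so each counter grows by at most one per round and hence $|\op{cnt}(1)-\op{cnt}(0)| \leq i$ after $i$ rounds. Since $\tau = 2T_{\text{Slush}}$, no party can reach the threshold before round $2T_{\text{Slush}}$; in particular no decision occurs during the convergence phase, and at its end every party has $\op{cnt}(0)-\op{cnt}(1) \leq T_{\text{Slush}}$. Throughout the stability phase $p_i \geq \tfrac{15}{16}$, so Lemma~\ref{lem:switching_probabilities} bounds the probability that a query is an $\alpha$-majority for $0$ (the only event incrementing $\op{cnt}(0)$) by $4(1-p_i)^2 \leq \tfrac1{64}$. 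A Chernoff bound (Lemma~\ref{lem:chernoffbound}) then shows the total number of $0$-increments over the whole phase is below $T_{\text{Slush}}$ with all but negligible probability, so that $\op{cnt}(0)-\op{cnt}(1) \leq T_{\text{Slush}} + T_{\text{Slush}} < \tau$ at every round. Hence no party ever decides $0$, which gives Agreement once we show all parties decide $1$.

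For Termination I would lower-bound the growth of $\op{cnt}(1)-\op{cnt}(0)$. By Lemma~\ref{lem:switching_probabilities} each stability-phase query increments $\op{cnt}(1)$ with probability at least $\tfrac{15}{16}$, so over the $6T_{\text{Slush}}$ rounds the expected increase of $\op{cnt}(1)$ is at least $\tfrac{15}{16}\cdot 6T_{\text{Slush}}$; a multiplicative Chernoff bound secures a constant fraction of this with all but negligible probability. Combined with the $0$-increment bound above and the starting value $\op{cnt}(1)-\op{cnt}(0) \geq -T_{\text{Slush}}$, the difference surpasses $\tau = 2T_{\text{Slush}}$ strictly before round $7T_{\text{Slush}}$, so every party decides $1$ within the budget. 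Validity is the special case where all parties propose $1$ (then $S_0 = n$ and convergence is immediate), and Integrity holds by construction since a decided party retains its decision.

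Finally I would collect the failure events by a union bound: convergence failure (negligible in $\beta$), stability failure $e^{-\beta^2}$, and the per-party counter-concentration failures. The last is the delicate point, and I expect it to be the main obstacle: the Chernoff bounds are applied per party, so the union over $n$ parties contributes a factor $n$, yielding a bound of order $n\,e^{-c\,T_{\text{Slush}}}$ for a constant $c>0$. Here the hypotheses do the essential work, since $T_{\text{Slush}} = \Theta(\log n + \beta)$ is simultaneously $\Omega(\log n)$ and a sufficiently large multiple of $\beta$; the factor $n = e^{O(\log n)}$ is then absorbed into the exponential, leaving a quantity that is super-polynomially small in $\beta$, i.e.\ negligible. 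The hard part is really the calibration: $\tau$ must be large enough to forbid any decision before the system stabilizes, yet small enough that the majority counter provably overtakes it within $7T_{\text{Slush}}$ rounds, all while the $n$-fold union bound stays negligible with respect to $\beta$ --- and the constants $\tau = 2T_{\text{Slush}}$ and the budget $7T_{\text{Slush}}$ are chosen precisely to reconcile these competing demands.
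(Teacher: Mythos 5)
Your proposal is correct and follows essentially the same route as the paper's proof: the same split into a convergence phase (Theorem~\ref{thm:slush_upper_bound}/Corollary~\ref{cor:slush_upper_bound}) and a stability phase (Lemma~\ref{lem:stability_specific}), the same per-query majority bounds from Lemma~\ref{lem:switching_probabilities} concentrated via Chernoff, and the same counter arithmetic around the threshold $\tau = 2T_{\text{Slush}}$ and the $-T_{\text{Slush}}$ worst-case starting deficit. One calibration caveat (which you flag yourself): with your loose bound of $T_{\text{Slush}}$ on the $0$-increments and a $2/3$-fraction Chernoff bound on the $1$-increments, the difference only reaches $-T_{\text{Slush}} + \tfrac{2}{3}\cdot\tfrac{15}{16}\cdot 6T_{\text{Slush}} - T_{\text{Slush}} = 1.75\,T_{\text{Slush}} < \tau$, so termination falls just short; the paper's sharper constants ($Z_{j,0} < T/8$ and $Z_{j,1} > 5T/8$, hence a net gain of $T/2 = 3T_{\text{Slush}}$ over the $T = 6T_{\text{Slush}}$ stability rounds) are what make the threshold exactly attainable within the $7T_{\text{Slush}}$ budget.
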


\begin{proof}
	Let $s \geq 1$ such that at least $S_0 \geq n - s\sqrt{n}$ parties have opinion 1 (w.l.o.g.) with all but negligible probability after at most $T_{\text{Slush}}$ rounds. Note that we have $T_{\text{Slush}} \in \bigO(\beta + \log n)$ by Theorem \ref{thm:slush_upper_bound}.
	
	Purely syntactically, we reinitialize the round counter to $i=0$ when we reach such a state $S_0$ for the first time. Starting from round $i=0$, by Lemma \ref{lem:stability_specific} we have $p_i \geq \frac{15}{16}$ for at least \smash{$T \in o(n^{1/4})$} rounds, even with a $\sqrt{n}$-bounded adversary.
	
	Let $Z_{j,0}$ and $Z_{j,1}$ be the number of times party $j$ observed an $\alpha$-majority of 0 and 1 respectively in any query made from round $i=0$ to round $i=T$. For the expectations we have.
	\begin{align*}
		\E(Z_{j,0}) & = \P\big(Y_{ij} \leq k-\alpha\big) \cdot T \stackrel{\text{Lem.}\ref{lem:switching_probabilities}}{\leq} 4(1-p_i)^2 \cdot T\leq 4 \cdot \tfrac{1}{16^2} \cdot T \leq  \tfrac{T}{16}.\\
		\E(Z_{j,1}) & = \P\big(Y_{ij} \geq \alpha\big) \cdot T \stackrel{\text{Lem.}\ref{lem:switching_probabilities}}{\geq} p_i \cdot T \geq \tfrac{15T}{16}.
	\end{align*}
	Note that the variables $Z_{j,0}$ and $Z_{j,1}$ can be seen as sums of independent Bernoulli variables, which allows us to apply the following Chernoff bounds.
	\begin{align*}
		&\P\big(Z_{j,0} \geq  (1+1) \cdot \tfrac{T}{16}\big)  \leq \exp \Big(- \tfrac{T}{3\cdot 16} \Big) \stackrel{T \geq 48\beta}{\leq} e^{-\beta}\\		
		&\P\big(Z_{j,1} \leq  (1-\tfrac{1}{3}) \cdot \tfrac{15T}{16}\big)  \leq \exp \Big(- \tfrac{15T}{9 \cdot 16} \Big) \stackrel{T \geq 48\beta/5}{\leq} e^{-\beta}.
	\end{align*}
	This implies that $Z_{j,0} <  \tfrac{T}{8}$ and $Z_{j,1} > \tfrac{5T}{8}$ with all but negligible probability given that $T \geq 48\beta$. Let $D_j(T) := Z_{j,1}-Z_{j,0}$ be the difference in the counters after $T$ rounds, then we have $D_j(T) \geq \tfrac{5T}{8}- \tfrac{T}{8} = \tfrac{T}{2}$ after $T$ rounds with all but negligible probability. 
	We can now show that the properties of consensus in Definition \ref{def:consensus} are met after $7T_{\text{Slush}} + \bigO(\beta)$ rounds with all but negligible probability. 
	
	\textbf{Termination.} We have to show that for any given party $j$ the absolute value of the difference in the counters $D_j' = cnt[1] - cnt[0]$ reaches the threshold $\tau$ eventually so that it decides (cf.\ Algorithm \ref{algo:app:blizzard}). We assume that after $T_{\text{Slush}}$ rounds we reach a stable consensus where $S_0$ parties have opinion 1, whereas the opposite case with a stable consensus for opinion 0 is analogous.
	Note that in $T_{\text{Slush}}$ rounds we have $D_j' \geq -T_{\text{Slush}}$, as in each round the counter $cnt[0]$ increases by at most one. 		
	After $T = 6T_{\text{Slush}}$ additional rounds starting from $S_0$ we obtain $D_j' \geq -T_{\text{Slush}} + D_j(T) \geq -T_{\text{Slush}} + 3T_{\text{Slush}} = 2 T_{\text{Slush}} = \tau$. Therefore, any given party decides and the algorithm terminates after $7T_{\text{Slush}}$ rounds, with all but negligible probability, given that $T \geq  48 \beta$ and \smash{$T \in o(n^{1/4})$}. The former means that we require $T_{\text{Slush}} \geq 8\beta$. The latter is fulfilled as 	$T_{\text{Slush}} \in O(\log n)$.	
	
	\textbf{Validity.} Suppose all parties propose 1 (w.l.o.g.). Then there is also a stable consensus for 1 even if we assume that initially $\sqrt{n}$ parties are flipped to 0 by an adversary. We showed above that from a stable consensus state, for any given node, the threshold $cnt[1] - cnt[0] = \tau$ will be reached with all but negligible probability.
	
	\textbf{Integrity.} Follows form the construction of the algorithm.
	
	\textbf{Agreement.} The idea is that the threshold $\tau$ is so large that no party can decide before a stable consensus is reached, and from there on every party must decide the same. Since in each round a counter of any party $j$ can increase by only one, no party can decide before $\tau = 2 T_{\text{Slush}}$ rounds have passed. Already after $T_{\text{Slush}}$ rounds we will reach a stable consensus with a $S_0$ majority for opinion 1 (w.l.o.g.). When we arrive at the state of stable consensus we have $D_j' \geq - T_{\text{Slush}}$, hence reaching the $- \tau = -2 T_{\text{Slush}}$ threshold for deciding 0 would require at least $T_{\text{Slush}}$ additional rounds. We already showed that starting from $S_0$ the change in $D_j'$ is strictly positive with all but negligible probability (provided that $T_{\text{Slush}}$ is a sufficiently large multiple of $\beta$). Consequently for any party $j$ the probability to attain the threshold $- \tau$ within the next $6T_{\text{Slush}}$ rounds is negligible.	
\end{proof}

The theorem above is given in a more general way, depending on the number of rounds until Slush reaches a stable consensus. Given that we have already shown an upper bound of $\bigO(\log n + \beta)$ for this (Corollary \ref{cor:slush_upper_bound}), we can rephrase the theorem as follows.

\begin{corollary}	
	\label{cor:security_blizzard}
	Algorithm \ref{algo:app:blizzard} (Blizzard) ensures consensus with all but negligible probability after at most $\bigO(\log n + \beta)$ rounds. This holds even with a $\sqrt{n}$-bounded adversary.
\end{corollary}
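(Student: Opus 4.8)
The plan is to obtain this corollary as a direct specialization of Theorem~\ref{thm:security_blizzard}, whose guarantee is stated abstractly in terms of the (a priori unspecified) number of rounds $T_{\text{Slush}}$ that Slush needs to drive the system into a state where all but $\bigO(\!\sqrt n)$ parties share an opinion. The only work remaining is to substitute a concrete bound for $T_{\text{Slush}}$ and to check that the side conditions imposed inside the proof of Theorem~\ref{thm:security_blizzard} remain satisfied after this substitution.

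First I would invoke Corollary~\ref{cor:slush_upper_bound}, which asserts that Slush reaches exactly such a state of stable consensus within $\bigO(\log n + \beta)$ rounds with all but negligible probability (w.r.t.\ $\beta$), even against a $\sqrt{n}$-bounded adversary. This identifies $T_{\text{Slush}} = \bigO(\log n + \beta)$. Theorem~\ref{thm:security_blizzard} additionally requires that $T_{\text{Slush}}$ be a sufficiently large multiple of $\beta$ (concretely the value $48\beta$ is used in its proof). Since the network, once in a stable consensus, provably remains there for a further window of rounds by Lemma~\ref{lem:stability_specific}, I would simply take $T_{\text{Slush}}$ to be the maximum of the Slush convergence bound and the required multiple of $\beta$, padding with idle rounds if necessary; this leaves the asymptotic order $\bigO(\log n + \beta)$ unchanged. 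Feeding this $T_{\text{Slush}}$ into Theorem~\ref{thm:security_blizzard} with threshold $\tau = 2 T_{\text{Slush}}$ then yields consensus with all but negligible probability after $7 T_{\text{Slush}} = \bigO(\log n + \beta)$ rounds, and the four properties of Definition~\ref{def:consensus} follow verbatim from those already established there.

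The one point that needs care — and the main (mild) obstacle — is the compatibility of the stability window with the security regime. Theorem~\ref{thm:security_blizzard} relies on Lemma~\ref{lem:stability_specific}, whose guarantee that the system stays in stable consensus for $T$ rounds holds only when $T \in o(n^{1/4})$. Hence the argument applies precisely in the regime where $\log n + \beta \in o(n^{1/4})$, i.e.\ where the security parameter is not too large relative to the network size. I would state this restriction explicitly, since it delimits the range of $\beta$ for which the claimed $\bigO(\log n + \beta)$ latency is meaningful. Within this regime every invocation of the underlying concentration and stability lemmas in Theorem~\ref{thm:security_blizzard} goes through unchanged, so no new estimates are required and the corollary follows.
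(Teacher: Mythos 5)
Your proposal is correct and follows essentially the same route as the paper, which likewise obtains the corollary by substituting the bound $T_{\text{Slush}} = \bigO(\log n + \beta)$ from Corollary~\ref{cor:slush_upper_bound} into Theorem~\ref{thm:security_blizzard}. Your explicit flagging of the regime restriction $\log n + \beta \in o(n^{1/4})$ inherited from Lemma~\ref{lem:stability_specific} is a point the paper passes over silently (its proof of Theorem~\ref{thm:security_blizzard} simply asserts $T_{\text{Slush}} \in \bigO(\log n)$), so making it explicit is a worthwhile clarification rather than a deviation.
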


\section{Conclusion}

With the goal of improving latency in mind, we deduce two main recommendations for changes to Snow-style consensus protocols as they are deployed in the Avalanche network today. First, for a given $k$ we recommend to choose $\alpha > k/2$ as small as possible, i.e., $\alpha = \lceil \frac{k+1}{2}\rceil$, as this promises a better performance compared to $\alpha$ closer to $k$ (see Lemma \ref{lem:delta_domination} or Figure \ref{fig:delta_b}). Second, we propose to change the termination condition of the Snowflake and Snowball protocols where we observe an unfavorable trade-off between security and latency (see Corollary \ref{thm:impossibility}) to the simpler one of the Blizzard protocol (Algorithm \ref{algo:app:blizzard}). This modification will resolve the observed issue (see Theorem \ref{thm:security_blizzard} and Corollary \ref{cor:security_blizzard}). We caveat our recommendations by noting that there might be other considerations than the asymptotic performance aspects analyzed in this paper.

From a theoretical point of view, we see our results on the performance of Slush (Theorem \ref{thm:slush_lower_bound} and \ref{thm:slush_upper_bound}) as a natural continuation of the corresponding analysis of the randomized, self-stabilizing consensus protocols in the GOSSIP model where the sample size is at most 3 (such as the Median Protocol, the 2-Choices Protocol and the 3-Majority Protocol, see Definition \ref{def:randomized_consensus_protocols}). These protocols have been analyzed with respect to performance as a function of the initial number of opinions and it was shown that in ``most'' conditions they converge quite fast (cf.\ related work Section \ref{sec:related_work}). An interesting avenue of future research is the adaptation of Slush to multiple opinions, a party adopts a new opinion if it has a simple majority in a sampling of size $k$, i.e., a ``$k$-Majority protocol''. We believe that our technical work on Slush gives insights how such a $k$-Majority protocol would perform on multiple opinions.

\printbibliography

\appendix

\section{Probabilistic Concepts}
\label{sec:probabilistic_concepts}

We give a few basic definitions and principles pertaining the probabilistic security properties of some protocol (used in Definition \ref{def:negligible}) that we use throughout the paper.

\begin{definition}[Negligible Function]
	\label{def:negligible_function}
	A function $f$ is negligible if for any polynomial $\pi$ there is a constant $\lambda_0 \geq 0$, s.t., for any $\lambda \geq \lambda_0$ it is $f(\lambda) \leq \pi (\lambda)$.
\end{definition}

\begin{remark}
	We often use that for any constant $c > 0$, the function $f(\lambda) = e^{-c\cdot \lambda}$ is a negligible w.r.t.\ $\lambda$. 
\end{remark}

We usually aim for a certain security threshold given by a variable $\gamma$.

\begin{definition}[All But Negligible Probability]
	\label{def:all_but_neglible_probability}
	An event is said to occur with all but negligible probability with respect to some parameter $\lambda$ if the probability of the event not happening is a function in $\lambda$ that is negligible w.r.t. $\lambda$.
\end{definition}

In the literature, randomized consensus protocols are often shown to be successful \emph{with high probability}, which expresses the probability of failure as a function that decreasing inversely with the input size $n$ of the problem (here $n$ is the number of parties). This is often quite convenient as it eliminates any other variable from the analysis, compared to defining some fixed failure threshold.

\begin{definition}[With High Probability]
	\label{def:whp}
	An event is said to hold with high probability (w.h.p.),  if there exists a constant $c\geq1$ such that the event occurs with probability at least $1-n^{-c}$  for sufficiently large $n$.
\end{definition}

The disadvantage of the notion w.h.p.\ is that it usually looks at the asymptotic behavior of a system (i.e., for large $n$), which does not provide a fixed security level for small $n$, which can be a requirement in practice.
The following lemma provides an interface between the two notions.

\begin{lemma}
	\label{lem:whp_abn}
	Let $E$ be an event such that the probability that $E$ does not occur within any interval of $T$ consecutive rounds is at most $\tfrac{1}{n}$ (which is guaranteed if $E$ occurs w.h.p., see Definition \ref{def:whp}). Then $E$ occurs with probability $1-{e^{-\lambda}}$ (all but negligible, see Definition \ref{def:all_but_neglible_probability}) after $T' = T(\tfrac{\lambda }{\ln n}\p1)$ rounds for security parameter $\lambda > 0$.
\end{lemma}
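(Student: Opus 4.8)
The plan is to partition the $T'$ rounds into disjoint blocks of length $T$ and treat each block as a (conditionally) independent trial in which $E$ occurs with probability at least $1-\tfrac1n$. First I would set $m := \lfloor T'/T \rfloor$ to be the number of such windows fitting into $T'$ rounds. By the choice $T' = T(\tfrac{\lambda}{\ln n}\p1)$ we have $T'/T = \tfrac{\lambda}{\ln n}+1$, and since $\lfloor x+1\rfloor \geq x$ for $x \geq 0$ this gives the key inequality $m \geq \tfrac{\lambda}{\ln n}$.

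Next I would bound the failure probability. Let $A_i$ denote the event that $E$ does not occur during the $i$-th window. If $E$ fails to occur within all of the $T'$ rounds, then in particular it fails in each of the $m$ disjoint windows, so $\overline E \subseteq \bigcap_{i=1}^m A_i$. By hypothesis the probability that $E$ does not occur in a block of $T$ consecutive rounds is at most $\tfrac1n$, and this bound holds conditionally on the outcomes of the earlier blocks. Applying the chain rule yields
\[
	\P(\overline E) \leq \P\Big(\bigcap_{i=1}^m A_i\Big) = \prod_{i=1}^m \P\big(A_i \mid A_1, \dots, A_{i-1}\big) \leq \Big(\tfrac{1}{n}\Big)^{m}.
\]

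The final step is the computation converting this into the target bound. Using $m \geq \lambda/\ln n$ and that $n^{-x}$ is decreasing in $x$ for $n>1$, I obtain
\[
	\P(\overline E) \leq n^{-m} \leq n^{-\lambda/\ln n} = e^{-\frac{\lambda}{\ln n}\ln n} = e^{-\lambda},
\]
so that $\P(E) \geq 1-e^{-\lambda}$. Since $e^{-\lambda}$ is negligible with respect to $\lambda$, the event $E$ occurs with all but negligible probability (Definition \ref{def:all_but_neglible_probability}), which is exactly the claim.

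The main obstacle — really the only subtle point — is justifying that the per-window failure bound $\tfrac1n$ may be multiplied across windows, i.e., that it holds \emph{conditionally} on the outcomes of earlier windows rather than merely marginally. This is implicit in the phrasing ``within any interval of $T$ consecutive rounds'', and in the w.h.p.\ applications we have in mind (e.g.\ Theorem \ref{thm:slush_upper_bound}) it is legitimate because the fresh bound for a window depends only on the network state at the start of that window; the randomness and adversary in earlier windows can affect the current window only through that state, so worst-casing over the starting state recovers the unconditional guarantee inside each factor.
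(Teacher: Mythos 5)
Your proof is correct and takes essentially the same approach as the paper: partition the $T'$ rounds into disjoint length-$T$ windows and multiply the per-window failure bound $\tfrac{1}{n}$ across windows to obtain $n^{-\lambda/\ln n} = e^{-\lambda}$. The only differences are cosmetic --- the paper splits into the cases $\lambda < \ln n$ (where a single window already gives $\tfrac{1}{n} < e^{-\lambda}$) and $\lambda \geq \ln n$ (where it multiplies over $\lambda/\ln n$ windows), whereas your floor-based window count $m \geq \lambda/\ln n$ subsumes both cases in one computation, and your closing paragraph makes explicit the conditional per-window bound that the paper's multiplication uses implicitly.
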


\begin{proof}
	We consider the cases $\lambda < \ln n$ and $\lambda \geq \ln n$ starting with the former. Note that $T' \geq T$, thus the probability that $E$ does not occur is at most
	\[
	\frac{1}{n} = e^{-\ln n} < e^{-\lambda}.
	\]	
	In the second case, we use that $T' \geq \tfrac{\lambda}{\ln n}T$. The probability that $E$ does not occur after $\tfrac{\lambda}{\ln n}T$ rounds is at most 
	\[\frac{1}{n^{\frac{\lambda}{\ln n}}} = \frac{1}{e^{\tfrac{\lambda\ln n}{\ln n}}} = e^{-\lambda}.\]	
\end{proof}

\begin{remark}
	\label{rem:whp_abn}
	In Lemma \ref{lem:whp_abn}, given that some event $E$ occurs w.h.p., within $T \in \bigO(\log n)$ rounds, then $T' = T(\tfrac{\lambda }{\ln n}+1) \in \bigO(\lambda + \log n)$ rounds are sufficient that $E$ occurs with all but negligible probability.
\end{remark}

\begin{lemma}[Closure of Negligible Functions]
	\label{lem:unionbound}
	Let $\lambda > 0$ and $k \leq \rho(\lambda)$ for some polynomial $\rho$. Let $E_1, \ldots ,E_k$ be outcomes (events) of some algorithm $\mathcal A$. Assume that any individual event $E_i$ takes place with all but negligible probability (Def. \ref{def:negligible}).  Then $E \coloneqq \bigcap_{i=1}^{k} E_i$ takes place with all but negligible probability.
\end{lemma}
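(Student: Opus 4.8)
The plan is to reduce the statement to the union bound together with the elementary fact that the product of two polynomials is again a polynomial, which lets me absorb the factor $k \leq \rho(\lambda)$ into the per-event failure probabilities. First I would pass to complements: by De Morgan, $\overline E = \bigcup_{i=1}^{k} \overline{E_i}$, so the union bound yields $\P(\overline E) \leq \sum_{i=1}^{k} \P(\overline{E_i})$. The goal, per Definition \ref{def:negligible}, is to exhibit a single running-time polynomial $\rho^*$ such that for every target polynomial $\pi$ there is a threshold $\lambda_0$ with $\P(\overline E) \leq 1/\pi(\lambda)$ for all $\lambda \geq \lambda_0$, once $\mathcal A$ has run for at least $\rho^*(\lambda)$ rounds.

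The key step is choosing the right polynomial to feed into the negligibility of the individual events. Given a target $\pi$, I would define the auxiliary polynomial $\pi'(\lambda) := \rho(\lambda)\,\pi(\lambda)$, which is again a polynomial precisely because $\rho$ bounds the number of events. Applying Definition \ref{def:negligible} to each $E_i$ with this $\pi'$ gives, for every $i$, a running-time polynomial $\rho_i$ and a threshold $\lambda_0^{(i)}$ such that $\P(\overline{E_i}) \leq 1/\pi'(\lambda) = 1/(\rho(\lambda)\pi(\lambda))$ for all $\lambda \geq \lambda_0^{(i)}$, provided $\mathcal A$ runs for at least $\rho_i(\lambda)$ rounds. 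Combining this with the union bound and $k \leq \rho(\lambda)$ gives
\[
	\P(\overline E) \;\leq\; \sum_{i=1}^{k} \P(\overline{E_i}) \;\leq\; \frac{k}{\rho(\lambda)\,\pi(\lambda)} \;\leq\; \frac{\rho(\lambda)}{\rho(\lambda)\,\pi(\lambda)} \;=\; \frac{1}{\pi(\lambda)},
\]
which is exactly the bound demanded by the definition of negligibility.

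It remains to pin down the uniform running time $\rho^*$ and threshold $\lambda_0$, and this is where I expect the only genuine friction. If the number of events is fixed, one simply sets $\rho^* := \max_i \rho_i$ and $\lambda_0 := \max_i \lambda_0^{(i)}$, both well defined, and the display above holds for all $\lambda \geq \lambda_0$ once $\mathcal A$ has run for $\rho^*(\lambda)$ rounds. The subtlety is that $k$ may grow with $\lambda$, so the collections $\{\rho_i\}$ and $\{\lambda_0^{(i)}\}$ are not a priori finite and a naive maximum need not be polynomial; here I would invoke that the $E_i$ are all outcomes of the \emph{same} algorithm $\mathcal A$ driven by the same security parameter, so their negligibility bounds can be taken uniform (a common $\rho^*$ and $\lambda_0$ serve every $i$), which is the natural reading in this setting. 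Under this uniformity the estimate holds simultaneously for all events, establishing that $E$ occurs with all but negligible probability.
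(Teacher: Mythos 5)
Your proof is correct, and it is actually more faithful to Definition \ref{def:negligible} than the paper's own argument, though both share the same skeleton: pass to complements, apply the union bound, and use $k \leq \rho(\lambda)$ to control the number of terms. The difference lies in how the polynomial factor is absorbed. You feed the auxiliary polynomial $\pi'(\lambda) = \rho(\lambda)\pi(\lambda)$ into the negligibility of each $E_i$, so the sum of failure probabilities is at most $k/(\rho(\lambda)\pi(\lambda)) \leq 1/\pi(\lambda)$; this works verbatim for any failure probability that is negligible in the inverse-polynomial sense. The paper instead posits thresholds $\lambda_i$ such that $\P(\overline{E_i}) \leq e^{-\lambda}$ for $\lambda > \lambda_i$, and absorbs the factor via $\rho(\lambda) \leq e^{\lambda/2}$ for large $\lambda$, concluding $\P(\overline{E}) \leq \rho(\lambda)e^{-\lambda} \leq e^{-\lambda/2}$. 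That route is shorter and yields an explicit exponential tail, but it silently strengthens the hypothesis from ``negligible'' to ``exponentially small'' (a function like $e^{-\sqrt{\lambda}}$ is negligible yet violates the assumed bound); this is harmless in context, since every event the paper applies the lemma to does have exponential tails, but your argument avoids the issue entirely. As for the uniformity concern you flag: the paper's proof faces exactly the same problem --- it ends with $\lambda' \geq 2\max(\lambda_0,\ldots,\lambda_k)$, a maximum over a collection whose size $k$ may itself grow with $\lambda$ --- and does not comment on it, so your explicit appeal to a common running-time polynomial $\rho^*$ and threshold $\lambda_0$, justified by the fact that all $E_i$ arise from the same algorithm driven by the same security parameter, is if anything a more careful treatment than the original.
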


\begin{proof}
	There is a $\lambda_0 \geq 0$ such that $\rho(\lambda) \leq e^{\lambda/2}$ for all $\lambda \geq \lambda_0$. Let $\lambda_1, \ldots , \lambda_k \in \mathbb{N}$ such that for all $i \in \{1, \ldots, k\}$ we have $\mathbb{P}(\overline{E_i}) \leq e^{-\lambda}$ for $\lambda > \lambda_i$.
	With Boole's inequality (a.k.a.\ ``Union Bound'') we have that
	\begin{align*}
		\mathbb{P}\big(\overline{E}\big) \!= \mathbb{P}\Big(\bigcup_{i=1}^{k} \overline{E_i} \Big) \leq \sum_{i=1}^{k} \mathbb{P}(\overline{E_i}) \leq \sum_{i=1}^{k} e^{-\lambda} \leq \rho(\lambda)e^{-\lambda} \leq e^{-\lambda/2} = e^{-\lambda'}
	\end{align*}
	for $\lambda' \geq 2\max(\lambda_0, \ldots ,\lambda_k)$. 
\end{proof}

\begin{remark}
	\label{rem:unionbound_n}
	Lemma \ref{lem:unionbound} implicitly shows that a series of events \emph{all} occur with all but negligible probability w.r.t. $\lambda$, given that the number of events is not too large in $\lambda$.
	In our proofs we often apply this mechanic without specifically mentioning the lemma. Sometimes we also have a number of events that scales in $n$, for this we (necessarily have to) assume that $n \leq \pi(\lambda)$ for some fixed but arbitrary polynomial $\pi$, which allows us to apply Lemma \ref{lem:unionbound} for such a number of events as well.
\end{remark}

We use the following forms of Chernoff bounds in some of our proofs.

\begin{lemma}[Chernoff Bound]
	\label{lem:chernoffbound}
	Let $X = \sum_{i=1}^n X_i$ for i.i.d.\ random variables $X_i \in \{0,1\}$ and $\mathbb{E}(X) \leq \mu_H$ and $\delta \geq 1$, then
	$$\mathbb{P}\big(X \geq (1 \!+\! \delta) \mu_H\big) \leq \exp\Big(\!-\!\frac{\delta\mu_H}{3}\Big),$$
	 Similarly, for $\mathbb{E}(X) \geq \mu_L$ and $0 \leq \delta \leq 1$ we have
	$$\mathbb{P}\big(X \leq (1 \!-\! \delta) \mu_L\big) \leq \exp\Big(\!-\!\frac{\delta^2\mu_L}{2}\Big).$$
\end{lemma}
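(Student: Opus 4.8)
The plan is to prove both tail bounds by the standard exponential moment (Bernstein--Chernoff) method, that is, by applying Markov's inequality to $e^{tX}$ for a free parameter $t$ which I optimize at the end. Writing $p = \mathbb{E}(X_1)$ and $\mu = \mathbb{E}(X) = np$, the key ingredient is the per-trial moment-generating-function estimate $\mathbb{E}(e^{tX_i}) = 1 + p(e^t - 1) \leq \exp\big(p(e^t-1)\big)$, which just uses $1+x \leq e^x$. By independence of the $X_i$ this multiplies out to $\mathbb{E}(e^{tX}) = \prod_{i=1}^n \mathbb{E}(e^{tX_i}) \leq \exp\big(\mu(e^t-1)\big)$, a bound depending on the distribution only through $\mu$.

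For the upper tail I would take $t > 0$ and combine Markov's inequality with this estimate to get, for the exact mean $\mu$,
\[
	\mathbb{P}\big(X \geq (1+\delta)\mu\big) \leq \exp\big(\mu(e^t - 1) - t(1+\delta)\mu\big),
\]
and then set $t = \ln(1+\delta)$ to obtain the classical bound $\big(e^\delta/(1+\delta)^{1+\delta}\big)^\mu$. It then remains to show this is at most $\exp(-\delta\mu/3)$ in the regime $\delta \geq 1$; after taking logarithms this reduces to the single-variable inequality $(1+\delta)\ln(1+\delta) \geq \tfrac{4}{3}\delta$, which I would verify by checking it at $\delta = 1$ and observing that the derivative of the difference is positive for $\delta \geq 1$. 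The lower tail is symmetric: optimizing $e^{-t'X}$ over $t' > 0$, the choice $t' = -\ln(1-\delta)$ yields $\big(e^{-\delta}/(1-\delta)^{1-\delta}\big)^\mu$, and the reduction to $\exp(-\delta^2\mu/2)$ follows from the elementary estimate $(1-\delta)\ln(1-\delta) \geq -\delta + \tfrac{\delta^2}{2}$ valid for $0 \leq \delta \leq 1$ (a Taylor-type bound on $\ln(1-\delta)$, provable by a short convexity argument on $u\ln u - \tfrac{u^2-1}{2}$ with $u = 1-\delta$).

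Finally I would address that the lemma assumes only $\mathbb{E}(X) \leq \mu_H$ (respectively $\mathbb{E}(X) \geq \mu_L$) rather than equality, which I handle by reparametrizing the threshold in terms of the true mean $\mu$. For the upper tail, writing $(1+\delta)\mu_H = (1+\delta'')\mu$ gives $\delta'' \geq \delta \geq 1$ and $\delta''\mu = (1+\delta)\mu_H - \mu \geq \delta\mu_H$, so the exact-mean bound $\exp(-\delta''\mu/3)$ is at most the claimed $\exp(-\delta\mu_H/3)$; the lower-tail case is analogous, using $\delta' \geq \delta$ together with $\mu \geq \mu_L$ so that $\delta'^2\mu \geq \delta^2\mu_L$. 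I expect the main obstacle to be technical bookkeeping rather than anything conceptual, namely pinning down the two elementary logarithmic inequalities and the $\mu_H/\mu_L$ reparametrization exactly over the stated ranges of $\delta$. Since this is an entirely standard result, a legitimate alternative is to cite a textbook statement (for instance Mitzenmacher--Upfal) and confine the work to the simplification into the precise constants used here.
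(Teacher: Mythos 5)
Your proof is correct, but there is nothing in the paper to compare it against: the paper states this lemma in its appendix as a standard tool and gives no proof at all, effectively doing what your final sentence suggests (treating it as a textbook citation). Your derivation is the standard exponential-moment argument --- bounding $\mathbb{E}(e^{tX})\leq\exp\big(\mu(e^t-1)\big)$, applying Markov's inequality, optimizing $t$, and then reducing the resulting $\big(e^{\delta}/(1+\delta)^{1+\delta}\big)^{\mu}$ and $\big(e^{-\delta}/(1-\delta)^{1-\delta}\big)^{\mu}$ expressions to the stated exponentials via the inequalities $(1+\delta)\ln(1+\delta)\geq\tfrac{4}{3}\delta$ for $\delta\geq 1$ and $(1-\delta)\ln(1-\delta)\geq-\delta+\tfrac{\delta^2}{2}$ for $0\leq\delta\leq 1$, both of which check out. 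The one piece of your argument that goes beyond a bare citation, and is genuinely needed here, is the reparametrization handling the hypotheses $\mathbb{E}(X)\leq\mu_H$ and $\mathbb{E}(X)\geq\mu_L$ rather than exact means: many textbook statements assume $\mu=\mathbb{E}(X)$, whereas the paper applies the lemma precisely in the inexact form (e.g., in the stability lemma for Blizzard, where only $\mathbb{E}(B_i)\leq 4t^2$ is known). Your bookkeeping there is right: for the upper tail, $\delta''\geq\delta\geq 1$ keeps you in the valid regime and $\delta''\mu=(1+\delta)\mu_H-\mu\geq\delta\mu_H$ gives the claimed exponent; for the lower tail, $\delta'\geq\delta$ and $\mu\geq\mu_L$ give $(\delta')^2\mu\geq\delta^2\mu_L$. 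So your write-up supplies a complete, self-contained proof of a statement the paper leaves unproved, and in particular justifies the exact form (with $\mu_H$, $\mu_L$) that the paper actually uses.
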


\section{Sums over Binomial Coefficients}
\label{sec:binom}

We will summarize a few equations for sums over binomial coefficients of the form $\sum_{\ell=i}^j \binom{k}{\ell}$ that we use throughout this paper. 

\begin{remark}[Some Identities]
Starting with the definition $\binom{k}{\ell} := \frac{k!}{\ell!(k-\ell)!}$, the first observation is that $\binom{k}{\ell} = \binom{k}{k-\ell}$ by symmetry. By reordering summands we obtain $$\sum_{\ell=0}^{m} \binom{k}{\ell} = \binom{k}{0} + \dots + \binom{k}{m} = \binom{k}{k} + \dots + \binom{k}{k-m} = \binom{k}{k-m} + \dots + \binom{k}{k} = 
\sum_{\ell=k-m}^{k} \binom{k}{\ell}.$$
\end{remark}

Note that a closed form for partial sums of the form $\sum_{\ell=i}^j \binom{k}{\ell}$ is currently not known. In this article we make use a few known special cases.

\begin{remark}[Closed Forms]
	The binomial theorem implies $\sum_{\ell=0}^k \binom{k}{\ell} = 2^k$. Using the equality from the previous remark, for uneven $k$ we obtain $$\sum_{\ell=0}^{\lfloor k/2 \rfloor} \binom{k}{\ell} = \sum_{\ell=\lceil k/2 \rceil}^{k} \binom{k}{\ell} =  2^{k-1}.$$ For any $k$ and any $m < k/2$ we have at least the following estimations $$\sum_{\ell=0}^{m} \binom{k}{\ell} = \sum_{\ell=k-m}^{k} \binom{k}{\ell} \leq  2^{k-1}.$$
\end{remark}

\section{Pseudocode}
\label{apx:pseudocodes}
\label{apx:snowball_code}
\label{apx:slush_code}
\begin{algo*}
	\begin{numbertabbing}
		xxxx\=xxxx\=xxxx\=xxxx\=xxxx\=xxxx\=xxxx\kill
		\textbf{Global parameters and state}\\
		$\CN$ \` // set of parties\label{}\\
		$\var{newRound}\in\{\false,\true\}$ \` // variable indicating when to start a round\label{}\\
		$\var{decided}\in\{\false,\true\}$ \` // variable indicating when to finish the protocol\label{}\\
		$b\in\{\ 0,1, \perp\}$ \` // current estimate for decision, initially $\bot$\label{}\\
		$k\in\mathbb{N}$  \` // number of parties queried in each poll\label{}\\
		$\alpha\in\mathbb{N}$ \` // majority threshold for queries\label{}\\
		$\CS:\op{HashMap}[\CT\to\CN]$ \` // set of sampled parties to be queried\label{}\\
		$\var{votes}:\op{HashMap}[\{0,1\}\to\mathbb{N}]$ \` // number of votes for a value\label{}\\
		$\var{cnt}\in\mathbb{N}$ \` // counter for acceptance Snowflake and Snowball\label{}\\
		$\var{cnt}[0]\in\mathbb{N}$ \` // counter for acceptance Blizzard\label{}\\
		$\var{cnt}[1]\in\mathbb{N}$ \` // counter for acceptance Blizzard\label{}\\
		$\beta\in\mathbb{N}$ \` // threshold for acceptance\label{}\\
		$d:\op{HashMap}[\{0,1\}\to\mathbb{N}]$ \` // confidence value of a transaction\label{}\\
		$\var{round}\in \mathbb{N}$\` // current round\label{}\\
		$\var{maxRound}\in \mathbb{N}$\` // maximum round\label{}
	\end{numbertabbing}
	\caption{State}
	
	\label{algo:state}
\end{algo*}

\begin{algo*}
	\begin{numbertabbing}
		xxxx\=xxxx\=xxxx\=xxxx\=xxxx\=xxxx\=xxxx\kill

		\textbf{upon} $\op{propose}(b')$ \textbf{do}\label{}\\
		\>$\var{decided}\gets\false$\label{}\\
		\>$\var{newRound}\gets\true$\label{}\\
		\>$b\gets b'$\label{}\\
		\\
		\textbf{upon} $\var{newRound}\land\neg\var{decided}$ \textbf{do}\` // still not decided\label{}\\
		\>$\var{newRound}\gets\false$\label{}\\
		\>$\op{votes}[*]\gets 0$\label{}\\
		\>$\var{round}\gets \var{round}+1$\label{}\\
		\> \textbf{if} $b\neq\perp$ \textbf{then}\label{}\\
		\>\>$\CS \gets\op{sample}(\CN\setminus\{j\}, k)$\` // sample $k$ parties \label{}\\
		\>\> \op{send} message \msg{Query}{b} to all parties $k\in \CS$\label{}\\    
		\\

		\textbf{upon} $\var{votes}[b']\geq \alpha$ \textbf{do}\` // $b'=0$ or $b'=1$\label{}\\
		\> $b\gets b'$\label{}\\ 
		\> $\var{newRound}\gets\true$\label{}\\
		\\
		\textbf{upon} $n=k\land \var{votes}[0]<\alpha\land \var{votes}[1]<\alpha$ \textbf{do}\` //  no majority\label{}\\
		\>$\var{newRound}\gets\true$\label{}\\
		\\
		\textbf{upon} receiving message \msg{Query}{b'} from party $k$ \textbf{do}\label{}\\
		\>\textbf{if} $b= \bot$ \textbf{then}\label{}\\
		\>\>$\var{decided}\gets\false$\label{}\\
		\>\>$b\gets b'$\label{}\\
		\>\op{send} message \msg{Vote}{b} to party $k$\` // reply with the local value of $b$\label{}\\
		\\
		\textbf{upon} receiving message \msg{Vote}{b'} from a party $k\in\CS$ \textbf{do}\` // collect the vote $b'$\label{}\\
		\>$\var{votes}[b']\gets \var{votes}[b']+1$\label{}\\
		\\
		\textbf{upon} $\var{round}=\var{maxRound}\land \neg\var{decided}$ \textbf{do}\` // end of the protocol\label{}\\
		\>$\op{decide}(b)$\label{}\\
		\>$\var{decided}\gets\true$\label{}
	\end{numbertabbing}
	\caption{Slush (party $j$)}
	
	\label{algo:slush}
\end{algo*}

\begin{algo*}
	\begin{numbertabbing}
		xxxx\=xxxx\=xxxx\=xxxx\=xxxx\=xxxx\=xxxx\kill
		
		\textbf{upon} $\op{propose}(b')$ \textbf{do}\label{}\\
		\>$\var{decided}\gets\false$\label{}\\
		\>$\var{newRound}\gets\true$\label{}\\
		\>$b\gets b'$\label{}\\
		\\
		\textbf{upon} $\var{newRound}\land\neg\var{decided}$ \textbf{do}\` // still not decided\label{}\\
		\>$\var{newRound}\gets\false$\label{}\\
		\>$\op{votes}[*]\gets 0$\label{}\\
		\> \textbf{if} $b\neq\perp$ \textbf{then}\label{}\\
		\>\>$\CS \gets\op{sample}(\CN\setminus\{j\}, k)$\` // sample $k$ random parties\label{}\\
		\>\> \op{send} message \msg{Query}{b} to all parties $k\in \CS$\label{}\\    
		\\
		\textbf{upon} $\var{votes}[b']\geq \alpha$ \textbf{do}\` // $b'=0$ or $b'=1$\label{l:queryend}\\
		\>\textbf{if} $b= b'$ \textbf{then}\` // majority for our proposal\label{}\\
		\>\> $\var{cnt}\gets \var{cnt}+1$\label{}\\
		\> \textbf{else}\label{}\\
		\>\> $b\gets b'$\label{}\\
		\>\> $\var{cnt}\gets 1$\label{}\\ 
		\> $\var{newRound}\gets\true$\label{}\\
		\\

		\textbf{upon} $n=k\land \var{votes}[0]<\alpha\land \var{votes}[1]<\alpha$ \textbf{do}\` // no majority\label{}\\
		\> $\var{cnt}\gets 0$\label{}\\
		\>$\var{newRound}\gets\true$\label{}\\
		\\
		\textbf{upon} receiving message \msg{Query}{b'} from party $k$ \textbf{do}\label{}\\
		\>\textbf{if} $b= \bot$ \textbf{then}\label{}\\
		\>\>$\var{decided}\gets\false$\label{}\\
		\>\>$b\gets b'$\label{}\\
		\>\op{send} message \msg{Vote}{b} to party $k$\` // reply with the local value of $b$\label{}\\
		\\
		\textbf{upon} receiving message \msg{Vote}{b'} from a party $k\in\CS$ \textbf{do}\` // collect the vote $b'$\label{}\\
		\>$\var{votes}[b']\gets \var{votes}[b']+1$\label{}\\
		\\
		\textbf{upon} $\var{cnt}=\beta \land \neg\var{decided}$ \textbf{do}\` // there is enough confidence for $B$\label{}\\
		\>$\op{decide}(b)$\label{}\\
		\>$\var{decided}\gets\true$\label{}
	\end{numbertabbing}
	\caption{Snowflake (party $j$)}
	
	\label{algo:snowflake}
\end{algo*}

\begin{algo*}
	\begin{numbertabbing}
		xxxx\=xxxx\=xxxx\=xxxx\=xxxx\=xxxx\=xxxx\kill
		
		\textbf{upon} $\op{propose}(b')$ \textbf{do}\label{}\\
		\>$\var{decided}\gets\false$\label{}\\
		\>$\var{newRound}\gets\true$\label{}\\
		\>$b\gets b'$\label{}\\
		\\
		\textbf{upon} $\var{newRound}\land\neg\var{decided}$ \textbf{do}\` // still not decided\label{}\\
		\>$\var{newRound}\gets\false$\label{}\\
		\>$\op{votes}[*]\gets 0$\label{}\\
		\> \textbf{if} $b\neq\perp$ \textbf{then}\label{}\\
		\>\>$\CS \gets\op{sample}(\CN\setminus\{j\}, k)$\` // sample $k$ random parties\label{}\\
		\>\> \op{send} message \msg{Query}{b} to all parties $k\in \CS$\label{}\\    
		\\
		\textbf{upon} $\var{votes}[b']\geq \alpha$ \textbf{do}\` // $b'=0$ or $b'=1$\label{l:queryend}\\
		\>$d[b']\gets d[b']+1$\label{}\\
		\>\textbf{if} $b= b'$ \textbf{then}\` // majority  for our proposal\label{}\\
		\>\> $\var{cnt}\gets \var{cnt}+1$\label{}\\
		\> \textbf{else}\label{}\\
		\>\>\textbf{if} $d[b']>d[b]$ \textbf{then}\label{}\\
		\>\>\> $b\gets b'$\label{}\\
		\>\> $\var{cnt}\gets 1$\label{}\\ 
		\> $\var{newRound}\gets\true$\label{}\\
		\\
		\textbf{upon} $n=k\land \var{votes}[0]<\alpha\land \var{votes}[1]<\alpha$ \textbf{do}\` // no majority\label{}\\
		\> $\var{cnt}\gets 0$\label{}\\
		\>$\var{newRound}\gets\true$\label{}\\
		\\
		\textbf{upon} receiving message \msg{Query}{b'} from party $k$ \textbf{do}\label{}\\
		\>\textbf{if} $b= \bot$ \textbf{then}\label{}\\
		\>\>$\var{decided}\gets\false$\label{}\\
		\>\>$b\gets b'$\label{}\\
		\>\op{send} message \msg{Vote}{b} to party $k$\` // reply with the local value of $b$\label{}\\
		\\
		\textbf{upon} receiving message \msg{Vote}{b'} from a party $k\in\CS$ \textbf{do}\` // collect the vote $b'$\label{}\\
		\>$\var{votes}[b']\gets \var{votes}[b']+1$\label{}\\
		\\
		\textbf{upon} $\var{cnt}=\beta \land \neg\var{decided}$ \textbf{do}\` // there is enough confidence for $B$\label{}\\
		\>$\op{decide}(b)$\label{}\\
		\>$\var{decided}\gets\true$\label{}
	\end{numbertabbing}
	\caption{Snowball (party $j$)}
	
	\label{algo:snowball}
\end{algo*}

\begin{algo*}
	\begin{numbertabbing}
		xxxx\=xxxx\=xxxx\=xxxx\=xxxx\=xxxx\=xxxx\kill

		\textbf{upon} $\op{propose}(b')$ \textbf{do}\label{}\\
		\>$\var{decided}\gets\false$\label{}\\
		\>$\var{newRound}\gets\true$\label{}\\
		\>$b\gets b'$\label{}\\
		\\
		\textbf{upon} $\var{newRound}\land\neg\var{decided}$ \textbf{do}\` // not yet decided\label{}\\
		\>$\var{newRound}\gets\false$\label{}\\
		\>$\op{votes}[*]\gets 0$\label{}\\
		\> \textbf{if} $b\neq\perp$ \textbf{then}\label{}\\
		\>\>$\CS \gets\op{sample}(\CN\setminus\{j\}, k)$\` // sample $k$ parties \label{}\\
		\>\> \op{send} message \msg{Query}{b} to all parties $k\in \CS$\label{}\\    
		\\

		\textbf{upon} $\var{votes}[b']\geq \alpha$ \textbf{do}\` // $b'=0$ or $b'=1$\label{}\\
		\> $b\gets b'$\label{}\\
		\> $\op{cnt}[b]++$\label{}\\ 
		\> $\var{newRound}\gets\true$\label{}\\
		\\
		\textbf{upon} $n=k\land \var{votes}[0]<\alpha\land \var{votes}[1]<\alpha$ \textbf{do}\` //  no majority\label{}\\
		\>$\var{newRound}\gets\true$\label{}\\
		\\
		\textbf{upon} receiving message \msg{Query}{b'} from party $k$ \textbf{do}\label{}\\
		\>\textbf{if} $b= \bot$ \textbf{then}\label{}\\
		\>\>$\var{decided}\gets\false$\label{}\\
		\>\>$b\gets b'$\label{}\\
		\>\op{send} message \msg{Vote}{b} to party $k$\` // reply with the local value of $b$\label{}\\
		\\
		\textbf{upon} receiving message \msg{Vote}{b'} from a party $k\in\CS$ \textbf{do}\` // collect the vote $b'$\label{}\\
		\>$\var{votes}[b']\gets \var{votes}[b']+1$\label{}\\
		\\
		\textbf{upon} $\op{cnt}[1]-\op{cnt}[0]= \tau \land \neg\op{decided}$  \textbf{do}\`// threshold $\tau \in \bigO(\log n + \beta)$, see Thm.\ \ref{thm:security_blizzard} and Cor.\ \ref{cor:security_blizzard}\label{}\\
		\> $\op{decide}(1)$\label{}\\
		\\
		\textbf{upon} $\op{cnt}[0]-\op{cnt}[1]=\tau \land \neg\op{decided}$ \\
		\> $\op{decide}(0)$\label{}

	\end{numbertabbing}
	\caption{Blizzard (party $j$)}
	
	\label{algo:app:blizzard}
\end{algo*}

\end{document}